\documentclass[11pt]{article}

\usepackage{amsmath,amssymb,amsfonts,amsthm}
\usepackage{moreverb,rotating,graphics}
\usepackage[T1]{fontenc}
\usepackage[latin1]{inputenc}
\usepackage{epsfig}
\usepackage[francais,english]{babel} 
\newtheorem{theorem}{Theorem}

\newtheorem{remark}{Remark}
\newtheorem{lemma}{Lemma}

\def\tr{{\rm Tr \;}}
\def\div{{\rm div \;}}
\def\NN{\mathbb{N}}
\def\ZZ{\mathbb{Z}}
\def\RR{\mathbb{R}}

\newcommand \dps{\displaystyle }

\newcommand \br {\bold r}
\newcommand{\gS}{\mathfrak{S}}

\newcommand \bR {{\bf R}}
\newcommand \bx {{\bf x}}

\newcommand \Rspin {\RR^3_\Sigma}

\newcommand \spinup {|\!\!\uparrow\rangle}
\newcommand \spindown {|\!\!\downarrow\rangle}

\hoffset=-1.5truecm
\voffset=-2truecm
\textwidth=15truecm
\textheight=24truecm

\title{On Kohn-Sham models with LDA and GGA
  exchange-correlation functionals} 
\author{Arnaud Anantharaman and Eric Canc\`es \\ \\
       \footnotesize{CERMICS, Ecole des Ponts and Université Paris Est} \\
       \footnotesize{6 \& 8 avenue Blaise Pascal, 77455 Marne-la-Vallée
         Cedex 2, France} \\
       \footnotesize{and} \\
       \footnotesize{INRIA Rocquencourt, Micmac Project Team} \\
       \footnotesize{Domaine de Voluceau,  B.P. 105,
 78153 Le Chesnay Cedex, France} \\ 
          }

\begin{document}

\selectlanguage{english}

\maketitle

\begin{abstract}
This article is concerned with the mathematical analysis of the
Kohn-Sham and extended Kohn-Sham models, in the local density
approximation (LDA) and generalized gradient approximation (GGA)
frameworks. After recalling the mathematical derivation of the Kohn-Sham
and extended Kohn-Sham LDA and GGA models from the Schrödinger equation,
we prove that the extended Kohn-Sham LDA model has a solution for
neutral and positively charged systems. We then prove a similar result
for the spin-unpolarized Kohn-Sham GGA model for two-electron systems,
by means of a concentration-compactness argument. 
\end{abstract}



\section{Introduction}

\noindent
Density Functional Theory (DFT) is a powerful, widely used method
for computing approximations of ground state electronic energies and
densities in chemistry, materials science, biology and
nanosciences.

According to DFT~\cite{HK,Lieb}, the electronic ground state energy
and density of a given molecular system can be obtained by solving a
minimization problem of the form
$$
\inf \left\{ F(\rho) + \int_{\RR^3} \rho V, \; \rho \ge 0, \;
  \sqrt{\rho} \in H^1(\RR^3), \; \int_{\RR^3} \rho = N \right\}
$$
where $N$ is the number of electrons in the system, $V$ the
electrostatic potential generated by the nuclei, and $F$ some
functional of the electronic density $\rho$, the functional $F$ being
universal, in the sense that it does not 
depend on the molecular system under consideration. Unfortunately, no
tractable expression for $F$ is known,
which could be used in numerical simulations.

The groundbreaking contribution which turned DFT into a useful tool
to perform calculations, is due to Kohn and Sham~\cite{KS}, who
introduced the local density approximation (LDA) to 
DFT. The resulting Kohn-Sham LDA model is still commonly used, in
particular in solid
state physics. Improvements of this model have
then been proposed by many authors, giving rise to Kohn-Sham GGA models
\cite{GGA1,GGA2,GGA3,PBE}, GGA being the abbreviation of Generalized Gradient
Approximation. While there is basically a unique Kohn-Sham LDA model,
there are several Kohn-Sham GGA models, corresponding to different
approximations of the so-called exchange-correlation functional. A given
GGA model will be known to perform well for some 
classes of molecular system, and poorly for some other classes. In some
cases, the best result will be obtained with LDA. It is to be noticed
that each Kohn-Sham model exists in two versions: the standard version,
with integer occupation numbers, and the extended version with
``fractional'' occupation numbers. As explained below, the former one
originates from Levy-Lieb's (pure state) contruction of the density
functional, while the latter is derived from Lieb's (mixed state)
construction.  

To our knowledge, there are very few results on Kohn-Sham LDA and GGA
models in the mathematical literature. In fact, we are only aware of a
proof of existence of a minimizer for the standard Kohn-Sham LDA model
by Le Bris~\cite{LeBris}. In this contribution, we prove the existence
of a minimizer for the extended Kohn-Sham LDA model, as well as for the
two-electron standard and extended Kohn-Sham GGA models, 
under some conditions on the GGA exchange-correlation functional. 

Our article is organized as follows. First, we
provide a detailed 
presentation of the various Kohn-Sham models, which, despite 
their importance in physics and chemistry~\cite{pubDFT}, are not very well
known in the mathematical community. The mathematical foundations of
DFT are recalled in section~\ref{sec:DFT}, and the derivation of the
(standard and extended) Kohn-Sham LDA and GGA models is discussed in
section~\ref{sec:KS}. We state our main results in
section~\ref{sec:main}, and postpone the proofs until
section~\ref{sec:proofs}. 

We restrict our mathematical analysis to closed-shell,
spin-unpolarized
models. All our results related to the LDA setting can be easily
extended to open-shell, 
spin-polarized models (i.e. to the local spin-density approximation
LSDA). Likewise, we only deal with all 
electron descriptions, but valence 
electron models with usual pseudo-potential approximations (norm
conserving~\cite{TM}, ultrasoft~\cite{ultrasoft}, PAW~\cite{PAW}) can
be dealt with in a similar way.

\section{Mathematical foundations of DFT}
\label{sec:DFT}

As mentioned previously, DFT aims at calculating electronic ground
state energies and densities. Recall that the ground state
electronic energy of a molecular system composed of $M$ nuclei of
charges $z_1$, ..., $z_M$ ($z_k \in \NN \setminus \left\{0\right\}$ in
atomic units) and $N$ electrons is the bottom of the spectrum
of the electronic hamiltonian 
\begin{equation} \label{eq:elec_ham}
H_N = - \frac 12 \sum_{i=1}^N \Delta_{\br_i} - \sum_{i=1}^N \sum_{k=1}^M
\frac{z_k}{|\br_i-\bR_k|} + \sum_{1 \le i < j \le N} \frac{1}{|\br_i-\br_j|}
\end{equation}
where $\br_i$ and $\bR_k$ are the positions in $\RR^3$ of the
$i^{\rm th}$ electron and the $k^{\rm th}$ nucleus respectively. The hamiltonian
$H_N$ acts on electronic wavefunctions
$\Psi(\br_1,\sigma_1;\cdots;\br_N,\sigma_N)$, $\sigma_i \in \Sigma :=
\left\{\spinup,\spindown\right\}$ denoting the spin
variable of the $i^{\rm th}$ electron, the nuclear coordinates
$\left\{\bR_k\right\}_{1 \le k \le M}$ playing the role of
parameters. It is convenient to denote by $\RR^3_\Sigma :=
\RR^3 \times \left\{\spinup,\spindown\right\}$ and $\bx_i
:= (\br_i,\sigma_i)$.  As electrons are fermions,
electonic wavefunctions are antisymmetric with respect to the
renumbering of electrons, i.e. 
$$
\Psi(\bx_{p(1)},\cdots,\bx_{p(N)}) = \epsilon(p)
\Psi(\bx_1,\cdots,\bx_N) 
$$
where $\epsilon(p)$ is the signature of the permutation $p$. Note that
(in the absence of magnetic fields) $H_N\Psi$ is real-valued if $\Psi$
is real-valued. Our purpose being the calculation of the bottom of the
spectrum of $H_N$, there is therefore no restriction in considering
real-valued wavefunctions only. In other words, $H_N$ can be considered
here as an operator on the real Hilbert space 
$$
{\cal H}_N = \bigwedge_{i=1}^N L^2(\RR^3_\Sigma),
$$
endowed with the inner product
$$
\langle \Psi | \Psi' \rangle_{{\cal H}_N} = \int_{(\RR^3_\Sigma)^N}
\Psi(\bx_1,\cdots,\bx_N) \, \Psi'(\bx_1,\cdots,\bx_N) \, d\bx_1
\cdots d\bx_N, 
$$
where 
$$
\int_{\RR^3_\Sigma} f(\bx) \, d\bx := \sum_{\sigma \in \Sigma} \int_{\RR^3}
f(\br,\sigma) \, d\br,
$$
and the corresponding norm $\| \cdot \|_{{\cal H}_N} = \langle \cdot |
\cdot \rangle_{{\cal H}_N}^{\frac 12}$.
It is well-known that $H_N$ is a self-adjoint operator on ${\cal H}_N$
with form domain 
$$
{\cal Q}_N = \bigwedge_{i=1}^N H^1(\RR^3_\Sigma).
$$
Denoting by
$$
Z = \sum_{k=1}^M z_k
$$
the total nuclear charge of the system, it results from Zhislin's theorem
that for neutral or positively charged systems ($Z \ge N$), $H_N$ has an
infinite number of negative eigenvalues below the bottom of its
essential spectrum. In particular, the electronic ground state energy
$\lambda_1(H_N)$ is an eigenvalue of $H_N$, and more precisely the
lowest one. 

In any case, i.e. whatever $Z$ and $N$, we always have
\begin{equation} \label{eq:GS}
\lambda_1(H_N) = \inf \left\{ \langle \Psi|H_N|\Psi\rangle, \; \Psi \in
  {\cal Q}_N, \; \|\Psi\|_{{\cal H}_N} =1 \right\}.
\end{equation}
Note that it also holds
\begin{equation} \label{eq:EGS}
\lambda_1(H_N) = \inf \left\{ \tr(H_N\Gamma), \; \Gamma \in {\cal
    S}({\cal H}_N), \; \mbox{Ran}(\Gamma) \subset {\cal Q}_N, \;
 0 \le \Gamma \le 1,  \; \tr(\Gamma)=1 \right\}.
\end{equation}
In the above expression, ${\cal S}({\cal H}_N)$ is the vector space of
bounded self-adjoint operators on ${\cal H}_N$, and the condition
$0 \le \Gamma \le 1$ stands for 
$0 \le \langle \Psi |\Gamma|\Psi \rangle \le \|\Psi\|_{{\cal H}_N}^2$
for all $\Psi \in 
{\cal H}_N$. Note that if $H$ is a bounded-from-below self-adjoint
operator on some Hilbert space ${\cal H}$, with form domain ${\cal Q}$, and
if $D$ is a positive trace-class self-adjoint operator on ${\cal H}$,
$\tr(HD)$ can always be defined in $\RR_+ \cup \left\{+\infty\right\}$
as {$ \tr(HD) = \tr((H-a)^{\frac 12}D(H-a)^{\frac 12})+a\tr(D)$} where $a$ is any real
number such that $H \ge a$.

 From a physical viewpoint, (\ref{eq:GS}) and (\ref{eq:EGS})
mean that the ground state energy can be computed either by minimizing
over pure states (characterized by wavefunctions $\Psi$) or by
minimizing over mixed states (characterized by density operators
$\Gamma$).

With any $N$-electron wavefunction $\Psi \in {\cal H}_N$ such that
$\|\Psi\|_{{\cal H}_N} =1$ can be associated the electronic density
$$
\rho_\Psi(\br) = N \; \sum_{\sigma \in \Sigma} \int_{(\RR^3_\Sigma)^{N-1}}
|\Psi(\br,\sigma;\bx_2,\cdots;\bx_N)|^2 \, d\bx_2 \cdots
d\bx_N.  
$$
Likewise, one can associate with any $N$-electron density operator
$\Gamma\in {\cal S}({\cal H}_N)$ such that $0 \le \Gamma \le 1$ and
$\tr(\Gamma)=1$, the electronic density
$$
\rho_\Gamma(\br) = N \; \sum_{\sigma \in \Sigma} \int_{(\RR^3_\Sigma)^{N-1}}
\Gamma(\br,\sigma;\bx_2,\cdots,\bx_N ; 
\br,\sigma;\bx_2,\cdots;\bx_N) \, d\bx_2\cdots
d\bx_N
$$
(here and below, we use the same notation for an operator and
its Green kernel).

Let us denote by 
$$
V(\br) = -\sum_{k=1}^M \frac{z_k}{|\br-\bR_k|}
$$
the electrostatic potential generated by the nuclei, and by
\begin{equation} \label{eq:interacting}
H_N^1 = 
- \frac 12 \sum_{i=1}^N \Delta_{\br_i} + \sum_{1 \le i < j \le N}
\frac{1}{|\br_i-\br_j|}. 
\end{equation}
It is easy to see that
$$
\langle \Psi|H_N|\Psi \rangle = \langle \Psi|H_N^1|\Psi \rangle +
\int_{\RR^3} \rho_\Psi V \quad \mbox{and} \quad
\tr(H_N\Gamma) = \tr(H_N^1\Gamma) + \int_{\RR^3} \rho_\Gamma V. 
$$
Besides, it can be checked that
\begin{eqnarray*}
{\cal R}_N & = & \left\{\rho \; | \; \exists \Psi \in {\cal Q}_N, \,
  \|\Psi\|_{{\cal H}_N}=1, \; \rho_\Psi=\rho \right\} \\
& = & \left\{\rho \; | \; \exists \Gamma \in {\cal S}({\cal H}_N), \,
\mbox{Ran}(\Gamma) \subset {\cal Q}_N, \;
 0 \le \Gamma \le 1,  \; \tr(\Gamma)=1, \; \rho_\Gamma=\rho
\right\} \\
& = & \left\{ \rho \ge 0 \; | \; \sqrt{\rho} \in H^1(\RR^3), \;
  \int_{\RR^3} \rho = N \right\}. 
\end{eqnarray*}
It therefore follows that
\begin{eqnarray} \label{eq:DFT_LL}
I_N & = & \inf \left\{ F_{\rm LL}(\rho) + \int_{\RR^3} \rho V, \; \rho \in {\cal
    R}_N \right\} \\
    & = & \inf \left\{ F_{\rm L}(\rho) + \int_{\RR^3} \rho V, \; \rho \in {\cal
    R}_N \right\} \label{eq:DFT_L}
\end{eqnarray}
where Levy-Lieb's and Lieb's density functionals \cite{Levy,Lieb} are
respectively defined by
\begin{eqnarray}
F_{\rm LL}(\rho) & = & \inf \left\{ \langle \Psi|H_N^1|\Psi\rangle, \; 
\Psi   \in {\cal Q}_N, \; \|\Psi\|_{{\cal H}_N}=1, \; \rho_\Psi = \rho \right\}
\label{eq:FLL} \\
F_{\rm L}(\rho) & = & \inf \big\{ \tr(H_N^1\Gamma), \; 
  \Gamma \in {\cal S}({\cal H}_N),
  \; \mbox{Ran}(\Gamma) \subset {\cal Q}_N, \nonumber \\
& & \qquad\qquad\qquad\quad
 0 \le \Gamma \le 1,  \; \tr(\Gamma)=1, \;  \rho_\Gamma = \rho
 \big\}. \label{eq:FL} 
\end{eqnarray}
Note that the functionals $F_{\rm LL}$ and $F_{\rm L}$ are independent of
the nuclear potential $V$, i.e. they do not depend on the molecular
system. They are therefore universal functionals of the density. 
It is also shown in \cite{Lieb} that $F_{\rm L}$ is the Legendre transform
of the function $V \mapsto I_N$. More precisely, expliciting the
dependency of $I_N$ on $V$, it holds
$$
F_{\rm L}(\rho) = \sup \left\{ I_N(V)-\int_{\RR^3} \rho V, \; V \in
  L^{\frac 32}(\RR^3) + L^\infty(\RR^3) \right\},
$$
from which it follows in particular that $F_{L}$ is convex on the convex
set ${\cal R}_N$ (and can be extended to a convex functional on
$L^1(\RR^3) \cap L^3(\RR^3)$).

Formulae (\ref{eq:DFT_LL}) and (\ref{eq:DFT_L}) show that, in
principle, it is possible
to compute the electronic ground state energy (and the corresponding
groud state density if it exists) by solving a minimization problem 
on ${\cal R}_N$. At this stage no approximation has been
made. But, as neither $F_{\rm LL}$ nor $F_{\rm L}$ can be easily
evaluated for the real 
system of interest ($N$ interacting electrons), approximations are needed
to make of the density functional theory a practical tool for computing
electronic ground states. Approximations rely on exact, or
very accurate, evaluations of the density functional for reference systems
``close'' to the real system:
\begin{itemize}
\item in Thomas-Fermi and related models, the reference system is an
  homogeneous electron gas;
\item in Kohn-Sham models (by far the most commonly used), it
  is a system of $N$ {\em non-interacting} electrons. 
\end{itemize}

\section{Kohn-Sham  models}
\label{sec:KS}

For a system of $N$ non-interacting electrons, universal density
functionals are obtained as 
explained in the previous section; it suffices to replace the  
interacting hamiltonian $H_N^1$ of the physical system
(formula~(\ref{eq:interacting})) with the hamiltonian of the reference system
\begin{equation} \label{eq:H0}
H_N^0 = - \sum_{i=1}^N \frac 1 2 \Delta_{\br_i}.
\end{equation}
The analogue of the Levy-Lieb density functional (\ref{eq:FLL}) then is the 
Kohn-Sham type kinetic energy functional 
\begin{equation}
  \label{eq:KS}
  \widetilde T_{\rm KS}(\rho) =\inf \left\{ \langle \Psi |H_N^0 | \Psi
    \rangle, \; 
   \Psi \in {\cal Q}_N, \; \|\Psi\|_{{\cal H}_N} = 1, \; \rho_\Psi=\rho
\right\}, 
\end{equation}
while the analogue of the Lieb functional (\ref{eq:FL}) is the Janak
kinetic energy functional  
$$
T_{\rm J}(\rho) =\inf \left\{ \tr(H_N^0 \Gamma), \; \Gamma \in {\cal
    S}({\cal H}_N), 
  \; \mbox{Ran}(\Gamma) \subset {\cal Q}_N, \;
 0 \le \Gamma \le 1,  \; \tr(\Gamma)=1, \; \rho_{\Gamma} = \rho
\right\}.  
$$
Let $\Gamma$ be in the above minimization set. The energy
$\tr(H_N^0\Gamma)$ can be rewritten as a function of the one-electron
reduced density operator $\Upsilon_\Gamma$ associated with
$\Gamma$. Recall that $\Upsilon_\Gamma$ is the self-adjoint operator on
$L^2(\RR^3_\Sigma)$ with kernel 
$$
\Upsilon_\Gamma(\bx,\bx') = N \; \int_{(\RR^3_\Sigma)^{N-1}}
\Gamma(\bx,\bx_2, \cdots, \bx_N;\bx',\bx_2, \cdots, \bx_N) \, d\bx_2
\cdots d\bx_N.
$$
Indeed, a simple calculation yields $\tr(H_N^0\Gamma) = \tr(-\frac 1 2
\Delta_\br \Upsilon_\Gamma)$, where $\Delta_\br$ is the Laplace operator on
$L^2(\RR^3_\Sigma)$ - acting on the space coordinate $\br$. Besides, it
is known (see e.g. \cite{Davidson}) that
\begin{eqnarray}
& & \left\{\Upsilon \; | \; \exists \Gamma \in {\cal
    S}({\cal H}_N), 
  \; \mbox{Ran}(\Gamma) \subset {\cal Q}_N, \;
 0 \le \Gamma \le 1,  \; \tr(\Gamma)=1, \; 
\Upsilon_{\Gamma} = \Upsilon, \; \rho_\Gamma=\rho 
\right\}  \nonumber \\
&  & = \left\{ \Upsilon \in  {\cal S}(L^2(\RR^3_\Sigma)), \; 0 \le
  \Upsilon \le 1, \; \mbox{Ran}(\Upsilon) \subset H^1(\RR^3_\Sigma), \, 
\tr(\Upsilon) = N, \; \rho_\Upsilon = \rho \right\}, \label{eq:Nrep} 
\end{eqnarray}
where 
$$
\rho_\Upsilon(\br) := \sum_{\sigma \in \Sigma} \Upsilon(\br,\sigma;\br,\sigma).
$$
Hence,
\begin{eqnarray}
T_{\rm J}(\rho) & = & \inf \bigg\{ \tr\left( -\frac 1 2 \Delta_\br \Upsilon
  \right),  \; \Upsilon \in  {\cal S}(L^2(\RR^3_\Sigma)), \; 0 \le
  \Upsilon \le 1, \nonumber \\ & & 
\qquad \qquad \qquad  \mbox{Ran}(\Upsilon) \subset H^1(\RR^3_\Sigma), \, 
\tr(\Upsilon) = N, \; \rho_\Upsilon = \rho
\bigg\}.  \label{eq:Janak1}
\end{eqnarray}
It is to be noticed that no such simple expression for $\widetilde
T_{\rm KS}(\rho)$ is available because one lacks an $N$-representation result
similar to (\ref{eq:Nrep}) for pure state one-particle reduced density
operators. In the standard Kohn-Sham model, $\widetilde T_{\rm KS}(\rho)$ is
replaced with the Kohn-Sham kinetic energy functional
\begin{equation}
  \label{eq:KS2}
  T_{\rm KS}(\rho) =\inf \left\{ \langle \Psi |H_N^0 | \Psi
    \rangle, \;
   \Psi \in {\cal Q}_N, \; \Psi \mbox{ is a Slater determinant},
   \; \rho_\Psi=\rho \right\}, 
\end{equation}
where we recall that a Slater determinant is a wavefunction $\Psi$ of
the form
$$
\Psi(\bx_1,\cdots,\bx_N) = \frac{1}{\sqrt{N!}} \mbox{det}(\phi_i(\bx_j))
\quad \mbox{with} \quad \phi_i \in L^2(\RR^3_\Sigma), \quad \int_{\RR^3}
\phi_i(\bx)\phi_j(\bx) \, d\bx = \delta_{ij}.
$$
It is then easy to check that
\begin{equation}
  \label{eq:KS2n}
   T_{\rm KS}(\rho) = \inf \left\{ {1\over 2} \sum_{i=1}^N\int_{\Rspin}
   |\nabla\phi_i(\bx)|^2 \, d\bx, \quad  \Phi=(\phi_1,\cdots,\phi_N)
   \in {\mathcal W}_N, \quad \rho_\Phi=\rho \right\},
\end{equation}
where we have set 
$$
{\mathcal W}_N = \left\{ \Phi=(\phi_1,\cdots,\phi_N) \quad | \quad
  \phi_i \in H^1(\RR^3_\Sigma), 
\; \int_{\Rspin} \phi_i(\bx) \phi_j(\bx) \, d\bx = \delta_{ij} \right\}
$$
and 
$$
\rho_\Phi(\br) = \sum_{i=1}^N \sum_{\sigma \in \Sigma} |\phi_i(\br,\sigma)|^2.
$$
Note that for an arbitrary $\rho \in  {\mathcal R}_N$, it holds
$$
T_{\rm J}(\rho)  \le \widetilde T_{\rm KS}(\rho) \le T_{\rm KS}(\rho).
$$
It is not difficult to check that (\ref{eq:Janak1}) always has a
minimizer. If one of the minimizers $\Upsilon$ of (\ref{eq:Janak1}) is of
rank $N$, then $\Upsilon = \sum_{i=1}^N |\phi_i\rangle \langle\phi_i|$
with $\Phi=(\phi_1,\cdots, \phi_N) \in {\cal W}_N$, $\Phi$ being then a
minimizer of (\ref{eq:KS2}) and  $T_{\rm KS}(\rho)=T_{\rm J}(\rho)$.  
Otherwise,  $T_{\rm KS}(\rho) >T_{\rm J}(\rho)$.

The density functionals $T_{\rm KS}$ and $T_J$ associated with the non
interacting hamiltonian $H_0$ are expected to provide acceptable
approximations of the kinetic energy of the real (interacting) system.
Likewise, the Coulomb energy
\[
J(\rho) = \frac 1 2 \int_{\RR^3} \int_{\RR^3} \frac{\rho(\br) \,
  \rho(\br')}{|\br-\br'|} \, d\br \, d\br' 
\]
representing the electrostatic energy of a {\em classical} charge
distribution of density $\rho$ is a reasonable guess for the
electronic interaction energy in a system of $N$ electrons of
density $\rho$.  
The errors on both the kinetic energy and the electrostatic interaction
are put together in the
\emph{exchange-correlation energy} defined as the difference 
\begin{equation} \label{eq:exactExc1}
E_{\rm xc}(\rho)= F_{\rm LL}(\rho)- T_{\rm KS}(\rho) - J(\rho),
\end{equation}
or
\begin{equation} \label{eq:exactExc2}
E_{\rm xc}(\rho)= F_{\rm L}(\rho)- T_{\rm J}(\rho) - J(\rho),
\end{equation}
depending on the choices for the interacting and non-interacting density
functionals. We finally end up with the so-called Kohn-Sham and extended
Kohn-Sham models
\begin{eqnarray}
  \label{eq:minKS}
  I_N^{\rm KS} & = & \inf \bigg\{ {1\over
   2}\sum_{i=1}^N\int_{\Rspin}|\nabla\phi_i(\bx)|^2 \, d\bx
 +\int_{\RR^3} \rho_\Phi V 
  + J(\rho_\Phi) +E_{\rm xc}(\rho_\Phi),  \nonumber\\   
&& \qquad \qquad  
   \Phi = (\phi_1,\cdots,\phi_N) \in {\mathcal W}_N
   \bigg\},
\end{eqnarray}
and
\begin{eqnarray}
  \label{eq:minEKS}
  I_N^{\rm EKS} & = & \inf \bigg\{  \tr\left( -\frac 1 2 \Delta_\br \Upsilon
  \right) +\int_{\RR^3} \rho_\Upsilon V 
  + J(\rho_\Upsilon) +E_{\rm xc}(\rho_\Upsilon) ,  \nonumber \\
& & \qquad  \; \Upsilon \in
  {\cal S}(L^2(\RR^3_\Sigma)), \; 0 \le \Upsilon \le 1, \; \tr(\Upsilon)
  = N, \; \tr(-\Delta_\br \Upsilon) < \infty
   \bigg\},
\end{eqnarray}
the condition on $\tr(-\Delta_\br \Upsilon)$ ensuring that each term of the
energy functional is well-defined.

\medskip

Up to now, no approximation has been made, in such a way that for the
exact exchange-correlation functionals ((\ref{eq:exactExc1}) or
(\ref{eq:exactExc2})), $I_N^{\rm KS}=I_N^{\rm EKS}=\lambda_1(H_N)$ for all
molecular system containing $N$ electrons. Unfortunately, there is no
tractable expression of $E_{\rm xc}(\rho)$ that can be used in numerical
simulations. Before proceeding further, and for the sake of simplicity,
we will restrict ourselves to closed-shell, spin-unpolarized,
systems. This means that we 
will only consider molecular systems with an even number of electrons
$N=2N_p$, where $N_p$ is the number of electron pairs in the system, and
that we will assume that electrons ``go by pairs''. 
In the Kohn-Sham formalism, this means that the set of admissible
states reduces to
$$
\left\{
  \Phi=(\varphi_1\alpha,\varphi_1\beta,\cdots,\varphi_{N_p}\alpha,\varphi_{N_p}\beta) 
  \; | \; \varphi_i \in H^1(\RR^3), \; \int_{\RR^3} \varphi_i\varphi_j =
  \delta_{ij} \right\}
$$
where $\alpha(\spinup)=1$, $\alpha(\spindown)=0$, 
$\beta(\spinup)=0$ and $\beta(\spindown)=1$, yielding the
spin-unpolarized (or closed-shell, or restricted) Kohn-Sham model
\begin{eqnarray}
  \label{eq:minRKS}
  I_N^{\rm RKS} & = & \inf \bigg\{ \sum_{i=1}^{N_p}\int_{\RR^3}|\nabla\varphi_i|^2
 +\int_{\RR^3} \rho_\varPhi V 
  + J(\rho_\varPhi) +E_{\rm xc}(\rho_\varPhi),  \nonumber\\   
&& \!\!\!\! \!\!\!\! \!\!\!\!
   \varPhi = (\varphi_1,\cdots,\varphi_{N_p}) \in (H^1(\RR^3))^{N_p}, \quad
   \int_{\RR^3} \varphi_i\varphi_j=\delta_{ij}, \quad \rho_\varPhi = 2 
  \sum_{i=1}^{N_p} |\varphi_i|^2
   \bigg\},
\end{eqnarray}
where the factor $2$ in the definition of $\rho_\varPhi$ accounts for
the spin. Likewise, the constraints on the one-electron reduced density
operators originating from the closed-shell approximation read:
$$
\Upsilon(\br,\spinup,\br',\spinup) =
\Upsilon(\br,\spindown,\br',\spindown) \quad \mbox{and} \quad 
\Upsilon(\br,\spinup,\br',\spindown) =
\Upsilon(\br,\spindown,\br',\spinup) = 0. 
$$ 
Introducing $\gamma(\br,\br') = \Upsilon(\br,\spinup,\br',\spinup)$ and
denoting by $\rho_\gamma(\br) = 2 \gamma(\br,\br)$, we
obtain the spin-unpolarized extended Kohn-Sham model
$$
I_N^{\rm REKS} = \left\{ {\cal E}(\gamma), \quad \gamma \in {\cal
    K}_{N_p} \right\} 
$$
where 
$$
{\cal E}(\gamma) = \tr(-\Delta \gamma) + 
\int_{\RR^3} \rho_\gamma V + J(\rho_\gamma) + E_{\rm xc}(\rho_\gamma),
$$
and
$$
{\cal K}_{N_p} = \left\{ \gamma \in {\cal S}(L^2(\RR^3)) \; | \; 0 \le
  \gamma \le 1, \; \tr(\gamma) = {N_p}, \; \tr(-\Delta\gamma) <
  \infty \right\}.
$$
Note that any $\gamma \in {\cal K}_{N_p}$ is of the form
$$
\gamma = \sum_{i=1}^{+\infty} n_i |\phi_i\rangle \langle \phi_i|
$$
with
$$
\phi_i \in H^1(\RR^3), \quad \int_{\RR^3} \phi_i\phi_j = \delta_{ij},
\quad 
n_i \in [0,1], \quad \sum_{i=1}^{+\infty} n_i = {N_p}, \quad
\sum_{i=1}^{+\infty} n_i \|\nabla \phi_i\|_{L^2}^2 < \infty.
$$
In particular, 
$$
\rho_\gamma(\br) = 2 \, \sum_{i=1}^{+\infty} n_i |\phi_i(\br)|^2.
$$
Let us also remark that problem (\ref{eq:minRKS}) can be recast in terms
of density operators as follows
\begin{equation} \label{eq:minRKS2}
I_N^{\rm RKS} = \left\{ {\cal E}(\gamma), \quad \gamma \in {\cal K}_{N_p} \right\}
\end{equation}
where 
$$
{\cal P}_{N_p} = \left\{ \gamma \in {\cal S}(L^2(\RR^3)) \; | \; 
  \gamma^2 = \gamma, \; \tr(\gamma) = {N_p}, \; \tr(-\Delta \gamma) <
  \infty \right\}
$$
is a the set of finite energy rank-${N_p}$ orthogonal projectors (note that
${\cal K}_{N_p}$ is the convex hull of ${\cal P}_{N_p}$). The connection between
(\ref{eq:minRKS}) and (\ref{eq:minRKS2}) is given by the correspondence
$$
\gamma = \sum_{i=1}^{N_p} |\phi_i\rangle\langle\phi_i|,
$$
i.e. $\gamma$ is the orthogonal projector on the vector space spanned by
the $\phi_i$. Indeed, as $|\nabla|=(-\Delta)^{\frac 12}$, it holds 
$$
\tr(-\Delta \gamma) = \tr(|\nabla|\gamma|\nabla|) = \sum_{i=1}^{N_p} \|
|\nabla| \phi_i \|_{L^2}^2 
= \sum_{i=1}^{N_p} \| \nabla \phi_i \|_{L^2}^2 = \sum_{i=1}^{N_p} \int_{\RR^3}
|\nabla \phi_i|^2.
$$

\medskip

Let us now address the issue of constructing relevant
approximations for $E_{\rm xc}(\rho)$.
In their celebrated 1964 article, Kohn and Sham proposed
to use an approximate exchange-correlation functional of the form
\begin{equation} \label{eq:LDA}
E_{\rm xc}(\rho) = \int_{\RR^3} g(\rho(\br)) \, d\br \qquad \mbox{(LDA
  exchange-correlation functional)} 
\end{equation}
where $\rho^{-1}g(\rho)$ is the exchange-correlation density for a
uniform electron gas with density $\rho$, yielding the so-called local
density approximation (LDA). In practical calculations, it
is made use of approximations of the function $\rho \mapsto g(\rho)$
(from $\RR_+$ to $\RR$) obtained by interpolating asymptotic
formulae for the low and high density regimes (see e.g. \cite{DG}) and accurate
quantum Monte Carlo evaluations of $g(\rho)$ for a small number of
values of $\rho$ \cite{Ceperley}. Several interpolation formulae are
available~\cite{PZ,PW,VWN}, which provide similar results.  
In the 80's, refined approximations of $E_{\rm xc}$ have been
constructed, which take into account the inhomogeneity of the electronic
density in real molecular systems. Generalized gradient approximations
(GGA) of the exchange-correlation functional are of the form
\begin{equation} \label{eq:GGA}
E_{\rm xc}(\rho) = \int_{\RR^3} h(\rho(\br),\frac 1 2|\nabla
\sqrt{\rho(\br)}|^2) \, dx 
\qquad \mbox{(GGA exchange-correlation functional)}. 
\end{equation}
Contrarily to the situation encountered for LDA, the function
$(\rho,\kappa) \mapsto g(\rho,\kappa)$ (from $\RR_+ \times \RR_+$ to
$\RR$) does not have a univoque definition. Several GGA functionals have
been proposed and new ones come up
periodically. 

\medskip

\begin{remark} \label{rem:PBE}
We have chosen the form (\ref{eq:GGA}) for the GGA
  exchange-correlation functional because it is well suited for the
  study of spin-unpolarized two electron systems (see
  Theorem~\ref{th:GGA_2e} below). 
In the Physics literature, spin-unpolarized LDA and GGA
  exchange-correlation functionals are rather written as follows
$$
E_{\rm xc}(\rho) =  E_{\rm x}(\rho) + E_{\rm c}(\rho)
$$
with
\begin{eqnarray}
E_{\rm x}(\rho) & = &  \int_{\RR^3} \rho(\br) \, \epsilon_{\rm
  x}(\rho(\br)) \, 
F_{\rm x}(s_{\rho}(\br)) \, d\br \label{eq:GGA_exchange} \\
E_{\rm c}(\rho) & = & \int_{\RR^3} \rho(\br) \, \left[ \epsilon_{\rm
  c}(r_\rho(\br)) +
H(r_\rho(\br),t_{\rho}(\br))\right] \, d\br \label{eq:GGA_correlation}. 
\end{eqnarray}
In the above decomposition, $E_{\rm x}$ is the
exchange energy, $E_{\rm c}$ is the correlation energy,
$\epsilon_{\rm x}$ and $\epsilon_{\rm c}$ are respectively the
exchange and correlation energy densities of the homogeneous electron gas,
$r_{\rho}(\br) = \left( \frac 4 3 \pi\rho(\br) \right)^{-\frac 13}$ is the
Wigner-Seitz radius, 
{ $s_\rho(\br)=\frac{1}{2(3\pi^2)^{\frac 13}} \frac{|\nabla
  \rho(\br)|}{\rho(\br)^{\frac 43}}$} is the (non-dimensional) reduced density
gradient,
{ $t_\rho(\br)=\frac{1}{4 (3\pi^{-1})^{\frac 16}} \frac{|\nabla
  \rho(\br)|}{\rho(\br)^{\frac 76}}$} is the correlation gradient, 
$F_{\rm x}$ is the so-called exchange 
enhancement factor, and $H$ is the gradient contribution to the
correlation energy. While $\epsilon_{\rm x}$ has a simple
analytical expression, namely
$$
\epsilon_{\rm x}(\rho) = -\frac 34 \left( \frac 3\pi \right)^{\frac 13} \rho^{\frac 13}
$$
$\epsilon_{\rm c}$ has to be 
approximated (as explained above for the function $g$). 
For LDA, $F_{\rm x}$ is everywhere equal to one and $H=0$. A
popular GGA exchange-correlation energy is the PBE functional
\cite{PBE}, for which  
\begin{eqnarray*}
F_{\rm x}(s) & = & 1 + \frac{\mu s^2}{1+\mu \nu ^{-1} s^2} \\
H(r,t) & = & \theta \ln \left( 1+ \frac \upsilon  \theta \, t^2 \, \frac{1+A(r)
    t^2}{1+A(r)t^2+A(r)^2t^4} \right) \quad \mbox{with} \quad
A(r) = \frac \upsilon  \theta \, \left( e^{-\epsilon_{\rm
      c}(r)/\theta}-1 \right)^{-1}, 
\end{eqnarray*}
the values of the parameters $\mu \simeq 0.21951$, $\nu \simeq 0.804$,
$\theta = \pi^{-2}(1-\ln 2)$ and $\upsilon = 3\pi^{-2}\mu$
following from theoretical arguments.
\end{remark}

\section{Main results}
\label{sec:main}

Let us first set up and comment on the conditions on the LDA and GGA
exchange-correlation functionals under which our results hold true:
\begin{itemize}
\item the function $g$ in (\ref{eq:LDA}) is a $C^1$ function from
  $\RR_+$ to $\RR$, twice differentiable and such that
\begin{eqnarray}
& & g(0) = 0 \label{eq:g0} \\
& & g' \le 0 \label{eq:gprime} \\
& & \exists 0 < \beta_- \le \beta_+ < \frac 2 3 \quad \mbox{s.t.} \quad
\sup_{\rho \in \RR_+} \frac{|g'(\rho)|}{\rho^{\beta_-}+\rho^{\beta_+}} <
\infty \label{eq:gp} \\
& & \exists 1 \le \alpha < \frac 3 2 \quad \mbox{s.t.} \quad \limsup_{\rho
  \to 0^+} \frac{g(\rho)}{\rho^\alpha} < 0; \label{eq:g0p} 
\end{eqnarray}
\item the function $h$ in (\ref{eq:LDA}) is a $C^1$ function from
  $\RR_+ \times \RR_+$ to $\RR$, twice differentiable with respect to
  the second variable, and such that
\begin{eqnarray}
& & h(0,\kappa) = 0, \; \forall \kappa \in \RR_+ \label{eq:h0} \\
& &  \frac{\partial h}{\partial \rho} \le 0 \label{eq:hprime}\\
& & \exists 0 < \beta_- \le \beta_+ < \frac 2 3 \quad \mbox{s.t.} \quad 
\sup_{(\rho,\kappa) \in \RR_+ \times \RR_+}
\frac{\dps \left|\frac{\partial h}{\partial
      \rho}(\rho,\kappa)\right|}{\rho^{\beta_-}+\rho^{\beta_+}} < \infty
\label{eq:h0infty} \\
& & \exists 1 \le \alpha < \frac 3 2 \quad \mbox{s.t.} \quad
\limsup_{(\rho,\kappa) \to (0^+,0^+)} \frac{h(\rho,\kappa)}{\rho^\alpha}
< 0 \label{eq:h0p}  \\
& & \exists 0 < a \le b < \infty \quad \mbox{s.t.} \quad
\forall (\rho,\kappa) \in \RR_+ \times \RR_+, \quad 
a \le 1 + \frac{\partial h}{\partial \kappa}(\rho,\kappa) \le b
\label{eq:h0EL} \\
& & 
\forall (\rho,\kappa) \in \RR_+ \times \RR_+, \quad 
1 + \frac{\partial h}{\partial \kappa}(\rho,\kappa) + 2 \kappa 
\frac{\partial^2 h}{\partial \kappa^2}(\rho,\kappa) \ge 0.
\label{eq:h0SCI} 
\end{eqnarray}
\end{itemize}
Conditions (\ref{eq:g0})-(\ref{eq:g0p}) on the LDA
exchange-correlation energy are not restrictive. They are obviously
fulfilled by the LDA exchange functional ($g_{\rm x}^{\rm LDA}(\rho) =  -\frac
34 \left( \frac 3\pi \right)^{\frac 13}\rho^{\frac 43}$), and are also satisfied
by all the approximate LDA correlation functionals currently used in
practice (with $\alpha = \frac 43$ and $\beta_-=\beta^+=\frac 13$).
We have checked numerically that assumptions
(\ref{eq:h0})-(\ref{eq:h0SCI}) are satisfied by the PZ81 functional defined in \cite{PZ}.

\medskip

\begin{remark}
Our results remain true if (\ref{eq:gprime}) and (\ref{eq:hprime}) are
respectively replaced with the weaker conditions
$$
\exists \frac 13 \le \beta'_- \le \beta_+ < \frac 2 3 \quad \mbox{s.t.} \quad
\sup_{\rho \in \RR_+} \frac{\max(0,g'(\rho))}{\rho^{\beta'_-}+\rho^{\beta_+}} <
\infty
$$  
and
$$
 \exists \frac 13 \le \beta'_- \le \beta_+ < \frac 2 3 \quad \mbox{s.t.} \quad 
\sup_{(\rho,\kappa) \in \RR_+ \times \RR_+}
\frac{\dps \max\left(0,\frac{\partial h}{\partial
      \rho}(\rho,\kappa)\right)}{\rho^{\beta'_-}+\rho^{\beta_+}} < \infty.
$$
\end{remark}

\medskip

As usual in the mathematical study of molecular electronic structure
models, we embed (\ref{eq:minRKS2}) in the family of problems
\begin{equation} \label{eq:EKSpb2}
I_\lambda = \inf \left\{{\cal E}(\gamma), \; \gamma \in 
{\cal K}_\lambda \right\} 
\end{equation}
parametrized by $\lambda \in \RR_+$ where
$$
{\cal K}_\lambda =  \left\{ \gamma \in {\cal S}(L^2(\RR^3))
  \; | \; 0 \le 
  \gamma \le 1, \; \tr(\gamma) = \lambda, \; \tr(- \Delta \gamma) <
  \infty \right\},
$$
and introduce the problem at infinity
\begin{equation} \label{eq:EKSpb2infty}
I^{\infty}_\lambda = \inf \left\{{\cal E}^\infty(\gamma), \; \gamma \in 
{\cal K}_\lambda \right\} 
\end{equation}
where
$$
{\cal E}^{\rm KS}(\gamma) = \tr(-\Delta \gamma) + 
J(\rho_\gamma) + E_{\rm xc}(\rho_\gamma).
$$
The following results hold true for both the LDA and GGA extended
Kohn-Sham models.

\medskip

\begin{lemma} \label{lem:I_lambda} Consider (\ref{eq:EKSpb2}) and
  (\ref{eq:EKSpb2infty}) with $E_{\rm xc}$ given either by (\ref{eq:LDA}) or
  by (\ref{eq:GGA}) together with the conditions
  (\ref{eq:g0})-(\ref{eq:g0p}) or
  (\ref{eq:h0})-(\ref{eq:h0p}). Then
  \begin{enumerate}
  \item $I_0 = I^\infty_0 = 0$ and for all $\lambda > 0$, $- \infty < I_\lambda <
    I^\infty_\lambda < 0$;
  \item the functions $\lambda \mapsto I_\lambda$ and $\lambda \mapsto
    I^\infty_\lambda$ are continuous and decreasing;
  \item for all $0 < \mu < \lambda$,
\begin{equation} \label{eq:CC1}
I_\lambda \le I_\mu + I^\infty_{\lambda-\mu}.
\end{equation}
  \end{enumerate}
\end{lemma}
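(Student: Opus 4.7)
The equalities $I_0 = I^\infty_0 = 0$ are immediate (only $\gamma=0$ is admissible). For $I_\lambda > -\infty$, I would combine the density-matrix Lieb--Thirring inequality $\tr(-\Delta\gamma) \ge C_{\rm LT}\int\rho_\gamma^{5/3}$ (valid for $0 \le \gamma \le 1$) with the growth bound $|g(\rho)| \le C(\rho^{1+\beta_-} + \rho^{1+\beta_+})$ obtained by integrating (\ref{eq:gp}) from zero using $g(0) = 0$; the GGA analogue $|h(\rho,\kappa)| \le C(\rho^{1+\beta_-} + \rho^{1+\beta_+})$ uniform in $\kappa$ follows from (\ref{eq:h0infty}) together with (\ref{eq:h0}). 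Since $1 + \beta_\pm < 5/3$, $L^1$--$L^{5/3}$ interpolation and Young's inequality absorb $|E_{\rm xc}|$ and the negative part of $\int \rho V$ (split $V \in L^{3/2}+L^\infty$, control the $L^{3/2}$ piece via Hoffmann--Ostenhof plus Sobolev) into a small fraction of $\tr(-\Delta\gamma)$. For $I^\infty_\lambda < 0$, I use a spreading trial state $\gamma_R$ built from orbitals $\psi_i^R(r) = R^{-3/2}\psi_i((r-r_i)/R)$ centered at well-separated points with occupations summing to $\lambda$: kinetic and self-Coulomb contributions scale as $R^{-2}$ and $R^{-1}$ respectively (the latter suppressed by a factor $1/K$ if $K$ well-separated components are used, killing $J$), while (\ref{eq:g0p})/(\ref{eq:h0p}) give $E_{\rm xc}(\rho_R) \le -c R^{3-3\alpha}$ for $R$ large; optimizing the leading balance $A/R^2 - B R^{3-3\alpha}$ over $R$ yields $\mathcal{E}^\infty(\gamma_R) \sim (A/R^{*2})\,(3\alpha-5)/(3\alpha-3) < 0$ since $\alpha < 3/2 < 5/3$. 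In the GGA case, $|\nabla\sqrt{\rho_R}|^2 \to 0$ pointwise places one in the small-$(\rho,\kappa)$ regime of (\ref{eq:h0p}).

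\textbf{The strict inequality $I_\lambda < I^\infty_\lambda$} is the main obstacle. My plan is a concentration-compactness argument. Let $(\gamma_n)$ minimize $I^\infty_\lambda$. If $(\rho_n)$ were vanishing in Lions' sense, i.e.\ $\sup_{\tau \in \RR^3}\int_{B(\tau,R)}\rho_n \to 0$ for every $R > 0$, then the uniform $H^1$-bound on $\sqrt{\rho_n}$ (from coercivity and Hoffmann--Ostenhof) combined with Lions' first concentration lemma would force $\rho_n \to 0$ in $L^p$ for every $p \in (1,3)$; since $1+\beta_\pm < 3$, the growth bound above gives $E_{\rm xc}(\rho_n) \to 0$, hence $\liminf \mathcal{E}^\infty(\gamma_n) \ge 0$ (because kinetic and $J$ are non-negative), contradicting $I^\infty_\lambda < 0$. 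Non-vanishing thus provides $\delta, R_0 > 0$ and translates $\tau_n$ with $\int_{B(\tau_n,R_0)}\rho_n \ge \delta$ for all $n$. Shifting by $R_k - \tau_n$ so that this ball sits at a nucleus $R_k$, and using translation invariance of $\mathcal{E}^\infty$, I obtain $\widetilde\gamma_n \in \mathcal{K}_\lambda$ with $\int V\rho_{\widetilde\gamma_n} \le -z_k\delta/R_0$ uniformly. Hence $\mathcal{E}(\widetilde\gamma_n) = \mathcal{E}^\infty(\gamma_n) + \int V\rho_{\widetilde\gamma_n} \to I^\infty_\lambda - z_k\delta/R_0$, giving $I_\lambda \le I^\infty_\lambda - z_k\delta/R_0 < I^\infty_\lambda$.

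\textbf{Items 2 and 3} are then constructions. For strict monotonicity (Item 2), I extend any near-minimizer $\gamma_1 \in \mathcal{K}_{\lambda_1}$ with compactly supported density by adding $(\lambda_2-\lambda_1)|\psi\rangle\langle\psi|$ with $\psi$ spread on a large scale and centered far from $\mathrm{supp}(\rho_1)$ and the nuclei: orthogonality to $\gamma_1$ comes from disjoint supports, cross Coulomb and $V$-terms for $\psi$ vanish in the limit, and the added xc gain beats the added kinetic cost by the same analysis as above, yielding $I_{\lambda_2} < I_{\lambda_1}$. Continuity follows from letting $|\lambda_2-\lambda_1| \to 0$ in this construction, combined with the one-sided rescaling $n_i \mapsto (\lambda\pm\epsilon)/\lambda \cdot n_i$ of occupation numbers for the other direction; $I^\infty_\lambda$ is treated identically (and more easily, as $V$ is absent). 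For the binding inequality (Item 3), I take near-minimizers $\gamma_1 \in \mathcal{K}_\mu$ of $I_\mu$ and $\gamma_2 \in \mathcal{K}_{\lambda-\mu}$ of $I^\infty_{\lambda-\mu}$ with compactly supported densities, translate $\gamma_2$ by $\tau$ far from $\mathrm{supp}(\rho_1)$ and the nuclei, and set $\gamma = \gamma_1 + \gamma_2^\tau \in \mathcal{K}_\lambda$. Disjoint supports ensure $0 \le \gamma \le 1$ and additivity of both the kinetic energy and $E_{\rm xc}$ (using $g(0)=0$, resp.\ $h(0,\kappa)=0$), while the cross Coulomb term and the nuclear attraction on $\gamma_2^\tau$ each decay like $1/|\tau|$. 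Letting $|\tau| \to \infty$ gives $\mathcal{E}(\gamma) \to \mathcal{E}(\gamma_1) + \mathcal{E}^\infty(\gamma_2)$, and optimizing over the near-minimizers yields (\ref{eq:CC1}).
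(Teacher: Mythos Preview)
Your proposal is correct and follows essentially the same strategy as the paper: a scaling trial state for $I^\infty_\lambda < 0$, non-vanishing of minimizing sequences plus translation to a nucleus for the strict inequality $I_\lambda < I^\infty_\lambda$, disjoint-support constructions for the binding inequality, and occupation-number rescaling for continuity. The only differences are organizational or cosmetic: the paper proves Item~3 first and then reads off strict monotonicity as the one-line consequence $I_\lambda \le I_\mu + I^\infty_{\lambda-\mu} < I_\mu$ (so your monotonicity construction is just a second proof of Item~3), and for coercivity the paper uses the Hoffmann--Ostenhof inequality $\tfrac12\|\nabla\sqrt{\rho_\gamma}\|_{L^2}^2 \le \tr(-\Delta\gamma)$ together with Gagliardo--Nirenberg rather than Lieb--Thirring---either route gives the required lower bound.
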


\medskip

Our main results are the following two theorems.

\medskip

\begin{theorem}[Extended KS-LDA model] \label{th:LDA}
  Assume that $Z \ge N=2N_p$ (neutral
  or positively charged system) and that the function $g$ satisfies
  (\ref{eq:g0})-(\ref{eq:g0p}). Then the extended Kohn-Sham LDA model
  (\ref{eq:EKSpb2}) with $E_{\rm xc}$ given by (\ref{eq:LDA}) has a
  minimizer $\gamma_0$. Besides, $\gamma_0$ satisfies the
  self-consistent field equation 
\begin{equation} \label{eq:SCF_EKS}
\gamma_0 = \chi_{(-\infty,\epsilon_{\rm F})}(H_{\rho_{\gamma_0}}) + \delta
\end{equation} 
for some $\epsilon_{\rm F} \le 0$, where 
$$
H_{\rho_{\gamma_0}} = - \frac 1 2 \Delta + V + \rho_{\gamma_0} \star
|\br|^{-1} + g'(\rho_{\gamma_0}), 
$$
where $\chi_{(-\infty,\epsilon_{\rm
    F})}$ is the characteristic function of the range $(-\infty,\epsilon_{\rm
    F})$ and where $\delta \in {\cal
  S}(L^2(\RR^3))$ is such that $0 \le \delta \le 1$ and
$\mbox{Ran}(\delta) = \mbox{Ker}(H_{\rho_{\gamma_0}}-\epsilon_{\rm F})$.
\end{theorem}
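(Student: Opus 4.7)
The plan is to apply a concentration-compactness argument to (\ref{eq:EKSpb2}) with $\lambda = N_p$. First I would show that any minimizing sequence $(\gamma_n) \subset {\cal K}_{N_p}$ is bounded in the natural ``Sobolev--trace class'' space $\{\gamma \in {\cal S}(L^2(\RR^3)) : \gamma \ge 0,\ \tr(\gamma) + \tr(-\Delta\gamma) < \infty\}$. The key ingredients for coercivity are: Hardy's inequality to dominate $\int V \rho_\gamma$ by a small multiple of $\tr(-\Delta\gamma)$; the Lieb--Thirring bound $\tr(-\Delta\gamma) \ge C \int \rho_\gamma^{5/3}$ combined with (\ref{eq:gp}) (which gives $|g(\rho)| \lesssim \rho^{1+\beta_-} + \rho^{1+\beta_+}$ with $1+\beta_+ < 5/3$) to control $\int g(\rho_\gamma)$; and the positivity of $J$. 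Together with $I_{N_p} > -\infty$ from Lemma~\ref{lem:I_lambda}, this bounds $(\gamma_n)$. After extraction, $\gamma_n \rightharpoonup \gamma_0$ in the weak-$\ast$ topology of trace-class operators, $\sqrt{\rho_{\gamma_n}} \rightharpoonup \sqrt{\rho_{\gamma_0}}$ in $H^1(\RR^3)$, and $\rho_{\gamma_n} \to \rho_{\gamma_0}$ strongly in $L^p_{\rm loc}(\RR^3)$ for all $1 \le p < 3$.

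The main obstacle is excluding loss of mass at infinity; for this I would upgrade (\ref{eq:CC1}) into the strict binding inequality
\begin{equation*}
I_\lambda < I_\mu + I^\infty_{\lambda-\mu}, \qquad 0 < \mu < \lambda \le N_p,
\end{equation*}
which is where the hypothesis $Z \ge N = 2 N_p$ enters essentially. Given $\eta > 0$, pick near-minimizers $\gamma_1 \in {\cal K}_\mu$ and $\gamma_2 \in {\cal K}_{\lambda-\mu}$ of $I_\mu$ and $I^\infty_{\lambda-\mu}$ respectively, both with compactly supported densities (such states are energy-dense by a standard truncation/regularization argument). Let $\gamma_2^R$ denote the translate of $\gamma_2$ by $R\mathbf{e}_1$. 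For $R$ large enough, $\gamma_1 + \gamma_2^R \in {\cal K}_\lambda$, the densities $\rho_{\gamma_1}$ and $\rho_{\gamma_2^R}$ have disjoint support so that the LDA term is additive, and a direct expansion in $1/R$ gives
\begin{equation*}
\int V \rho_{\gamma_2^R}\, d\br + \int\!\!\int \frac{\rho_{\gamma_1}(\br)\,\rho_{\gamma_2^R}(\br')}{|\br - \br'|}\, d\br\, d\br' = \frac{2(\lambda - \mu)(2\mu - Z)}{R} + O(R^{-2}).
\end{equation*}
Since $2\mu < 2\lambda \le N \le Z$, this correction is strictly negative, and for $\eta$ small and $R$ large the trial state satisfies $\mathcal{E}(\gamma_1 + \gamma_2^R) < I_\mu + I^\infty_{\lambda-\mu}$.

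With the strict inequality available, the compactness argument is closed using a smooth IMS-type partition $\chi_R^2 + \eta_R^2 = 1$ with $\chi_R$ localizing to $B(0,R)$. The IMS localization formula together with the decay at infinity of $V$, $\rho_{\gamma_n}\star|\br|^{-1}$ and $g'(\rho_{\gamma_n})$ (by (\ref{eq:gp})) yields $\mathcal{E}(\gamma_n) \ge \mathcal{E}(\chi_R \gamma_n \chi_R) + \mathcal{E}^\infty(\eta_R \gamma_n \eta_R) - o_R(1) - o_n(1)$; if one assumed $\tr(\gamma_0) = \mu < N_p$, choosing $R = R_n \to \infty$ slowly would give $I_{N_p} \ge I_\mu + I^\infty_{N_p-\mu}$, contradicting strict binding. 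Hence $\tr(\gamma_0) = N_p$, and lower semicontinuity of the kinetic and Coulomb parts combined with strong $L^p_{\rm loc}$ convergence of $\rho_{\gamma_n}$ (plus the subcritical growth (\ref{eq:gp}) to handle mass uniformly at infinity via the tightness already established) show $\mathcal{E}(\gamma_0) \le \liminf \mathcal{E}(\gamma_n) = I_{N_p}$, so $\gamma_0$ is a minimizer. Finally, (\ref{eq:SCF_EKS}) is obtained by computing the first variation $\delta\mathcal{E}(\gamma_0)[\delta\gamma] = 2\,\tr(H_{\rho_{\gamma_0}}\delta\gamma)$ (the factor $2$ coming from $\rho_\gamma = 2\gamma(\br,\br)$) along admissible self-adjoint perturbations preserving $0 \le \gamma_0 + t\delta\gamma \le 1$ and $\tr = N_p$, and invoking the Aufbau characterization of the minimizer of $\gamma \mapsto \tr\bigl((H_{\rho_{\gamma_0}} - \epsilon_{\rm F})\gamma\bigr)$ on ${\cal K}_{N_p}$; the sign $\epsilon_{\rm F} \le 0$ follows from the non-negativity of the essential spectrum of $H_{\rho_{\gamma_0}}$, whose potential vanishes at infinity.
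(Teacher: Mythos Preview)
Your overall architecture---coercivity, weak-$*$ compactness, IMS localization, and the Euler equation via the Aufbau principle---is sound and parallels the paper's. The genuine gap is in your derivation of the \emph{strict} binding inequality $I_\lambda < I_\mu + I^\infty_{\lambda-\mu}$, which you try to prove \emph{a priori} from compactly supported near-minimizers. With tolerance $\eta>0$, your trial state $\gamma_1+\gamma_2^R$ gives
\[
I_\lambda \;\le\; I_\mu + I^\infty_{\lambda-\mu} + 2\eta + \frac{2(\lambda-\mu)(2\mu-Z)}{R} + O_\eta(R^{-2}),
\]
and to conclude strictness you need $R$ large enough for the supports to separate and the $O_\eta(R^{-2})$ remainder to be absorbed, yet small enough that the negative $1/R$ term dominates $2\eta$. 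There is no a priori control on how the support radii of the $\eta$-near-minimizers grow as $\eta\to 0$; if they grow faster than $1/\eta$, no admissible $R$ exists and the argument fails. This is precisely why the paper does \emph{not} prove strict binding beforehand---Lemma~\ref{lem:I_lambda} records only the non-strict inequality~(\ref{eq:CC1}).

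The paper instead argues \emph{a posteriori}. Assuming the weak limit $\gamma$ has mass $\alpha<\lambda$, an IMS splitting with radius $R_n$ chosen so that the inner piece has trace exactly $\alpha$ makes that inner piece a minimizing sequence for $I_\alpha$ with the correct trace; by Lemma~\ref{lem:conv_LDA} the weak limit $\gamma$ is then an \emph{exact} minimizer of $I_\alpha$. Likewise the escaping piece, after translation, converges to an exact minimizer $\gamma'$ of some $I^\infty_\beta$ with $\beta>0$, and $I_\lambda = I_\alpha + I^\infty_\beta + I^\infty_{\lambda-\alpha-\beta}$. The Euler equations for these \emph{actual} minimizers, combined with $Z>2\alpha$ and \cite[Lemma~II.1]{Lions}, give exponential decay of their eigenfunctions; only then does the trial state built from $\gamma$ and a far translate of $\gamma'$ produce a contradiction, since the overlap and LDA non-additivity errors are $O(e^{-\delta n})$, negligible against the $O(1/n)$ Coulomb attraction. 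The paper also handles separately the borderline case where the Fermi level of the problem at infinity vanishes (so $0$ is an eigenvalue of $H^\infty_{\rho_{\gamma'}}$), by moving a small amount of mass from that zero mode into the next unoccupied negative eigenstate of $H_{\rho_\gamma}$; your sketch does not address this case.
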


\medskip

\begin{theorem}[Extended KS-GGA model for two electron systems]
  \label{th:GGA_2e}  
Assume that $Z \ge N=2N_p=2$ (neutral
  or positively charged system with two electrons) and that the function
  $h$ satisfies 
  (\ref{eq:h0})-(\ref{eq:h0SCI}). Then the extended Kohn-Sham GGA model
  (\ref{eq:EKSpb2}) with $E_{\rm xc}$ given by (\ref{eq:GGA}) has a
  minimizer $\gamma_0$. Besides, $\gamma_0 = |\phi \rangle \langle
  \phi|$ where $\phi$ is a minimizer of the standard spin-unpolarized
  Kohn-Sham problem (\ref{eq:minRKS}) for $N_p=1$, 
  hence satisfying the Euler equation
\begin{equation} \label{eq:EL_GGA_2e}
- \frac 1 2 \div \left( \left( 1 + \frac{\partial h}{\partial
      \kappa}(\rho_\phi,|\nabla \phi|^2)\right) \nabla \phi \right) +
\left( V + \rho_\phi \star |\br|^{-1} +  
\frac{\partial h}{\partial \rho}(\rho_\phi,|\nabla \phi|^2)
\right) \phi = \epsilon \phi  
\end{equation}
for some $\epsilon < 0$, where $\rho_\phi = 2 \phi^2$. 
In addition, $\phi \in C^{0,\alpha}(\RR^3)$ for some $0 < \alpha < 1$
and decays exponentially fast at infinity. Lastly, $\phi$ can be chosen
non-negative and $(\epsilon,\phi)$ is the lowest eigenpair of the
self-adjoint operator 
$$
- \frac 1 2 \div \left( \left( 1 + \frac{\partial h}{\partial
      \kappa}(\rho_\phi,|\nabla \phi|^2)\right) \nabla \cdot\right) +
V + \rho_\phi \star  |\br|^{-1} +  
\frac{\partial h}{\partial \rho}(\rho_\phi,|\nabla \phi|^2).
$$
\end{theorem}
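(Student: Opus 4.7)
The plan is to reduce the problem to a one-orbital minimization, establish existence by concentration-compactness, and then derive the Euler-Lagrange equation and the qualitative properties of the minimizer.

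For the reduction, I would observe that for $N_p=1$ and any $\gamma\in{\cal K}_1$, the orbital $\phi:=\sqrt{\rho_\gamma/2}$ lies in $H^1(\RR^3)$ and satisfies $\|\nabla\phi\|_{L^2}^2=\int|\nabla\sqrt{\rho_\gamma/2}|^2\le \tr(-\Delta\gamma)$ by the Hoffmann-Ostenhof inequality. The rank-one projector $\tilde\gamma:=|\phi\rangle\langle\phi|$ lies in ${\cal K}_1$, has the same density as $\gamma$, and thus ${\cal E}(\tilde\gamma)\le {\cal E}(\gamma)$. Hence (\ref{eq:EKSpb2}) reduces to minimizing
\[ F(\phi):=\int_{\RR^3}|\nabla\phi|^2+2\int_{\RR^3}V\phi^2+J(2\phi^2)+\int_{\RR^3}h(2\phi^2,|\nabla\phi|^2) \]
over $\{\phi\in H^1(\RR^3)\,:\,\|\phi\|_{L^2}=1\}$; since $F(|\phi|)\le F(\phi)$ (using $|\nabla|\phi||=|\nabla\phi|$ a.e.\ and monotonicity of $t\mapsto t+h(\rho,t)$ from (\ref{eq:h0EL})), we may restrict to $\phi\ge 0$. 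This $F$ is exactly the closed-shell KS energy (\ref{eq:minRKS}) for $N_p=1$, so any minimizer of $F$ solves the standard KS problem as well.

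For existence, let $(\phi_n)$ minimize $F$. Coercivity in $H^1$ follows from (\ref{eq:h0EL}), (\ref{eq:h0infty}), Hardy's inequality, and the constraint $\|\phi_n\|_{L^2}=1$. I would then apply Lions' concentration-compactness lemma to $(\phi_n^2)\subset L^1(\RR^3)$. Vanishing is excluded because it would give $\phi_n\to 0$ in $L^p$ for $p\in(2,6)$, making the potential, Hartree, and $\rho$-dependent part of the GGA terms tend to $0$, so $\liminf F(\phi_n)\ge 0$, contradicting $I_1<0$ (Lemma~\ref{lem:I_lambda}). To rule out dichotomy at mass $\mu\in(0,1)$, I would establish the strict binding inequality $I_1<I_\mu+I^\infty_{1-\mu}$: starting from a compactly-supported near-minimizer of $I_\mu$ and a translate by $R\mathbf{e}$ of a compactly-supported near-minimizer of $I^\infty_{1-\mu}$, the kinetic and GGA cross terms vanish once the supports are disjoint, while the Coulomb-plus-nuclear cross term contributes $(\mu-Z)(1-\mu)R^{-1}+O(R^{-2})$, which is strictly negative because $Z\ge 1>\mu$.

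Up to translation, $(\phi_n)$ then converges weakly in $H^1$ and strongly in $L^2$ to some $\phi$ with $\|\phi\|_{L^2}=1$. The nuclear, Hartree, and (by (\ref{eq:h0infty})) $\rho$-dependent parts of the energy pass to the limit by standard arguments. The crucial ingredient is the weak lower semicontinuity of $\int(|\nabla\phi_n|^2+h(2\phi_n^2,|\nabla\phi_n|^2))$, which follows from an Ioffe-Serrin type theorem once one verifies that the integrand $\xi\mapsto|\xi|^2+h(2\phi^2(x),|\xi|^2)$ is convex in $\xi$: its Hessian has eigenvalues $2(1+\partial_\kappa h)$ (of multiplicity $2$) and $2(1+\partial_\kappa h+2\kappa\,\partial^2_\kappa h)$ (of multiplicity $1$), both non-negative by (\ref{eq:h0EL}) and (\ref{eq:h0SCI}). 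Hence $\phi$ minimizes $F$. Computing the G\^ateaux derivative and introducing a Lagrange multiplier $\epsilon$ yields (\ref{eq:EL_GGA_2e}). Uniform ellipticity from (\ref{eq:h0EL}) turns this into a quasilinear equation with bounded coefficients, to which De~Giorgi-Nash-Moser theory applies, giving $\phi\in C^{0,\alpha}$; the strong maximum principle forces $\phi>0$. Since the potentials $V$, $\rho_\phi\star|\br|^{-1}$ and $\partial_\rho h(\rho_\phi,|\nabla\phi|^2)$ all vanish at infinity, the essential spectrum of the associated self-adjoint operator is $[0,\infty)$, so $\epsilon<0$ and $(\epsilon,\phi)$ is the ground eigenpair; exponential decay then follows from a standard Agmon-type comparison using $\epsilon<0$ and uniform ellipticity.

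The main obstacle is the weak lower semicontinuity of the GGA exchange-correlation functional, due to its non-quadratic dependence on $\nabla\phi$; condition (\ref{eq:h0SCI}) is imposed precisely to make $\xi\mapsto|\xi|^2+h(\rho,|\xi|^2)$ convex. A secondary difficulty is establishing the strict binding inequality $I_1<I_\mu+I^\infty_{1-\mu}$, for which the restriction to two electrons is essential: it allows every intermediate mass $\mu\in(0,1)$ to be realized by a rank-one density operator, so that the classical electrostatic test-function argument exploiting $Z\ge N$ applies directly.
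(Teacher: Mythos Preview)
Your reduction to a one-orbital problem, your weak lower semicontinuity argument via convexity of $\xi\mapsto |\xi|^2+h(\rho,|\xi|^2)$ (the Hessian eigenvalue computation is exactly what underlies Lemma~\ref{lem:preExistGGA}), and your treatment of the Euler equation, regularity and decay are all essentially the paper's. The gap is in how you exclude dichotomy.

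Your strict binding argument does not close. With compactly supported \emph{near}-minimizers $\phi_1,\phi_2$ (energies within $\epsilon$ of $I_\mu$ and $I^\infty_{1-\mu}$, supports in $B_{R_0(\epsilon)}$) you get, for $R>2R_0(\epsilon)$,
\[
I_1 \;\le\; I_\mu+I^\infty_{1-\mu}+2\epsilon+\frac{c(\mu)}{R}+O\!\left(\frac{R_0(\epsilon)}{R^2}\right),
\qquad c(\mu)=2(1-\mu)(2\mu-Z)<0.
\]
There is no a priori bound on $R_0(\epsilon)$ as $\epsilon\to 0$, so you cannot force the right-hand side strictly below $I_\mu+I^\infty_{1-\mu}$; sending $R\to\infty$ first only recovers the weak inequality of Lemma~\ref{lem:I_lambda}. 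In Coulomb problems the strict binding inequality needs \emph{actual} minimizers with exponential decay, so that non-Coulomb cross terms are $O(e^{-\gamma R})$ while the attractive electrostatic interaction is $O(R^{-1})$. But existence of those minimizers is precisely what is at stake.

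The paper breaks this circularity by working with an Ekeland sequence, so that each $\phi_n$ satisfies the quasilinear Euler equation up to an $H^{-1}$ remainder (equation~(\ref{eq:Ekeland})). If dichotomy occurs, the cut-off pieces $\phi_{1,n},\phi_{2,n}$ inherit approximate Euler equations with diffusion coefficients $a_n=\frac12(1+\partial_\kappa h(\rho_{\phi_n},|\nabla\phi_n|^2))$ that are merely $L^\infty$-bounded; passing to the limit requires H-convergence (Lemma~\ref{lem:Hconvergence}). The limits satisfy genuine elliptic equations, Lemma~\ref{lem:H_GGA} forces the Lagrange multiplier $\theta>0$, and Lemma~\ref{lem:exp_decay} gives exponential decay. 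Dichotomy may iterate, and the paper must show the iteration terminates (otherwise the masses, hence the $L^\infty$ norms, of the pieces tend to zero, contradicting $\theta>0$). Only once one has actual minimizers $u_1$ for $J_{\delta_1}$ and $u_2$ for $J^\infty_{\delta_2}$, both exponentially decaying, does the test-function construction you sketch (with $w_t=\alpha_t(u_1+u_2(\cdot-t\mathbf{e}))$) go through and yield the contradiction. Your outline is missing the Ekeland step, the H-convergence tool, and the iterated-dichotomy termination argument, and without them the dichotomy case cannot be excluded.
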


\medskip

We have not been able to extend the results of Theorem~\ref{th:GGA_2e}
to the general case of $N_p$ electron pairs. This is mainly due to the
fact that the Euler equations for (\ref{eq:EKSpb2}) with $E_{\rm xc}$
given by (\ref{eq:GGA}) do not have a simple structure for $N_p \ge 2$. 

\section{Proofs}
\label{sec:proofs}

\noindent
For clarity, we will use the following notation
\begin{eqnarray*}
& & 
E_{\rm xc}^{\rm LDA}(\rho)  = \int_{\RR^3} g(\rho(\br)) \, d\br \\
\\
& &
E_{\rm xc}^{\rm GGA}(\rho)  = \int_{\RR^3} h(\rho(\br),\frac 1 2 |\nabla
\sqrt{\rho}(\br)|^2) \, d\br 
\\
& &
{\cal E}^{\rm LDA}(\gamma) = \tr(-\Delta \gamma) + 
\int_{\RR^3} \rho_\gamma V + J(\rho_\gamma) +
\int_{\RR^3} g(\rho_\gamma(\br)) \, d\br
\\
& & {\cal E}^{\rm GGA}(\gamma) = \tr(-\Delta \gamma) + 
\int_{\RR^3} \rho_\gamma V + J(\rho_\gamma) +
\int_{\RR^3} h(\rho_\gamma(\br),\frac 1 2|\nabla \sqrt{\rho_\gamma}(\br)|^2) \, d\br.
\end{eqnarray*}
The notations $E_{\rm xc}(\rho)$ and ${\cal E}(\gamma)$ will refer
indifferently to the LDA or the GGA setting.

\subsection{Preliminary results}

Most of the results of this section are elementary, but we provide them
for the sake of completeness. Let us denote by $\gS_1$ the vector space
of trace-class operators on $L^2(\RR^3)$ (see e.g. \cite{RS1}) and
introduce the vector space
$$
{\cal H} = \left\{ \gamma \in \gS_1 \; | \; |\nabla|\gamma|\nabla| \in
  \gS_1 \right\}
$$
endowed with the norm $\|\cdot\|_{\cal H} = \tr(|\cdot|) +
\tr(||\nabla|\cdot|\nabla||)$, and the convex set
$$
{\cal K} = \left\{ \gamma \in {\cal S}(L^2(\RR^3))
  \; | \; 0 \le 
  \gamma \le 1, \; \tr(\gamma) < \infty, \; \tr(|\nabla|\gamma|\nabla|) <
  \infty \right\}.
$$

\medskip

\begin{lemma} \label{lem:estim}
For all $\gamma \in {\cal K}$, $\sqrt{\rho_\gamma} \in
  H^1(\RR^3)$ and the following inequalities hold true
\begin{eqnarray}
& & \frac 1 2 \| \nabla \sqrt{\rho_\gamma} \|_{L^2}^2 \le \tr(-\Delta
\gamma) 
\label{eq:TvW} \\
& & 0 \le J(\rho_\gamma) \le C (\tr \gamma)^{\frac 32} (\tr(-\Delta
\gamma))^{\frac 12} \label{eq:UBC} \\
& & -4 Z (\tr\gamma)^{\frac 12} (\tr(-\Delta\gamma))^{\frac 12} \le \int_{\RR^3}
\rho_\gamma V \le 0 \label{eq:LBVne} \\
& & - C \left( 
  (\tr \gamma)^{1-\frac {\beta_-} 2}
  (\tr(-\Delta\gamma))^{\frac{3\beta_-}2} + 
 (\tr \gamma)^{1-\frac {\beta_+} 2}
  (\tr(-\Delta\gamma))^{\frac{3\beta_+}2} \right) \le E_{\rm xc}(\rho_\gamma)
  \le 0
\label{eq:UBExc2} \\
& & {\cal E}(\gamma)  \ge  \frac 1 2 \left( (\tr(-\Delta \gamma))^{\frac 12} - 4 Z
  (\tr \gamma)^{\frac 12} \right)^2 - 8 Z^2 \tr \gamma  \nonumber \\ 
& & \qquad \qquad \qquad - C
\left( (\tr\gamma)^{\frac{2-\beta_-}{2 - 3\beta_-}} + 
(\tr\gamma)^{\frac{2-\beta_+}{2 - 3\beta_+}} \right)  \label{eq:estimE}\\
& &  {\cal E}^\infty(\gamma)  \ge  \frac 1 2 \tr(-\Delta \gamma) - C
\left( (\tr\gamma)^{\frac{2-\beta_-}{2 - 3\beta_-}} + 
(\tr\gamma)^{\frac{2-\beta_+}{2 - 3\beta_+}} \right), \label{eq:estimEinfty}
\end{eqnarray}
for a positive constant $C$ independent of $\gamma$. 
In particular, the minimizing sequences of (\ref{eq:EKSpb2}) and those
of (\ref{eq:EKSpb2infty}) are bounded in ${\cal H}$.
\end{lemma}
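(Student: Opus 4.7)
The plan is to prove the seven inequalities in the order listed, since each one rests on the previous ones together with classical functional inequalities. The whole argument is bookkeeping of known estimates; there is no single hard step, and the decisive structural input is the assumption $\beta_\pm<\tfrac{2}{3}$, which alone makes the exchange--correlation part subcritical with respect to the kinetic energy. For any $\gamma\in\mathcal{K}$ I will use the spectral decomposition $\gamma=\sum_{i\ge 1} n_i |\phi_i\rangle\langle\phi_i|$ with $\phi_i\in H^1(\RR^3)$ orthonormal, $n_i\in[0,1]$, $\sum_i n_i=\tr\gamma$, $\sum_i n_i\|\nabla\phi_i\|_{L^2}^2=\tr(|\nabla|\gamma|\nabla|)=\tr(-\Delta\gamma)$, so that $\rho_\gamma(\br)=2\sum_i n_i|\phi_i(\br)|^2\in L^1(\RR^3)$ with $\|\rho_\gamma\|_{L^1}=2\tr\gamma$.

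For (\ref{eq:TvW}), the Hoffmann--Ostenhof-type inequality, I would write $\nabla\rho_\gamma=4\sum_i n_i\phi_i\nabla\phi_i$, apply Cauchy--Schwarz on the sum to get $|\nabla\rho_\gamma|^2\le 8\rho_\gamma\sum_i n_i|\nabla\phi_i|^2$, and integrate the pointwise identity $|\nabla\sqrt{\rho_\gamma}|^2=|\nabla\rho_\gamma|^2/(4\rho_\gamma)$. Inequality (\ref{eq:UBC}) follows from Hardy--Littlewood--Sobolev $J(\rho)\le C\|\rho\|_{L^{6/5}}^2$, interpolation $\|\rho\|_{L^{6/5}}\le\|\rho\|_{L^1}^{3/4}\|\rho\|_{L^3}^{1/4}$, and the Sobolev embedding $H^1(\RR^3)\hookrightarrow L^6(\RR^3)$ applied to $\sqrt{\rho_\gamma}$, which combined with (\ref{eq:TvW}) gives $\|\rho_\gamma\|_{L^3}=\|\sqrt{\rho_\gamma}\|_{L^6}^2\le C\|\nabla\sqrt{\rho_\gamma}\|_{L^2}^2\le C\tr(-\Delta\gamma)$. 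The upper bound in (\ref{eq:LBVne}) is trivial since $V\le 0$; for the lower bound I would apply Cauchy--Schwarz to $\int\rho_\gamma/|\br-\bR_k|$, then Hardy's inequality $\int f^2/|\br-\bR_k|^2\le 4\|\nabla f\|_{L^2}^2$ with $f=\sqrt{\rho_\gamma}$, and sum on $k$ weighted by the nuclear charges $z_k$.

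For the upper bound in (\ref{eq:UBExc2}), (\ref{eq:g0})--(\ref{eq:gprime}) give $g(\rho)\le g(0)=0$ and similarly $h(\rho,\kappa)\le h(0,\kappa)=0$ by (\ref{eq:h0})--(\ref{eq:hprime}). For the lower bound, (\ref{eq:gp}) (resp.\ (\ref{eq:h0infty}) together with $h(0,\kappa)=0$) yields $|g(\rho)|\le C(\rho^{1+\beta_-}+\rho^{1+\beta_+})$ (resp.\ the same bound for $|h(\rho,\kappa)|$, uniformly in $\kappa$) after integration in $s\in[0,\rho]$. Integrating over $\RR^3$ and interpolating $\|\rho_\gamma\|_{L^{1+\beta}}^{1+\beta}\le(\|\rho_\gamma\|_{L^1})^{(2-\beta)/2}(\|\rho_\gamma\|_{L^3})^{3\beta/2}$, valid because $1+\beta\in[1,3]$ for $\beta\in[0,2]$, together with the bounds $\|\rho_\gamma\|_{L^3}\le C\tr(-\Delta\gamma)$ and $\|\rho_\gamma\|_{L^1}=2\tr\gamma$ just established, produces (\ref{eq:UBExc2}).

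Finally, the energy lower bounds (\ref{eq:estimE})--(\ref{eq:estimEinfty}) are obtained by combining (\ref{eq:LBVne}), (\ref{eq:UBExc2}) and $J\ge 0$ with Young's inequality. The essential step is to absorb each factor $(\tr(-\Delta\gamma))^{3\beta_\pm/2}$ appearing in (\ref{eq:UBExc2}) into a fraction of $\tr(-\Delta\gamma)$ via Young with conjugate exponents $(2/(3\beta_\pm),\,2/(2-3\beta_\pm))$; this is precisely where $\beta_\pm<2/3$ enters, as it is what makes the first exponent larger than $1$. Choosing the absorbed amount to be $\tfrac12\tr(-\Delta\gamma)$ leaves $\tfrac12\tr(-\Delta\gamma)-4Z(\tr\gamma)^{1/2}(\tr(-\Delta\gamma))^{1/2}$, and completing the square rewrites this as $\tfrac12((\tr(-\Delta\gamma))^{1/2}-4Z(\tr\gamma)^{1/2})^2-8Z^2\tr\gamma$, giving (\ref{eq:estimE}); (\ref{eq:estimEinfty}) follows by the same argument without the nuclear term and is even simpler. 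Boundedness of minimizing sequences is then immediate: if $\gamma_n$ is minimizing for (\ref{eq:EKSpb2}) or (\ref{eq:EKSpb2infty}), then $\tr\gamma_n=\lambda$ is fixed and $\mathcal{E}(\gamma_n)$ (resp.\ $\mathcal{E}^\infty(\gamma_n)$) is bounded above, so (\ref{eq:estimE}) (resp.\ (\ref{eq:estimEinfty})) forces $\tr(-\Delta\gamma_n)$ to remain bounded, hence $\|\gamma_n\|_{\mathcal{H}}=\tr(\gamma_n)+\tr(|\nabla|\gamma_n|\nabla|)$ is bounded since $\gamma_n\ge 0$.
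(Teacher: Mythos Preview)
Your proof is correct and follows essentially the same route as the paper: spectral decomposition plus Cauchy--Schwarz for (\ref{eq:TvW}), Hardy--Littlewood--Sobolev with interpolation and Sobolev/Gagliardo--Nirenberg for (\ref{eq:UBC}), Cauchy--Schwarz and Hardy for (\ref{eq:LBVne}), and the pointwise bound $|g(\rho)|\le C(\rho^{1+\beta_-}+\rho^{1+\beta_+})$ plus interpolation for (\ref{eq:UBExc2}). Your treatment of (\ref{eq:estimE})--(\ref{eq:estimEinfty}) via Young's inequality with exponents $(2/(3\beta_\pm),2/(2-3\beta_\pm))$ and subsequent completion of the square is in fact more explicit than the paper, which simply declares these ``straightforward consequences'' of the preceding bounds.
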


\medskip

\begin{proof} Any $\gamma \in {\cal K}$ can be diagonalized in an
  orthonormal basis of $L^2(\RR^3)$ as follows
$$
\gamma = \sum_{i=1}^{+\infty} n_i |\phi\rangle \langle \phi_i|
$$
with $n_i \in [0,1]$, $\phi_i \in H^1(\RR^3)$, $\int_{\RR^3}
\phi_i\phi_j = \delta_{ij}$,  
$
\tr(\gamma) = \sum_{i=1}^{+\infty} n_i < \infty$ and $
\tr(-\Delta \gamma) = \sum_{i=1}^{+\infty} n_i \|\nabla \phi_i\|_{L^2}^2
< \infty.
$ 
As 
$$
|\nabla \sqrt{\rho_\gamma}|^2 = 2 \frac{\dps \left|\sum_{i=1}^{+\infty}
    n_i \phi_i \nabla \phi_i\right|^2}{\dps \sum_{i=1}^{+\infty} n_i
  \phi_i^2}, 
$$
(\ref{eq:TvW}) is a straightforward consequence of Cauchy-Schwarz
inequality. 
Using Hardy-Littlewood-Sobolev \cite{LiebLoss}, interpolation, and
Gagliardo-Nirenberg-Sobolev 
inequalities, we obtain
$$
J(\rho_\gamma) \le C_1 \|\rho_\gamma\|_{L^{\frac 65}}^2
\le C_1 \|\rho_\gamma\|_{L^1}^{\frac 32} \|\rho_\gamma\|_{L^3}^{\frac
  12} \le C_2 \|\rho_\gamma\|_{L^1}^{\frac 32} \|\nabla
\sqrt{\rho_\gamma}\|_{L^2}. 
$$
Hence (\ref{eq:UBC}), using (\ref{eq:TvW}) and the relation
$\|\rho_\gamma\|_{L^1} = 2 \tr(\gamma)$.  
It follows from Cauchy-Schwarz and Hardy inequalities and from the above
estimates that
$$
\int_{\RR^3} \frac{\rho_\gamma}{|\cdot -\bR_k|} \le 
2 \|\rho_\gamma\|_{L^1}^{\frac 12} \|\nabla \sqrt{\rho_\gamma}\|_{L^2} \le 4
(\tr \gamma)^{\frac 12} (\tr(-\Delta \gamma))^{\frac 12}.
$$
Hence (\ref{eq:LBVne}). Conditions~(\ref{eq:g0})-(\ref{eq:g0p}) for LDA and
(\ref{eq:h0})-(\ref{eq:h0p}) for GGA imply that $E_{\rm xc}(\rho) \le 0$
and there exists $1 < p_- <
p_+ < \frac 53$ ($p_\pm=1+\beta_\pm$) and some constant $C \in \RR_+$
such that 
\begin{equation} \label{eq:UBExc}
\forall \rho \in {\cal K}, \quad 
|E_{\rm xc}(\rho)| \le C \left( \int_{\RR^3} \rho^{p_-} + 
\int_{\RR^3} \rho^{p_+}  \right),
\end{equation}
from which we deduce (\ref{eq:UBExc2}), using interpolation and
Gagliardo-Nirenberg-Sobolev inequalities. Lastly, the estimates
(\ref{eq:estimE}) and (\ref{eq:estimEinfty}) are straightforward
consequences of (\ref{eq:UBC})-(\ref{eq:UBExc2}). 
\end{proof}

\medskip

\begin{lemma} \label{lem:density}
Let $\lambda > 0$ and $\gamma \in {\cal K}_\lambda$. There
exists a sequence $(\gamma_n)_{n \in \NN}$ such that
\begin{enumerate}
\item for all $n \in \NN$, $\gamma_n \in {\cal K}_\lambda$,
  $\gamma_n$ is finite-rank and $\mbox{Ran}(\gamma_n) \subset
  C^\infty_c(\RR^3)$;
\item $(\gamma_n)_{n \in \NN}$ converges to $\gamma$ strongly in ${\cal H}$;
\item $(\sqrt{\rho_{\gamma_n}})_{n \in \NN}$ converges to
  $\sqrt{\rho_\gamma}$ strongly in $H^1(\RR^3)$;
\item $(\rho_{\gamma_n})_{n \in \NN}$ and $(\nabla
  \sqrt{\rho_{\gamma_n}})_{n \in \NN}$ converge almost everywhere to
  $\rho_\gamma$ and  $\nabla \sqrt{\rho_\gamma}$ respectively.
\end{enumerate}
In particular
\begin{equation} \label{eq:approx}
\lim_{n \to \infty} {\cal E}(\gamma_n) = {\cal E}(\gamma) \quad
\mbox{and} \quad 
\lim_{n \to \infty} {\cal E}^\infty(\gamma_n) = {\cal E}^\infty(\gamma). 
\end{equation}
\end{lemma}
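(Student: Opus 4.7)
The plan is to build $\gamma_n$ by four successive operations on the spectral decomposition of $\gamma$. Using $\gamma\in{\cal K}_\lambda$, diagonalize
$$
\gamma=\sum_{i=1}^{+\infty}n_i|\phi_i\rangle\langle\phi_i|,\qquad n_i\in[0,1],\quad \sum_i n_i=\lambda,\quad \sum_i n_i\|\phi_i\|_{H^1}^2<\infty,
$$
and truncate to rank $K$, with trace $\lambda_K=\sum_{i\le K}n_i\to\lambda$. For each $\phi_i$, density of $C_c^\infty(\RR^3)$ in $H^1(\RR^3)$ (mollification followed by smooth cut-off) furnishes $\psi_i^\varepsilon\in C_c^\infty$ with $\psi_i^\varepsilon\to\phi_i$ in $H^1$. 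For $\varepsilon$ small, the Gram matrix $(\langle\psi_i^\varepsilon,\psi_j^\varepsilon\rangle)_{1\le i,j\le K}$ is close to $I_K$, and Gram--Schmidt produces $\tilde\psi_i^\varepsilon\in C_c^\infty$ that are $L^2$-orthonormal and still converge to $\phi_i$ in $H^1$. Finally, set $\delta_K:=\lambda-\lambda_K\to 0$: for $K$ large, $\delta_K\in[0,1]$; choose $\chi_K^\varepsilon\in C_c^\infty$ of unit $L^2$-norm with support far from those of $\tilde\psi_1^\varepsilon,\ldots,\tilde\psi_K^\varepsilon$ (e.g.\ a fixed bump translated far away), and define
$$
\gamma_n=\sum_{i=1}^{K_n}n_i|\tilde\psi_i^{\varepsilon_n}\rangle\langle\tilde\psi_i^{\varepsilon_n}|+\delta_{K_n}|\chi_{K_n}^{\varepsilon_n}\rangle\langle\chi_{K_n}^{\varepsilon_n}|,
$$
with $K_n\to+\infty$ and $\varepsilon_n\to 0$ chosen by diagonal extraction. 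This gives item~(i): $\gamma_n\in{\cal K}_\lambda$ is finite-rank with range in $C_c^\infty(\RR^3)$.

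For (ii), write $\gamma_n-\gamma=(\gamma_n-\gamma^{(K_n)})+(\gamma^{(K_n)}-\gamma)$ with $\gamma^{(K)}=\sum_{i\le K}n_i|\phi_i\rangle\langle\phi_i|$. The second piece tends to $0$ in ${\cal H}$ since $\sum_{i>K}n_i(1+\|\phi_i\|_{H^1}^2)\to 0$; the first piece is finite-rank and its ${\cal H}$-norm is dominated by $\sum_{i\le K_n}n_i\|\tilde\psi_i^{\varepsilon_n}-\phi_i\|_{H^1}+\delta_{K_n}(1+\|\chi_{K_n}^{\varepsilon_n}\|_{H^1}^2)$, which a suitable diagonal extraction makes arbitrarily small. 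From ${\cal H}$-convergence and Lemma~\ref{lem:estim}, one infers $\rho_{\gamma_n}\to\rho_\gamma$ in $L^1(\RR^3)\cap L^3(\RR^3)$, hence, up to extraction, a.e.

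The main obstacle is the strong $H^1$ convergence of $\sqrt{\rho_{\gamma_n}}$ in (iii), because (\ref{eq:TvW}) is in general strict and the trace-class convergence $\tr(-\Delta\gamma_n)\to\tr(-\Delta\gamma)$ is not in itself sufficient. The strategy is: weak $H^1$ compactness from the uniform bound $\tfrac12\|\nabla\sqrt{\rho_{\gamma_n}}\|_{L^2}^2\le\tr(-\Delta\gamma_n)$; identification of the weak limit as $\sqrt{\rho_\gamma}$ via strong $L^2$-convergence (coming from $L^1$-convergence of $\rho_{\gamma_n}$); and finally upgrade to strong convergence by checking $\|\nabla\sqrt{\rho_{\gamma_n}}\|_{L^2}\to\|\nabla\sqrt{\rho_\gamma}\|_{L^2}$. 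This norm convergence must be established directly from the explicit structure of the sequence: using the Cauchy--Schwarz representation at the beginning of the proof of Lemma~\ref{lem:estim}, the $H^1$-convergence of each $\tilde\psi_i^{\varepsilon_n}$ to $\phi_i$, and a dominated-convergence argument (justified by $\sum_i n_i<\infty$ and the uniform $H^1$-bound), one obtains both a.e.\ convergence of $\nabla\sqrt{\rho_{\gamma_n}}$ and convergence of $L^2$-norms. Brezis--Lieb then delivers (iii), and the a.e.\ convergence piece of (iv) comes along.

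The limits in~(\ref{eq:approx}) follow term by term. The kinetic term converges by (ii); $\int\rho_{\gamma_n}V\to\int\rho_\gamma V$ from $V\in L^{3/2}+L^\infty$ and $L^1\cap L^3$ convergence; $J(\rho_{\gamma_n})\to J(\rho_\gamma)$ via Hardy--Littlewood--Sobolev applied to the $L^{6/5}$ difference. The LDA functional is handled by dominated convergence thanks to (\ref{eq:gp}) and $L^{p_\pm}$-convergence, where $p_\pm=1+\beta_\pm\in(1,5/3)$. For the GGA functional, the bound $|h(\rho,\kappa)|\le C(\rho^{p_-}+\rho^{p_+})+b\kappa$ derived from (\ref{eq:h0})--(\ref{eq:h0EL}), together with the $L^{p_\pm}$ convergence of $\rho_{\gamma_n}$ and the strong $L^2$ convergence of $\nabla\sqrt{\rho_{\gamma_n}}$ provided by (iii), allows a second application of dominated convergence (modulo extraction) and concludes.
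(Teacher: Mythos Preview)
Your overall architecture---diagonalize $\gamma$, truncate to finite rank, replace the eigenfunctions by $C_c^\infty$ approximants, re-orthonormalize, fix the trace, diagonally extract---is the same as the paper's. The one structural difference is how the trace deficit is absorbed: the paper selects an index $N_0$ with $0<n_{N_0}<1$ and sets
\[
\widetilde\gamma_N=\sum_{i=1}^N n_i|\phi_i\rangle\langle\phi_i|+\Big(\lambda-\sum_{i=1}^N n_i\Big)|\phi_{N_0}\rangle\langle\phi_{N_0}|,
\]
whereas you append a far-away bump $\delta_{K_n}|\chi\rangle\langle\chi|$. Both devices work for items~(i), (ii), (iv) and for~(\ref{eq:approx}).

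The gap is in your argument for item~(iii). You write that the convergence $\|\nabla\sqrt{\rho_{\gamma_n}}\|_{L^2}\to\|\nabla\sqrt{\rho_\gamma}\|_{L^2}$ follows from ``a dominated-convergence argument (justified by $\sum_i n_i<\infty$ and the uniform $H^1$-bound)''. A uniform $H^1$ bound on the $\tilde\psi_i^{\varepsilon_n}$ is an \emph{integral} bound and does not furnish a \emph{pointwise} $L^1$ majorant for $|\nabla\sqrt{\rho_{\gamma_n}}|^2$; Lebesgue's theorem is not applicable on that basis alone, and the difficulty is compounded by the fact that the number of terms $K_n$ in your sum grows with $n$ and that Gram--Schmidt mixes the approximants. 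The paper circumvents exactly this by proceeding in two separate stages. In the first (truncation with the $\phi_i$ fixed), Cauchy--Schwarz gives the $N$-independent dominant
\[
|\nabla\sqrt{\rho_{\widetilde\gamma_N}}|^2\ \le\ 2\sum_{i\ge 1}n_i|\nabla\phi_i|^2+2\lambda|\nabla\phi_{N_0}|^2\in L^1(\RR^3).
\]
In the second (regularization at fixed $N$), only finitely many eigenfunctions are involved and the paper \emph{imposes} the pointwise control $|\nabla\phi_{i,k}|\le h_i$ for some fixed $h_i\in L^2(\RR^3)$, which yields a $k$-independent dominant. Your one-step construction lacks any such majorant. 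To repair your route you must either revert to the two-stage scheme, or explicitly require your approximants to satisfy $|\psi_i^{\varepsilon}|+|\nabla\psi_i^{\varepsilon}|\le h_i$ with $h_i\in L^2$ and $\sum_i n_i\|h_i\|_{L^2}^2<\infty$ (this is achievable by passing to rapidly convergent subsequences), and then verify that the Gram--Schmidt corrections, being $O(\varepsilon)$ linear combinations of $\psi_1^\varepsilon,\dots,\psi_i^\varepsilon$, preserve a bound of the same type.
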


\begin{proof} Let $\gamma \in {\cal K}_\lambda$. It holds
$$
\gamma = \sum_{i=1}^{+\infty} n_i |\phi_i\rangle \langle \phi_i|
$$
with $n_i \in [0,1]$, $\phi_i \in H^1(\RR^3)$, $\int_{\RR^3}
\phi_i\phi_j = \delta_{ij}$,  
$\tr(\gamma) = \sum_{i=1}^{+\infty} n_i =\lambda$ and $
\tr(-\Delta \gamma) = \sum_{i=1}^{+\infty} n_i \|\nabla \phi_i\|_{L^2}^2
< \infty$.

We first prove that $\gamma$ can be approached by a sequence of
finite-rank operators.
Let $N_0 \in \NN$ such that $0 < n_{N_0} < 1$ (if no such
$N_0$ exists, then $\gamma$ is finite-rank and one can directly proceed
to the second part of the proof). For all $N \in
\NN$, we set
$$
\widetilde \gamma_N = \sum_{i=1}^N n_i |\phi_i\rangle \langle \phi_i| + 
\left( \lambda - \sum_{i=1}^N n_i \right) 
 |\phi_{N_0} \rangle \langle \phi_{N_0}|. 
$$
For $N$ large enough, $\widetilde \gamma_N \in {\cal
  K}_\lambda$, and the sequence $(\widetilde\gamma_N)$ obviously converges to
$\gamma$ in ${\cal H}$. Besides, $(\rho_{\widetilde\gamma_N})$ converges a.e. to
$\rho_\gamma$ and
$$
|\rho_{\widetilde\gamma_N} - \rho_\gamma | \le \left( n_{N_0} + \lambda -
  \sum_{i=1}^N n_i \right) \phi_{N_0}^2 + \sum_{i=N+1}^{+\infty} n_i
|\phi_i|^2 \le \rho_{\gamma} + \lambda  \phi_{N_0}^2.
$$ 
Hence the convergence of $(\rho_{\widetilde\gamma_N})$ to $\rho_\gamma$ in
$L^p(\RR^3)$ for all $1 \le p \le 3$. Besides, for all $N \ge N_0$,
$$
|\nabla \sqrt{\rho_{\widetilde\gamma_N}}|^2 = 2 \; \frac{\dps \left|
\sum_{i=1, \, i \neq N_0}^N n_i \phi_i \nabla\phi_i + \left(n_{N_0}+\lambda
- \sum_{i=1}^N n_i \right) \phi_{N_0} \nabla \phi_{N_0} \right|^2}
{\dps \sum_{i=1, \, i \neq N_0}^N n_i |\phi_i|^2 + \left(n_{N_0}+\lambda
- \sum_{i=1}^N n_i \right) |\phi_{N_0}|^2} 
\le 2\sum_{i=1}^{+\infty} n_i |\nabla \phi_i|^2 + 2\lambda
|\nabla\phi_{N_0}|^2. 
$$
Using Lebesgue dominated convergence theorem, we obtain that the
sequence $(\|\nabla\sqrt{\rho_{\widetilde\gamma_N}}\|_{L^2})$ converges
to $\|\nabla\sqrt{\rho_\gamma}\|_{L^2}$, from which we deduce that 
$(\sqrt{\rho_{\widetilde\gamma_N}})$ converges to
$\sqrt{\rho_\gamma}$ strongly in $H^1(\RR^3)$.

\medskip

The second part of the proof consists in approaching each $\phi_i$ by a
sequence of regular compactly supported functions. For each $i$, we
consider a sequence $(\phi_{i,k})_{k \in \NN}$ of functions of
$C^\infty_c(\RR^3)$ such that 
\begin{itemize}
\item $\mbox{supp}(\phi_{i,k}) \subset \mbox{supp}(\phi_i)$ and
  $\int_{\RR^2} \phi_{i,k} \phi_{j,k} = \delta_{ij}$ for all $k$,
\item $(\phi_{i,k})_{k \in \NN}$ converges to $\phi_i$
strongly in $H^1(\RR^3)$ and almost everywhere,
\item there exists $h_i \in L^2(\RR^3)$ such that $|\nabla \phi_{i,k}|
  \le h_i$ for all $k$.
\end{itemize}
It is then easy to check that the sequence $(\widetilde \gamma_{N,k})_{k
  \in \NN}$ defined by
$$
\widetilde \gamma_{N,k} = \sum_{i=1}^N n_i |\phi_{i,k}\rangle \langle
\phi_{i,k}| +  \left( \lambda - \sum_{i=1}^N n_i \right) 
 |\phi_{N_0,k} \rangle \langle \phi_{N_0,k}|
$$
converges to $\widetilde \gamma_N$ in ${\cal H}$ and is such that
$(\sqrt{\rho_{\widetilde \gamma_{N,k}}})_{k \in \NN}$ converges
to $\sqrt{\rho_{\widetilde \gamma_{N}}}$ strongly in $H^1(\RR^3)$.

One can then extract from $(\widetilde \gamma_{N,k})_{(N,k) \in \NN^\ast
  \times \NN}$ a subsequence $(\gamma_n)_{n \in \NN}$ which converges to
$\gamma$ in ${\cal H}$ and is such that 
$(\sqrt{\rho_{ \gamma_{n}}})_{n \in \NN}$ converges to
$\sqrt{\rho_{\gamma}}$ strongly in $H^1(\RR^3)$, and there is no
restriction in assuming that $(\rho_{\gamma_n})_{n \in \NN}$ and $(\nabla
  \sqrt{\rho_{\gamma_n}})_{n \in \NN}$ converge almost everywhere to
  $\rho_\gamma$ and  $\nabla \sqrt{\rho_\gamma}$ respectively.

The linear form $\gamma \mapsto \tr(-\Delta \gamma)$ being continuous on
${\cal H}$ and the functionals $u \mapsto \int_{\RR^3} u^2V$ and $u
\mapsto J(u^2) + E_{\rm xc}(u^2)$ being continuous on $H^1(\RR^3)$,
(\ref{eq:approx}) holds true.
\end{proof}

\subsection{Proof of Lemma~\ref{lem:I_lambda}}

Obviously, $I_0=I_0^\infty=0$ and $I_\lambda \le I^\infty_\lambda$ for
all $\lambda \in \RR_+$. 

\medskip

Let us first prove assertion~3. 
Let $0 < \mu < \lambda$, $\epsilon > 0$ and $\gamma \in {\cal
  K}_{\mu}$ such that $I_{\mu} \le {\cal E}(\gamma) \le I_\mu
+\epsilon$. It follows from Lemma~\ref{lem:density} that there is no
restriction in choosing $\gamma$ of the form
$$
\gamma = \sum_{i=1}^N n_i |\phi_i \rangle \langle \phi_i|
$$
with $0 \le n_i \le 1$, $\sum_{i=1}^N n_i = \mu$, $\langle
\phi_i|\phi_j\rangle = \delta_{ij}$ and $\phi_i \in C^\infty_c(\RR^3)$.
Likewise, there exists
$$
\gamma' = \sum_{i=1}^{N'} n_i' |\phi_i' \rangle \langle \phi_i'|
$$
with  $0 \le n_i' \le 1$, $\sum_{i=1}^{N'} n_i' = \lambda - \mu$, $\langle
\phi_i'|\phi_j'\rangle = \delta_{ij}$ and $\phi_i' \in C^\infty_c(\RR^3)$,
such that $I_{\lambda-\mu}^\infty \le {\cal E}^\infty(\gamma') \le
I_{\lambda-\mu}^\infty +\epsilon$. Let $\bf e$ be a unit vector of
$\RR^3$ and $\tau_a$ the translation operator on $L^2(\RR^3)$ defined by
$\tau_af = f(\cdot-a)$ for all $f \in L^2(\RR^3)$. For $n \in \NN$, we define
$$
\gamma_n = \gamma + \tau_{n{\bf e}} \gamma' \tau_{-n{\bf e}}.
$$
It is easy to check that for $n$ large enough, $\gamma_n \in
{\cal K}_\lambda$ and
$$
I_\lambda \le {\cal E}(\gamma_n) \le {\cal E}(\gamma) + {\cal
  E}^\infty(\gamma) +  D(\rho_{\gamma},\tau_{n{\bf e}}\rho_{\gamma'}) \le 
I_\mu + I_{\lambda-\mu}^\infty + 3\epsilon,
$$
where
{$$
D(\rho,\rho') := \int_{\RR^3} \int_{\RR^3} \frac{\rho(\br) \,
  \rho'(\br')}{|\br-\br'|} \, d\br \, d\br'. 
$$}
Hence (\ref{eq:CC1}). 

\medskip

Making use of similar arguments, it can also be proved that
\begin{equation} \label{eq:sub_infty}
I_\lambda^\infty \le I_\mu^\infty + I_{\lambda-\mu}^\infty.
\end{equation}
Let us now consider a function $\phi \in C^\infty_c(\RR^3)$ such that
$\|\phi\|_{L^2} = 1$. For all $\sigma > 0$ and all $0 \le \lambda \le 1$, the
density operator $\gamma_{\sigma,\lambda}$ with density matrix
$$
\gamma_{\sigma,\lambda}(\br,\br') = \lambda \sigma^3 \, \phi(\sigma \br) \,
\phi(\sigma \br') 
$$
is in ${\cal K}_\lambda$. Using (\ref{eq:g0p}) for LDA and
(\ref{eq:h0p}) for GGA, we obtain that there exists
$1 \le \alpha < \frac 32$,
$c > 0$ and $\sigma_0 > 0$ such that for all $0 \le \lambda \le 1$ and all
$0 \le  \sigma \le \sigma_0$,
$$
I^\infty_\lambda \le {\cal E}^\infty(\gamma_{\sigma,\lambda}) \le
\lambda \sigma^2 \int_{\RR^3} |\nabla \phi|^2 + \lambda^2 \sigma
J(2 |\phi|^2) - c \lambda^{\alpha} \sigma^{3(\alpha-1)} 
\int_{\RR^3} |\phi|^{2\alpha}. 
$$
Therefore $I^\infty_\lambda < 0$ for $\lambda$ positive and small
enough. It follows from (\ref{eq:CC1}) and (\ref{eq:sub_infty}) that the
functions $\lambda \mapsto I_\lambda$ and $\lambda \mapsto
I_\lambda^\infty$ are decreasing, and that for all $\lambda > 0$,
$$
-\infty < I_\lambda \le I_\lambda^\infty < 0.
$$
To proceed further, we need the following lemma.

\medskip

\begin{lemma} \label{lem:non_vanishing}
Let $\lambda > 0$ and $(\gamma_n)_{n \in \NN}$ be a
  minimizing sequence for (\ref{eq:EKSpb2}). Then the sequence
  $(\rho_{\gamma_n})_{n \in \NN}$ cannot vanish, which means that
$$
\exists R > 0 \quad \mbox{s.t.} \quad \lim_{n \to \infty} \sup_{x \in
  \RR^3} \int_{x+B_R} \rho_{\gamma_n} > 0.
$$ 
The same holds true for the minimizing sequences of (\ref{eq:EKSpb2infty}).
\end{lemma}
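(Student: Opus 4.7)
My plan is to argue by contradiction, using Lions' first concentration-compactness lemma to upgrade vanishing of $\rho_{\gamma_n}$ to $L^p$-vanishing, and then show that this forces ${\cal E}(\gamma_n)$ to be asymptotically nonnegative, contradicting $I_\lambda < 0$ from part~1 of Lemma~\ref{lem:I_lambda}.

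Suppose vanishing holds: $\sup_{x \in \RR^3} \int_{x+B_R} \rho_{\gamma_n} \to 0$ for every $R > 0$. Lemma~\ref{lem:estim} tells us that $(\gamma_n)$ is bounded in ${\cal H}$, so by (\ref{eq:TvW}) and $\|\rho_{\gamma_n}\|_{L^1}=2\lambda$ the sequence $\sqrt{\rho_{\gamma_n}}$ is bounded in $H^1(\RR^3)$. Applying Lions' vanishing lemma to $\sqrt{\rho_{\gamma_n}}$, we obtain $\sqrt{\rho_{\gamma_n}} \to 0$ strongly in $L^p(\RR^3)$ for every $2 < p < 6$, equivalently $\rho_{\gamma_n} \to 0$ in $L^q(\RR^3)$ for every $1 < q < 3$.

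I would then show that, under this strong vanishing, every term of ${\cal E}(\gamma_n)$ besides the kinetic one goes to zero. For the Hartree term, Hardy-Littlewood-Sobolev gives $0 \le J(\rho_{\gamma_n}) \le C\|\rho_{\gamma_n}\|_{L^{6/5}}^2 \to 0$. For the exchange-correlation term, the bound (\ref{eq:UBExc}) established in the proof of Lemma~\ref{lem:estim}, namely $|E_{\rm xc}(\rho_{\gamma_n})| \le C(\|\rho_{\gamma_n}\|_{L^{p_-}}^{p_-}+\|\rho_{\gamma_n}\|_{L^{p_+}}^{p_+})$ with $1<p_\pm<5/3<3$, yields $E_{\rm xc}(\rho_{\gamma_n}) \to 0$. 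For the nuclear attraction, I would decompose $V = V_1 + V_2$ where $V_1 = V\,\chi_{\bigcup_k B(\bR_k,R)} \in L^{3/2}(\RR^3)$ is compactly supported and $V_2$ is bounded with $\|V_2\|_{L^\infty} \le Z/R$. Then $|\int \rho_{\gamma_n} V_1| \le \|V_1\|_{L^{3/2}} \|\rho_{\gamma_n}\|_{L^3} \to 0$ while $|\int \rho_{\gamma_n} V_2| \le 2\lambda Z/R$; since $R$ can be taken arbitrarily large, $\int_{\RR^3} \rho_{\gamma_n} V \to 0$.

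Combining these, $\liminf_{n \to \infty} {\cal E}(\gamma_n) \ge \liminf_{n\to\infty} \tr(-\Delta \gamma_n) \ge 0$, contradicting $I_\lambda < 0$. The argument for (\ref{eq:EKSpb2infty}) is strictly simpler, as the nuclear term is absent. The only nontrivial step is the treatment of $\int \rho_{\gamma_n} V$: since $V$ belongs to no single $L^p$, the dichotomy between the singular (compactly supported, in $L^{3/2}$) part and the tail (uniformly small but of infinite $L^p$ norm for $p<\infty$) must be handled by the splitting above; everything else is a direct consequence of $L^q$-vanishing for $1<q<3$ together with the a priori bounds already recorded in Lemma~\ref{lem:estim}.
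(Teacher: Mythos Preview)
Your argument is essentially the paper's: both proceed by contradiction, upgrade vanishing of $\rho_{\gamma_n}$ to $L^p$-smallness, show that the nuclear and exchange-correlation contributions tend to zero, and conclude $I_\lambda\ge 0$ against $I_\lambda<0$. The only cosmetic differences are that the paper derives the $L^p$ vanishing by hand (a covering argument on $\ZZ^3$-translates of $B_1$, yielding the range $1<p<5/3$) rather than quoting Lions' lemma, and that it simply drops the nonnegative kinetic and Hartree terms rather than tracking that $J(\rho_{\gamma_n})\to 0$.

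One small slip to fix: Lions' lemma gives $\rho_{\gamma_n}\to 0$ in $L^q(\RR^3)$ for $1<q<3$, strictly excluding the endpoint $q=3$, so the estimate $|\int\rho_{\gamma_n}V_1|\le\|V_1\|_{L^{3/2}}\|\rho_{\gamma_n}\|_{L^3}$ does not directly yield convergence. Since your $V_1$ is compactly supported and $|\cdot-\bR_k|^{-1}\in L^p_{\rm loc}$ for every $p<3$, you have $V_1\in L^{q'}$ for any $q'<3$; taking for instance $q'=2$ and pairing with $\|\rho_{\gamma_n}\|_{L^2}\to 0$ closes the argument. (The paper handles this by choosing $1<p<3/2$ and using $V\in L^{p'}(B_R)$ with $p'>3$.)
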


\medskip

\begin{proof} Let $(\gamma_n)_{n \in \NN}$ be a minimizing sequence for
  (\ref{eq:EKSpb2}). By contradiction, assume that
$$
\forall R > 0, \quad \lim_{n \to \infty} \sup_{x \in
  \RR^3} \int_{x+B_R} \rho_n = 0.
$$ 
Let $1 < p < \frac 53$. For $\rho \ge 0$ such that $\sqrt{\rho} \in
H^1(\RR^3)$, it holds for all $k \in \ZZ^3$,
$$
\int_{k+B_1} \rho^{p} \le \left( \int_{k+B_1} \rho \right)^{p-1} 
\left( \int_{k+B_1}  \rho^{\frac{1}{2-p}} \right)^{2-p} 
\le C_p \left( \int_{k+B_1} \rho \right)^{p-1} 
\left( \int_{k+B_1} (\rho + |\nabla \sqrt{\rho}|^2) \right) 
$$
(where the constant $C_p$ does not depend on $k$). We therefore obtain
\begin{eqnarray*}
\int_{\RR^3} \rho^{p} & \le & \sum_{k \in \ZZ^3} \int_{k+B_1}
\rho^{p} \\
& \le & C_p  \sum_{k \in \ZZ^3} \left( \int_{k+B_1} \rho \right)^{p-1} 
\left( \int_{k+B_1} (\rho + |\nabla \sqrt{\rho}|^2) \right)  \\
& \le & 8 C_p \left(\sup_{x \in \RR^3} \int_{x+B_1} \rho\right)^{p-1}  
\left( \int_{\RR^3} (\rho + \int_{\RR^3} |\nabla \sqrt \rho|^2) \right).
\end{eqnarray*}
Hence, for all $\gamma \in {\cal K}$,
$$
\int_{\RR^3} \rho_{\gamma}^{p} \le 
16 C_p \left(\sup_{x \in \RR^3} \int_{x+B_1} \rho_\gamma\right)^{p-1}
\, \| \gamma \|_{\cal H}^2.  
$$
As we know that any minimizing sequence of (\ref{eq:EKSpb2}) is bounded in
${\cal H}$, we deduce from the above inequality that for all $1 < p <
\frac 53$,  
$$
\lim_{n \to \infty} \int_{\RR^3} \rho_{\gamma_n}^{p} = 0.
$$
In particular, it follows from (\ref{eq:UBExc}) that 
$$
\lim_{n \to \infty} \int_{\RR^3} E_{\rm xc}(\rho_{\gamma_n}) = 0.
$$
Let us now fix $1 < p < \frac 32$, $\epsilon > 0$ and $R > 0$ such that
$|V| \le \epsilon\lambda^{-1}$ on $B_R^c$. For $n$ large enough, we have
$$
\left| \int_{\RR^3} \rho_{\gamma_n} V \right| \le 
 \int_{B_R} \rho_{\gamma_n} |V| + \int_{B_R^c} \rho_{\gamma_n} |V| 
\le \left( \int_{B_R} |V|^{p'} \right)^{\frac{1}{p'}} 
\left( \int_{B_R} \rho_{\gamma_n}^p \right)^{\frac 1p} 
+ \frac{\epsilon}{\lambda} \int_{B_R^c} \rho_{\gamma_n} \le 2 \epsilon.
$$
Therefore 
$$
\lim_{n \to \infty} \int_{\RR^3} \rho_{\gamma_n} V = 0.
$$
As,
$$
{\cal E}(\gamma_n) \ge \int_{\RR^3} \rho_{\gamma_n} V + E_{\rm
  xc}(\rho_{\gamma_n}),
$$
we obtain that $I_\lambda \ge 0$. This is in contradiction with the
previously proved result stating that $I_\lambda < 0$. Hence
$(\rho_{\gamma_n})_{n \in \NN}$ cannot vanish. The case of problem
(\ref{eq:EKSpb2infty}) is easier since the only non-positive term in the
energy functional is $E_{\rm xc}(\rho)$. 
\end{proof}

\noindent 
We can now prove that $I_\lambda < I_\lambda^\infty$. For this purpose 
let us consider a minimizing sequence $(\gamma_n)_{n \in \NN}$ for 
(\ref{eq:EKSpb2infty}).  
We deduce from Lemma~\ref{lem:non_vanishing} that there exists $\eta
>0$ and $R > 0$, such that for $n$ large enough, there exists $x_n \in
\RR^3$ such that
$$
\int_{x_n + B_R} \rho_{\gamma_n} \ge \eta.
$$
Let us introduce $\widetilde \gamma_n = \tau_{\bar x_1 - x_n} \gamma_n
\tau_{x_n-\bar x_1}$. Clearly $\widetilde \gamma_n \in
{\cal K}_\lambda$ and
$$
{\cal E}(\widetilde \gamma_n) \le {\cal E}^\infty(\gamma_n) - \frac{z_1\eta}R.
$$ 
Thus,
$$
I_\lambda \le I_\lambda^\infty - \frac{z_1\eta}R < I_\lambda^\infty.
$$
It remains to prove that the functions $\lambda \mapsto I_\lambda$ and
$\lambda \mapsto I_\lambda^\infty$ are continuous. We will deal here
with the former one, the same arguments applying to the latter one. The
proof is based on the following lemma.

\medskip

\begin{lemma} \label{lem:scalingExc}
  Let $(\alpha_k)_{k \in \NN}$ be a sequence of positive
  real numbers converging to $1$, and $(\rho_k)_{k \in \NN}$ a sequence
  of non-negative densities such that $(\sqrt{\rho_k})_{k \in \NN}$ is
  bounded in $H^1(\RR^3)$. Then
$$
\lim_{k \to \infty} \left( E_{\rm xc}(\alpha_k\rho_k)-E_{\rm xc}(\rho_k)
  \right) = 0.
$$
\end{lemma}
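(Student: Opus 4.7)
The idea is to write the difference $E_{\rm xc}(\alpha_k\rho_k)-E_{\rm xc}(\rho_k)$ as an integral of a derivative along a path in parameter space, then control the integrand using the growth hypotheses together with the fact that $(\sqrt{\rho_k})$ bounded in $H^1(\RR^3)$ implies, by Sobolev embedding, that $(\rho_k)$ is bounded in $L^q(\RR^3)$ for every $1\le q\le 3$, and that $(\nabla\sqrt{\rho_k})$ is bounded in $L^2(\RR^3)$.

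\textbf{LDA case.} By the fundamental theorem of calculus,
$$
g(\alpha_k\rho_k(\br))-g(\rho_k(\br)) = (\alpha_k-1)\rho_k(\br)\int_0^1 g'\bigl(\rho_k(\br)+t(\alpha_k-1)\rho_k(\br)\bigr)\,dt.
$$
Assumption (\ref{eq:gp}) gives $|g'(s)|\le C(s^{\beta_-}+s^{\beta_+})$ with $\beta_\pm<2/3$, so the integrand is pointwise bounded by $C|\alpha_k-1|\,(\rho_k^{1+\beta_-}+\rho_k^{1+\beta_+})$ (up to a factor uniformly bounded in $\alpha_k$ near $1$). Since $1+\beta_\pm<5/3<3$, these powers of $\rho_k$ are integrable and their $L^1$-norms are uniformly bounded, by the Sobolev embedding just mentioned. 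Integrating and letting $k\to\infty$ yields the conclusion in the LDA case.

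\textbf{GGA case.} Here I would split the variation in the two arguments of $h$. Using the identity $|\nabla\sqrt{\alpha\rho}|^2=\alpha|\nabla\sqrt{\rho}|^2$, write
$$
h\bigl(\alpha_k\rho_k,\tfrac12|\nabla\sqrt{\alpha_k\rho_k}|^2\bigr)-h\bigl(\rho_k,\tfrac12|\nabla\sqrt{\rho_k}|^2\bigr) = A_k+B_k,
$$
with
$$
A_k = h\bigl(\alpha_k\rho_k,\tfrac12|\nabla\sqrt{\rho_k}|^2\bigr)-h\bigl(\rho_k,\tfrac12|\nabla\sqrt{\rho_k}|^2\bigr),\quad
B_k = h\bigl(\alpha_k\rho_k,\tfrac{\alpha_k}{2}|\nabla\sqrt{\rho_k}|^2\bigr)-h\bigl(\alpha_k\rho_k,\tfrac12|\nabla\sqrt{\rho_k}|^2\bigr).
$$
For $A_k$ I apply the mean value theorem in the first variable and bound the $\rho$-derivative via (\ref{eq:h0infty}); the argument is then identical to the LDA case, using $\int(\rho_k^{1+\beta_-}+\rho_k^{1+\beta_+})$ bounded. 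For $B_k$ I apply the mean value theorem in the second variable; assumption (\ref{eq:h0EL}) yields $|\partial_\kappa h|\le\max(|a-1|,|b-1|)$, hence
$$
\int_{\RR^3}|B_k|\,d\br \;\le\; C|\alpha_k-1|\int_{\RR^3}\tfrac12|\nabla\sqrt{\rho_k}|^2\,d\br,
$$
which tends to $0$ by the $H^1$-boundedness of $(\sqrt{\rho_k})$.

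\textbf{Main obstacle.} There is no deep obstacle; the only thing to watch is that the nonlinear substitution $\rho\mapsto\alpha_k\rho$ in the GGA functional affects both the density and its gradient term, which is why the splitting into $A_k$ and $B_k$ is needed. The growth condition (\ref{eq:h0infty}) on $\partial_\rho h$ and the two-sided bound (\ref{eq:h0EL}) on $\partial_\kappa h$ are exactly what makes each piece controllable by uniformly bounded quantities, so the result follows once $|\alpha_k-1|\to 0$.
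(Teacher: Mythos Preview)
Your proof is correct and follows essentially the same route as the paper: bound the difference by $|\alpha_k-1|$ times an integral of powers $\rho_k^{1+\beta_\pm}$ (via the mean value theorem and the growth condition on $\partial_\rho g$ or $\partial_\rho h$), plus in the GGA case an additional $|\alpha_k-1|\int|\nabla\sqrt{\rho_k}|^2$ term coming from the bound on $\partial_\kappa h$, and conclude using the $H^1$-bound on $\sqrt{\rho_k}$. Your explicit $A_k+B_k$ splitting is exactly the mechanism behind the paper's one-line estimate.
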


\medskip

\begin{proof}
In the LDA setting, we deduce from (\ref{eq:g0p}) that there exists $1
< p_- \le p_+ < \frac 53$ and $C \in \RR_+$ such that for $k$ large
enough
$$
\left| E_{\rm xc}^{\rm LDA}(\alpha_k\rho_k)-E_{\rm xc}^{\rm LDA}(\rho_k) \right|
\le C \left|\alpha_k-1\right| \int_{\RR^3} (\rho_k^{p_-}+\rho_k^{p_+}).
$$
In the GGA setting, we obtain from (\ref{eq:h0infty}) and
(\ref{eq:h0EL}) that there 
exists $1 < p_- \le p_+ < \frac 53$ and $C \in \RR_+$ such that for $k$
large enough
$$
\left| E_{\rm xc}^{\rm GGA}(\alpha_k\rho_k)-E_{\rm xc}^{\rm GGA}(\rho_k) \right|
\le C \left|\alpha_k-1\right| \int_{\RR^3}
(\rho_k^{p_-}+\rho_k^{p_+}+|\nabla \sqrt{\rho_k}|^2). 
$$
As $(\sqrt{\rho_k})_{k \in \NN}$ is bounded in $H^1(\RR^3)$,
$(\rho_k)_{k \in \NN}$ is bounded in $L^p(\RR^3)$ for all $1 \le p \le
3$ and $(\nabla\sqrt{\rho_k})_{k \in \NN}$ is bounded in
$(L^2(\RR^3))^3$, hence the result. 
\end{proof}

\medskip

We can now complete the proof of Lemma~\ref{lem:I_lambda}.

\medskip

\noindent
{\it Left-continuity of $\lambda \mapsto I_\lambda$.} Let $\lambda > 0$, and $(\lambda_k)_{k \in \NN}$
be an increasing sequence of positive real numbers converging to
$\lambda$. Let $\epsilon > 0$ and $\gamma \in {\cal
  K}_\lambda$ such that  
$$
I_\lambda \le {\cal E}(\gamma) \le I_\lambda + \frac \epsilon 2.
$$
For all $k \in \NN$, $\gamma_k = \lambda_k \lambda^{-1} \gamma$ is in
${\cal K}_{\lambda_k}$ so that
$$
\forall k \in \NN, \quad \forall n \in \NN, \quad 
I_\lambda \le I_{\lambda_k} \le {\cal E}(\gamma_{k}).   
$$
Besides,
$$
{\cal E}(\gamma_{k}) = \frac{\lambda_k}{\lambda} \tr(-\Delta\gamma) + 
\frac{\lambda_k}{\lambda} \int_{\RR^3} \rho_\gamma V + 
\frac{\lambda_k^2}{\lambda^2} J(\rho_\gamma) +
E_{\rm xc}\left( \frac{\lambda_k}{\lambda} \rho_\gamma \right)
\mathop{\longrightarrow}_{k \to \infty} {\cal E}(\gamma)
$$
in virtue of Lemma~\ref{lem:scalingExc}. Thus 
$$
I_\lambda \le I_{\lambda_k} \le I_\lambda + \epsilon
$$
for $k$ large enough.

\medskip

\noindent
{\it Right-continuity of $\lambda \mapsto I_\lambda$.} Let $\lambda >
0$, and $(\lambda_k)_{k \in \NN}$ 
be an decreasing sequence of positive real numbers converging to
$\lambda$. For each $k \in \NN$, we choose $\gamma_k \in 
{\cal K}_{\lambda_k}$ such that 
$$
I_{\lambda_k} \le {\cal E}(\gamma_k) \le I_{\lambda_k}+\frac 1 k.
$$
For all $k \in \NN$, we set $\widetilde \gamma_k = \lambda
\lambda_k^{-1} \gamma_k$. As $\widetilde \gamma_k \in 
{\cal K}_\lambda$, it holds
$$
I_\lambda \le {\cal E}(\widetilde \gamma_k) =
\frac{\lambda}{\lambda_k} \tr(-\Delta\gamma_k) + 
\frac{\lambda}{\lambda_k} \int_{\RR^3} \rho_{\gamma_k} V + 
\frac{\lambda}{\lambda_k^2} J(\rho_{\gamma_k}) +
E_{\rm xc}\left( \frac{\lambda}{\lambda_k} \rho_{\gamma_k} \right).
$$
As $(\gamma_k)_{k \in \NN}$ is bounded in ${\cal H}$ and
$(\sqrt{\rho_{\gamma_k}})_{k \in \NN}$ is bounded in $H^1(\RR^3)$, we
deduce from Lemma~\ref{lem:scalingExc} that
$$
\lim_{k \to \infty} \left( {\cal E}(\widetilde \gamma_k) - {\cal
    E}(\gamma_k) \right) = 0.
$$
Let $\epsilon > 0$ and $k_\epsilon \ge 2\epsilon^{-1}$ such that for all $k \ge
k_\epsilon$, 
$$
\left|  {\cal E}(\widetilde \gamma_k) - {\cal E}(\gamma_k) \right| \le
\frac\epsilon 2.
$$
Then,
$$
\forall k \ge k_\epsilon, \quad 
I_\lambda - \epsilon \le I_{\lambda_k} \le I_\lambda.
$$
This proves the right-continuity of $\lambda \mapsto I_\lambda$ on
$\RR_+ \setminus \left\{0\right\}$.
Lastly, it results from the estimates established in
Lemma\ref{lem:estim} that  
$$
\lim_{\lambda \to 0^+} I_\lambda = 0.
$$

\subsection{Proof of Theorem~\ref{th:LDA}}

Let us first prove the following lemma.

\medskip

\begin{lemma} \label{lem:conv_LDA}
Let $(\gamma_n)_{n \in \NN}$ be a sequence of elements of
  ${\cal K}$, bounded in ${\cal H}$, which converges to $\gamma$ for the
  weak-$*$ topology of ${\cal H}$. If $\lim_{n\to\infty}\tr(\gamma_n) =
  \tr(\gamma)$, 
  then $(\rho_{\gamma_n})_{n \in \NN}$ converges
  to $\rho_\gamma$ strongly in $L^p(\RR^3)$ for all $1 \le p < 3$ and
$$
{\cal E}^{\rm LDA}(\gamma) \le \liminf_{n \to \infty} {\cal E}^{\rm
  LDA}(\gamma_n) \quad \mbox{and} \quad 
{\cal E}^{{\rm LDA},\infty}(\gamma) \le \liminf_{n \to \infty} {\cal E}^{{\rm
  LDA},\infty}(\gamma_n).
$$
\end{lemma}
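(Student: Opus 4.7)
The plan is in two steps. First, I would upgrade the weak-$*$ convergence of $(\gamma_n)$ to strong convergence of the one-body densities $\rho_{\gamma_n}$ in $L^p(\RR^3)$ for every $1 \le p < 3$. Then I would pass to the limit (or obtain lower semicontinuity) in each of the four terms of ${\cal E}^{\rm LDA}$ separately.

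For the density convergence, I would use that boundedness of $(\gamma_n)$ in ${\cal H}$ implies uniform boundedness of $\tr(-\Delta \gamma_n)$, and hence by (\ref{eq:TvW}) of $\sqrt{\rho_{\gamma_n}}$ in $H^1(\RR^3)$; by Sobolev embedding, $(\rho_{\gamma_n})$ is in particular bounded in $L^3(\RR^3)$. The hypothesis $\tr(\gamma_n) \to \tr(\gamma)$, combined with positivity and weak-$*$ convergence in $\gS_1$, upgrades weak-$*$ convergence to strong $\gS_1$-convergence (this is the operator analogue of the Radon-Riesz property of $L^1$ applied to positive sequences). Taking diagonals then yields $\rho_{\gamma_n} \to \rho_\gamma$ strongly in $L^1(\RR^3)$, and interpolation with the $L^3$-bound gives strong convergence in $L^p(\RR^3)$ for every $1 \le p < 3$. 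Up to extraction, one may also assume pointwise a.e.\ convergence.

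For the energy, the kinetic term $\gamma \mapsto \tr(-\Delta\gamma)$ is weakly-$*$ lower semicontinuous on ${\cal H}$, by writing it as the supremum of the weakly-$*$ continuous truncations $\tr(|\nabla| \chi_{[0,R]}(-\Delta)\, \gamma\, \chi_{[0,R]}(-\Delta)|\nabla|)$ as $R\to\infty$. The nuclear attraction $\int_{\RR^3}\rho_\gamma V$ is \emph{continuous} along the sequence thanks to $V \in L^{3/2}(\RR^3) + L^\infty(\RR^3)$ (with vanishing $L^\infty$-tail) combined with the $L^1$- and $L^p$-convergence of the densities for some $p > 3/2$. The Hartree term $J(\rho)$ is continuous with respect to the $L^{6/5}$-norm by Hardy-Littlewood-Sobolev, hence continuous along $(\rho_{\gamma_n})$. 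Finally, (\ref{eq:gp}) yields $|g(\rho)| \le C(\rho^{1+\beta_-} + \rho^{1+\beta_+})$ with $1+\beta_\pm < 5/3 < 3$; together with pointwise a.e.\ convergence of the densities and strong $L^{1+\beta_\pm}$-convergence providing uniform integrability, Vitali's theorem gives $\int_{\RR^3} g(\rho_{\gamma_n}) \to \int_{\RR^3} g(\rho_\gamma)$. The case of ${\cal E}^{{\rm LDA},\infty}$ is identical, omitting only the nuclear term.

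The main obstacle is really the first step: extracting $L^1$-convergence of the densities from weak-$*$ convergence in ${\cal H}$ plus the single numerical condition $\tr(\gamma_n) \to \tr(\gamma)$. Without this improvement one could only assert convergence of $\rho_{\gamma_n}$ on compact sets, which would be insufficient to handle the nonlinear LDA term at infinity. Once the strong $\gS_1$-convergence is in hand, the remaining manipulations are routine applications of Sobolev-type inequalities and the growth estimates (\ref{eq:g0})-(\ref{eq:g0p}) on $g$.
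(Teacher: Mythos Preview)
Your proof is correct, and the overall architecture matches the paper's. The one technical slip is in the kinetic term: the spectral truncations $\chi_{[0,R]}(-\Delta)$ are \emph{not} compact operators, so $\gamma \mapsto \tr\big(|\nabla|\gamma|\nabla|\,\chi_{[0,R]}(-\Delta)\big)$ is not weakly-$*$ continuous in the sense used here (testing against compacts). The fix is immediate: replace the spectral cutoffs by finite-rank projections $P$, so that $\tr(|\nabla|\gamma|\nabla|) = \sup_P \tr(|\nabla|\gamma|\nabla|\,P)$ with each term genuinely weakly-$*$ continuous. This is exactly what the paper does, phrased as an expansion in an orthonormal basis $(\psi_k)$ of $H^1$-vectors together with Fatou's lemma.

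The interesting difference is in how you obtain strong $L^1$-convergence of the densities. The paper argues at the level of the densities throughout: it first shows $\rho_{\gamma_n} \rightharpoonup \rho_\gamma$ in ${\cal D}'(\RR^3)$ by testing against $W \in C^\infty_c(\RR^3)$ (using compactness of $(1+|\nabla|)^{-1}W(1+|\nabla|)^{-1}$), then combines the $H^1$-bound on $\sqrt{\rho_{\gamma_n}}$ from~(\ref{eq:TvW}) with the equality $\|\sqrt{\rho_{\gamma_n}}\|_{L^2}^2 = 2\tr(\gamma_n) \to 2\tr(\gamma) = \|\sqrt{\rho_\gamma}\|_{L^2}^2$ to upgrade weak to strong $L^2$ convergence of the square roots, and bootstraps. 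You instead invoke the operator-theoretic Radon--Riesz property: for positive trace-class operators, weak-$*$ convergence together with convergence of traces forces trace-norm convergence, and the density map $\gamma \mapsto \rho_\gamma$ is Lipschitz from $\gS_1$ to $L^1$. Your route is more abstract but shorter, and avoids the intermediate step of identifying the distributional limit of $\rho_{\gamma_n}$; the paper's route stays closer to the analysis of $\sqrt{\rho_\gamma}$ in $H^1$, which is the natural variable for the von~Weizs\"acker-type estimate used elsewhere in the argument.
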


\medskip

\begin{proof}
The fact that $(\gamma_n)_{n \in \NN}$ converges to $\gamma$ for the
weak-$*$ topology of ${\cal H}$ means that for all compact operator $K$ on
$L^2(\RR^3)$, 
$$
\lim_{n \to \infty}\tr(\gamma_nK) = \tr(\gamma K) \quad \mbox{and} \quad
\lim_{n \to \infty}\tr(|\nabla|\gamma_n|\nabla|K) =
\tr(|\nabla|\gamma|\nabla| K). 
$$
For all $W \in C^\infty_c(\RR^3)$, the operator 
$(1+|\nabla|)^{-1}W(1+|\nabla|)^{-1}$ is compact (it is even in the
Schatten class $\gS_p$
for all $p > \frac 32$ in virtue of the Kato-Seiler-Simon
inequality~\cite{Simon}), yielding 
\begin{eqnarray*} 
\int_{\RR^3} \rho_{\gamma_n} W & = & 2 \, \tr(\gamma_nW) = 2 \, 
\tr((1+|\nabla|)\gamma_n(1+|\nabla|)(1+|\nabla|)^{-1}W(1+|\nabla|)^{-1})
\\
& \dps \mathop{\rightarrow}_{n \to \infty} & 2 \, 
\tr((1+|\nabla|)\gamma(1+|\nabla|)(1+|\nabla|)^{-1}W(1+|\nabla|)^{-1})= 2 \, \tr(\gamma W) = \int_{\RR^3} \rho_{\gamma}W
\end{eqnarray*}
Hence, $(\rho_{\gamma_n})_{n \in \NN}$ converges to $\rho_\gamma$ in
${\cal D}'(\RR^3)$. As by (\ref{eq:TvW}), $(\sqrt{\rho_{\gamma_n}})_{n \in
  \NN}$ is bounded in $H^1(\RR^3)$, it follows that
$(\sqrt{\rho_{\gamma_n}})_{n \in \NN}$ converges to $\sqrt{\rho_\gamma}$
weakly in $H^1(\RR^3)$, and strongly in $L^p_{\rm loc}(\RR^3)$ for all $2
\le p < 6$. In particular, $(\sqrt{\rho_{\gamma_n}})_{n \in \NN}$
converges to $\sqrt{\rho_\gamma}$ weakly in $L^2(\RR^3)$. But we also
know that 
$$
\lim_{n \to \infty} \|\sqrt{\rho_{\gamma_n}}\|_{L^2}^2 = 
\lim_{n \to \infty}  \int_{\RR^3} \rho_{\gamma_n} =  2
\lim_{n\to\infty}\tr(\gamma_n) = 2  \tr(\gamma) = \int_{\RR^3} \rho_\gamma
= \|\sqrt{\rho_{\gamma}}\|_{L^2}^2.
$$
Therefore, the convergence of $(\sqrt{\rho_{\gamma_n}})_{n \in \NN}$
to $\sqrt{\rho_\gamma}$ holds strongly in $L^2(\RR^3)$. By an elementary
bootstrap
argument exploiting the boundedness of $(\sqrt{\rho_{\gamma_n}})_{n \in
  \NN}$ in $H^1(\RR^3)$, we obtain that  $(\sqrt{\rho_{\gamma_n}})_{n
  \in \NN}$ converges strongly to $\sqrt{\rho_\gamma}$ in $L^p(\RR^3)$
for all $2 \le p < 6$, hence that $(\rho_{\gamma_n})_{n \in \NN}$
converges to $\rho_\gamma$ strongly in $L^p(\RR^3)$ for all $1 \le p <
3$. This readily implies
\begin{eqnarray*}
&& \lim_{n \to \infty} \int_{\RR^3} \rho_{\gamma_n} V = 
\int_{\RR^3} \rho_{\gamma} V \\
&& \lim_{n \to \infty} J(\rho_{\gamma_n}) =
J(\rho_\gamma) \\
&& \lim_{n \to \infty} E_{\rm xc}^{\rm LDA}(\rho_{\gamma_n}) =
E_{\rm xc}^{\rm LDA}(\rho_{\gamma}) . 
\end{eqnarray*}
Lastly, for any orthonormal basis $(\psi_k)_{k \in \NN^\ast}$ of
$L^2(\RR^3)$ such that $\psi_k \in H^1(\RR^3)$ for all $k$, we have 
\begin{eqnarray*}
\tr(|\nabla|\gamma|\nabla|) & = & \sum_{k=1}^{+\infty} 
\langle \psi_k||\nabla|\gamma|\nabla||\psi_k \rangle \\
& = &  \sum_{k=1}^{+\infty} \tr(\gamma (||\nabla|\psi_k\rangle \langle
|\nabla|\psi_k|)) \\
& = & \sum_{k=1}^{+\infty} \lim_{n \to \infty} 
\tr(\gamma_n (||\nabla|\psi_k\rangle \langle
|\nabla|\psi_k|)) \\
& \le &  \liminf_{n \to \infty}   \sum_{k=1}^{+\infty} \tr(\gamma_n
(||\nabla|\psi_k\rangle \langle |\nabla|\psi_k|)) \\
& = &  \liminf_{n \to \infty} \tr(|\nabla|\gamma_n|\nabla|). 
\end{eqnarray*}
We thus obtain the desired result.
\end{proof}

\noindent
We are now in position to prove Theorem~\ref{th:LDA}.
Let $(\gamma_n)_{n \in \NN}$ be a minimizing sequence for
$I_\lambda$. We know from Lemma~\ref{lem:estim} that
$(\gamma_n)_{n \in \NN}$ is bounded in $\cal H$ and that
$(\sqrt{\rho_{\gamma_n}})_{n \in \NN}$ is bounded 
in $H^1(\RR^3)$. Replacing $(\gamma_n)_{n \in \NN}$ by a suitable
subsequence, we can assume that $(\gamma_n)$ converges to some $\gamma
\in {\cal K}$ for the weak-$\ast$ topology of ${\cal H}$ and that
$(\sqrt{\rho_{\gamma_n}})_{n \in \NN}$ converges to $\sqrt{\rho_\gamma}$
weakly in $H^1(\RR^3)$, strongly in $L^p_{\rm loc}(\RR^3)$ for all $2
\le p < 6$ and almost everywhere.

\medskip

If $\tr(\gamma) = \lambda$, then $\gamma \in {\cal K}_\lambda$
and according to Lemma~\ref{lem:conv_LDA}, 
$$
{\cal E}^{\rm LDA}(\gamma) \le \liminf_{n \to +\infty} {\cal E}^{\rm
  LDA}(\gamma_n) = I_\lambda 
$$
yielding that $\gamma$ is a minimizer of (\ref{eq:EKSpb2}).

\medskip

The rest of the proof consists in rulling out the eventuality when
$\tr(\gamma) < 
\lambda$. Let us therefore set $\alpha = \tr(\gamma)$ and assume that 
$0 \le \alpha < \lambda$. Following e.g. \cite{FLSS}, we consider a quadratic
partition of the unity $\xi^2+\chi^2=1$, where $\xi$ is a smooth,
radial function, nonincreasing in the radial direction, such that
$\xi(0)=1$, $0 \le \xi(x) < 1$ if $|x| > 0$, $\xi(x) = 0$ if $|x|
\ge 1$, $\|\nabla \xi\|_{L^\infty} \le 2$ and $\|\nabla (1-\xi^2)^{\frac
  12} \|_{L^\infty} \le 2$. We then set $\xi_R(\cdot) = \xi \left(
  \frac{\cdot}{R} 
\right)$. For all $n \in \NN$, $R \mapsto \tr(\xi_R\gamma_n\xi_R)$ is
a continuous nondecreasing function which vanishes at $R=0$ and
converges to $\tr(\gamma_n) = \lambda$ when $R$ goes to infinity. Let
$R_n > 0$ be such that $\tr(\xi_{R_n}\gamma_n\xi_{R_n})=\alpha$.
The
sequence $(R_n)_{n \in \NN}$ goes to infinity; otherwise, it would
contain a subsequence $(R_{n_k})_{k \in \NN}$ converging to a
finite value $R^\ast$, and we would then get
$$
\int_{\RR^3} \rho_\gamma(x) \xi_{R^\ast}^2(x) \, dx =
\lim_{n \to \infty}  \int_{\RR^3} \rho_{\gamma_n}(x) \xi_{R_n}^2(x) \, dx
= 2\lim_{n\to \infty} \tr(\xi_{R_n}\gamma_n\xi_{R_n}) = 2\alpha = \int_{\RR^3}
\rho_\gamma(x) \, dx. 
$$
As $\xi_{R^\ast}^2 < 1$ on $\RR^3 \setminus \left\{0\right\}$, we reach
a contradiction. Consequently, $(R_n)_{n \in \NN}$ indeed goes to
infinity. Let us now introduce
$$
\gamma_{1,n} = \xi_{R_n}\gamma_n\xi_{R_n} \quad \mbox{and} \quad
\gamma_{2,n} = \chi_{R_n}\gamma_n\chi_{R_n}.
$$
Note that $\gamma_{1,n}$ and $\gamma_{2,n}$ are trace-class
self-adjoint operators on $L^2(\RR^3)$ such that $0 \le \gamma_{j,n} \le
1$, that $\rho_{\gamma_n} = \rho_{\gamma_{1,n}}+ \rho_{\gamma_{2,n}}$
and that $\tr(\gamma_{1,n}) = \alpha$ while $\tr(\gamma_{2,n}) = \lambda
- \alpha$. Besides, using the IMS formula
$$
-\Delta = \chi_{R_n}(-\Delta)\chi_{R_n} + \xi_{R_n}(-\Delta)\xi_{R_n} 
-|\nabla \chi_{R_n}|^2 -|\nabla \xi_{R_n}|^2,   
$$
it holds
\begin{eqnarray}
\tr(-\Delta \gamma_n) & = & \tr(- \Delta \gamma_{1,n}) +
\tr(- \Delta \gamma_{2,n}) - \tr((|\nabla \chi_{R_n}|^2 +|\nabla
\xi_{R_n}|^2) \gamma_n) \nonumber \\
& \ge &  \tr(- \Delta \gamma_{1,n}) + \tr(- \Delta \gamma_{2,n})
- \frac{4\lambda}{R_n^2}, \label{eq:IMS}
\end{eqnarray}
from which we infer that both $(\gamma_{1,n})_{n \in \NN}$ and 
 $(\gamma_{2,n})_{n \in \NN}$ are bounded sequences of ${\cal H}$. As
 for all $\phi \in C^\infty_c(\RR^3)$, 
\begin{eqnarray*}
\tr(\gamma_{1,n}(|\phi\rangle\langle\phi|)) & = & 
\tr(\gamma_n(|\xi_{R_n}\phi\rangle\langle\xi_{R_n}\phi|)) \\
& = & 
\tr(\gamma_n(|(\xi_{R_n}-1)\phi\rangle\langle\xi_{R_n}\phi|)) 
 +\tr(\gamma_n(|\phi\rangle\langle(\xi_{R_n}-1)\phi|)) 
 +\tr(\gamma_n(|\phi\rangle\langle\phi|)) \\
& \dps \mathop{\longrightarrow}_{n
  \to \infty} & \tr(\gamma (|\phi\rangle\langle\phi|)),
\end{eqnarray*}
we obtain that $(\gamma_{1,n})_{n \in \NN}$ converges to $\gamma$ for
the weak-$*$ topology of ${\cal H}$. Since $\tr(\gamma_{1,n}) = \alpha =
\tr(\gamma)$ for all $n$, we deduce from Lemma~\ref{lem:conv_LDA} that 
$(\rho_{\gamma_{1,n}})_{n \in \NN}$ converges to $\rho_\gamma$ strongly
in $L^p(\RR^3)$ for all $1 \le p < 3$, and that
\begin{equation} \label{eq:LSI_LDA}
{\cal E}^{\rm LDA}(\gamma) \le \lim_{n \to \infty} {\cal E}^{\rm LDA}(\gamma_{1,n}).
\end{equation}
As a by-product, we also obtain that $(\rho_{\gamma_{2,n}})_{n \in \NN}$
converges strongly to zero in $L^p_{\rm loc}(\RR^3)$ for all $1 \le p <
3$ (since $\rho_{\gamma_{2,n}} = \rho_{\gamma_n}-\rho_{\gamma_{1,n}}$
with $(\rho_{\gamma_{n}})_{n \in \NN}$ and $(\rho_{\gamma_{1,n}})_{n \in
  \NN}$ both converging to $\rho_{\gamma}$ in $L^p_{\rm loc}(\RR^3)$ for
all $1 \le p < 3$).
Besides, using again (\ref{eq:IMS}), it holds
\begin{eqnarray*}
{\cal E}^{\rm LDA}(\gamma_n) & = & \tr(-\Delta \gamma_n) + \int_{\RR^3}
\rho_{\gamma_n} V + J(\rho_{\gamma_n}) +
\int_{\RR^3} g(\rho_{\gamma_n}) \\
& \ge &  \tr(- \Delta \gamma_{1,n}) + \tr(- \Delta \gamma_{2,n})
+  \int_{\RR^3} \rho_{\gamma_{1,n}} V +  \int_{\RR^3}
\rho_{\gamma_{2,n}} V \\ & & + J(\rho_{\gamma_{1,n}}) +
J(\rho_{\gamma_{2,n}}) +  \int_{\RR^3} 
g(\rho_{\gamma_{1,n}}+\rho_{\gamma_{2,n}}) -  \frac{4\lambda}{R_n^2} \\
& = & {\cal E}^{\rm LDA}(\gamma_{1,n}) + {\cal E}^{{\rm LDA},\infty}(\gamma_{2,n}) +
\int_{\RR^3} \rho_{\gamma_{2,n}} V \\ 
& &  + \int_{\RR^3} (g(\rho_{\gamma_{1,n}}+\rho_{\gamma_{2,n}}) 
- g(\rho_{\gamma_{1,n}}) - g(\rho_{\gamma_{2,n}})) -  \frac{4\lambda}{R_n^2}.
\end{eqnarray*}
For $R$ large enough, one has on the one hand
$$
\left| \int_{\RR^3} \rho_{\gamma_{2,n}} V \right| 
\le 2 Z \left( \int_{B_R} \rho_{\gamma_{2,n}} \right)^{\frac 12}
\|\nabla\sqrt{\rho_{\gamma_{2,n}}}\|_{L^2}  
+ \frac{2Z(\lambda-\alpha)} R,
$$
and on the other hand
\begin{eqnarray*}
\left| \int_{\RR^3} (g(\rho_{\gamma_{1,n}}+\rho_{\gamma_{2,n}}) 
- g(\rho_{\gamma_{1,n}}) - g(\rho_{\gamma_{2,n}})) \right| & \le & 
 \int_{B_R} \left| g(\rho_{\gamma_{1,n}}+\rho_{\gamma_{2,n}}) -
   g(\rho_{\gamma_{1,n}}) \right| + 
\int_{B_R} \left| g(\rho_{\gamma_{2,n}}) \right| \\ & + & 
 \int_{B_R^c} \left| g(\rho_{\gamma_{1,n}}+\rho_{\gamma_{2,n}}) - 
g(\rho_{\gamma_{2,n}}) \right| +
\int_{B_R^c} \left| g(\rho_{\gamma_{1,n}}) \right| \\
& \le & C \left( \int_{B_R} (\rho_{\gamma_{2,n}}+\rho_{\gamma_{2,n}}^2)
  + \|\rho_{\gamma_{1,n}}\|_{L^2} 
\left(  \int_{B_R} \rho_{\gamma_{2,n}}^2 \right)^{\frac 12} \right) \\
& +  &  C \left( \int_{B_R} \rho_{\gamma_{2,n}}^{p_-} +
  \rho_{\gamma_{2,n}}^{p_+}  \right)    \\
& +  &  C \left( \int_{B_R^c}
  (\rho_{\gamma_{1,n}}+\rho_{\gamma_{1,n}}^2) +
  \|\rho_{\gamma_{2,n}}\|_{L^2} 
\left(  \int_{B_R^c} \rho_{\gamma_{1,n}}^2 \right)^{\frac 12} \right) \\
& +  &  C \left( \int_{B_R^c} \rho_{\gamma_{1,n}}^{p_-} +
  \rho_{\gamma_{1,n}}^{p_+} \right) 
\end{eqnarray*}
for some constant $C$ independent of $R$ and $n$.
Yet, we know that $(\sqrt{\rho_{\gamma_{1,n}}})_{n \in \NN}$ and
$(\sqrt{\rho_{\gamma_{1,n}}})_{n \in \NN}$ are bounded in $H^1(\RR^3)$, that 
$(\rho_{\gamma_{1,n}})_{n \in \NN}$ converges to $\rho_{\gamma}$ in
$L^p(\RR^3)$ for all $1 \le p < 3$ and that 
$(\rho_{\gamma_{2,n}})_{n \in \NN}$ converges to $0$ in
$L^p_{\rm loc}(\RR^3)$ for all $1 \le p < 3$.
Consequently, there exists for all $\epsilon > 0$, some $N \in \NN$ such
that for all $n \ge N$, 
$$
{\cal E}^{\rm LDA}(\gamma_n) \ge  {\cal E}^{\rm LDA}(\gamma_{1,n}) + {\cal
  E}^{{\rm LDA},\infty}(\gamma_{2,n}) - \epsilon \ge  I_\alpha +
I^\infty_{\lambda-\alpha} - \epsilon. 
$$
Letting $n$ go to infinity, $\epsilon$ go to zero, and using
(\ref{eq:CC1}), we obtain that $I_\lambda = 
I_\alpha + I^\infty_{\lambda-\alpha}$ and that $(\gamma_{1,n})_{n \in \NN}$ 
and $(\gamma_{2,n})_{n \in \NN}$ are minimizing sequences for $I_\alpha$
and $I^\infty_{\lambda-\alpha}$ respectively. It also follows from
(\ref{eq:LSI_LDA}) that $\gamma$ is a minimizer for $I_\alpha$.
In
particular $\gamma$ satisfies the Euler equation
$$
\gamma = 1_{(-\infty,\epsilon_{\rm F})}(H_{\rho_\gamma}) + \delta
$$
for some Fermi level $\epsilon_{\rm F} \in \RR$, where 
$$
H_{\rho_\gamma} = -\frac 1 2 \Delta + V + \rho_\gamma \star
|\br|^{-1} + g'(\rho_\gamma),
$$
and where $0 \le \delta \le 1$, $\mbox{Ran}(\delta) \subset
\mbox{Ker}(H_{\rho_\gamma}-\epsilon_{\rm F})$. 
As $V + \rho_\gamma \star
|\br|^{-1} + g'(\rho_\gamma)$ is $\Delta$-compact, the
essential spectrum of $H_{\rho_\gamma}$ is $[0,+\infty)$. Besides, 
$H_{\rho_\gamma}$ is bounded from below, 
$$
H_{\rho_\gamma} \le  -\frac 1 2 \Delta + V + \rho_\gamma \star
|\br|^{-1} ,
$$
and we know from \cite[Lemma~II.1]{Lions} that as $-\sum_{k=1}^M z_k + \int_{\RR^3}
\rho_\gamma = -Z + 2 \alpha < -Z+ 2 \lambda \le 0$, the right hand side
operator has infinitely many negative eigenvalues of finite
multiplicities. Therefore, so has $H_{\rho_\gamma}$. Eventually,
$\epsilon_{\rm F} < 0$ and 
$$
\gamma = \sum_{i=1}^{n} |\phi_i\rangle \langle \phi_i|  +
\sum_{i=n+1}^{m} n_i |\phi_i\rangle \langle \phi_i|
$$
where $0 \le n_i \le 1$ and where 
$$
-\frac 1 2 \Delta \phi_i+ V \phi_i + \left( \rho_\gamma \star
|\br|^{-1} \right) \phi_i + g'(\rho_\gamma) \, \phi_i =
\epsilon_i  \phi_i
$$
$\epsilon_1 < \epsilon_2 \le \epsilon_3 \le \cdots < 0$ denoting the
negative eigenvalues of $H_{\rho_\gamma}$ including multiplicities (by
standard arguments the ground state eigenvalue of $H_{\rho_\gamma}$ is
non-degenerate). It then follows from elementary elliptic regularity
results that all the $\phi_i$, hence $\rho_\gamma$, are in $H^2(\RR^3)$
and therefore vanish at infinity. Using Lemma~\ref{lem:exp_decay}, all
the $\phi_i$ decay exponentially fast to zero at infinity.

\medskip

Let us now analyze more in details the sequence $(\gamma_{2,n})_{n \in \NN}$.
As it is a minimizing sequence for $I^\infty_{\lambda-\alpha}$,
$(\rho_{\gamma_{2,n}})_{n \in \NN}$ cannot vanish, so that there exists
$\eta > 0$, $R > 0$ and such for all $n \in \NN$, $\int_{y_n + B_R}
\rho_{\gamma_{2,n}} \ge \eta$ for some $y_n \in \RR^3$. Thus, the sequence
$(\tau_{y_n}\gamma_{2,n}\tau_{-y_n})_{n \in \NN}$ converges for the weak-$\ast$
topology of ${\cal H}$ to some $\gamma' \in {\cal K}$ satisfying
$\tr(\gamma') \ge \eta > 0$. Let $\beta = \tr(\gamma')$.
Reasoning as
above, one can easily check that $\gamma'$ is a minimizer for
$I^\infty_{\beta}$, and that $I_\lambda = I_\alpha + I^\infty_\beta +
I^\infty_{\lambda-\alpha-\beta}$. Besides,
$$
\gamma' = 1_{(-\infty,\epsilon_{\rm F}')}(H^\infty_{\rho_{\gamma'}})+
\delta'
$$
where
$$
H^\infty_{\rho_{\gamma'}} = -\frac 1 2 \Delta + \rho_{\gamma'} \star
|\br|^{-1}  + g'(\rho_{\gamma'}),
$$
and where $0 \le \delta' \le 1$, $\mbox{Ran}(\delta') \subset 
\mbox{Ker}(H^\infty_{\rho_{\gamma'}}-\epsilon_{\rm F}')$, and 
$\epsilon_{\rm F'} \le 0$. 

\medskip

Assume for a while that one can choose $\epsilon_{\rm F'} < 0$. Then 
$$
\gamma' = \sum_{i=1}^{n'} |\phi_i'\rangle \langle \phi_i'|  +
\sum_{i=n'+1}^{m'} n_i' |\phi_i'\rangle \langle \phi_i'|,
$$
all the $\phi_i$'s being in $C^\infty(\RR^3)$ and decaying exponentially
fast at infinity. For $n \in \NN$ large
enough, the operator 
$$
\gamma_n = \min \left( 1 ,\|\gamma +
  \tau_{n{\bf e}}\gamma'\tau_{-n{\bf e}}\|^{-1} \right) 
(\gamma + \tau_{n{\bf e}}\gamma'\tau_{-n{\bf e}})
$$
then is in ${\cal K}$ and $\tr(\gamma_n) \le (\alpha+\beta)$. 
As both the $\phi_i$'s and the $\phi_i'$'s decay
exponentially fast to zero, a simple calculation shows that there exists
some $\delta > 0$ such that for $n$ large enough 
$$
{\cal E}^{\rm LDA}(\gamma_n) = {\cal E}^{\rm LDA}(\gamma) + {\cal
  E}^{{\rm LDA},\infty}(\gamma')  
- \frac{2\alpha(Z-2\beta)}{n} + O(e^{-\delta n}) = I^\alpha + I^\infty_\beta
- \frac{2\alpha(Z-2\beta)}{n} + O(e^{-\delta n}). 
$$
Hence, for $n$ large enough 
$$
I_{\alpha + \beta} \le I_{\tr(\gamma_n)} \le {\cal E}^{\rm LDA}(\gamma_n) <
I^\alpha + I^\infty_\beta . 
$$
Adding $I^\infty_{\lambda-\alpha-\beta}$ to both sides, we obtain that
$$
I_\lambda \le  I_{\alpha + \beta} + I^\infty_{\lambda-\alpha-\beta}
< I^\alpha + I^\infty_\beta + I^\infty_{\lambda-\alpha-\beta},
$$
which obviously contradicts the previously established equality
$I_\lambda = I_\alpha + I^\infty_\beta + I^\infty_{\lambda-\alpha-\beta}$.

\medskip 

It remains to exclude the case when $\epsilon_{\rm F'}$ has to be chosen
equal to zero. In this case, $0$ is an eigenvalue of
$H_{\rho_{\gamma'}}^\infty$ and there exists $\psi \in
\mbox{Ker}(H_{\rho_{\gamma'}}^\infty) \subset H^2(\RR^3)$ such that
$\|\psi\|_{L^2} = 1$ and $\gamma'\psi = \mu \psi$ with $\mu > 0$. We
then define for $0 < \eta < \mu$ and $n \in \NN$, 
\begin{eqnarray*}
\gamma_{n,\eta} & = &\min \left( 1 ,\|\gamma + \eta
  |\phi_{m+1}\rangle\langle\phi_{m+1}| +
  \tau_{n{\bf e}}(\gamma'-\eta|\psi\rangle\langle\psi|)\tau_{-n{\bf e}}\|^{-1}
\right) \\
& &  \qquad
(\gamma + \eta
  |\phi_{m+1}\rangle\langle\phi_{m+1}| +
  \tau_{n{\bf e}}(\gamma'-\eta|\psi\rangle\langle\psi|)\tau_{-n{\bf e}}).
\end{eqnarray*}
As $\gamma_{n,\eta}$ is in ${\cal K}$ and such that
$\tr(\gamma_{n,\eta}) \le \lambda$, it holds
$$
I_{\lambda} \le I_{\tr(\gamma_{n,\eta})} \le {\cal E}^{\rm LDA}(\gamma_{n,\eta}) .
$$
It is then easy to show that
$$
\lim_{n \to \infty}{\cal E}^{\rm LDA}(\gamma_{n,\eta}) = 
{\cal E}^{\rm LDA}(\gamma+ \eta |\phi_{m+1}\rangle\langle\phi_{m+1}|) 
+ {\cal E}^{^{\rm LDA},\infty}(\gamma'-\eta|\psi\rangle\langle\psi|).
$$
Besides, for $\eta> 0$ small enough
$$
{\cal E}^{\rm LDA}(\gamma+ \eta |\phi_{m+1}\rangle\langle\phi_{m+1}|) 
+ {\cal E}^{{\rm LDA},\infty}(\gamma'-\eta|\psi\rangle\langle\psi|) = 
{\cal E}^{\rm LDA}(\gamma) + {\cal E}^{{\rm LDA},\infty}(\gamma') + 2\eta
\epsilon_{m+1} + 
o(\eta).
$$
Reasoning as above, we obtain that for $\eta > 0$ small enough
$$
I_\lambda \le I_\lambda + 2\eta \epsilon_{m+1} +
o(\eta),
$$
which is in contradiction with the fact that $\epsilon_{m+1}$ is
negative. The proof is complete.

\subsection{Proof of Theorem~\ref{th:GGA_2e}}
\label{sec:GGA_2e}

For $\phi \in H^1(\RR^3)$, we set $\rho_\phi(x) = 2 |\phi(x)|^2$ and
$$
E(\phi) = \int_{\RR^3} |\nabla \phi|^2 + \int_{\RR^3} \rho_\phi V +
J(\rho_\phi) + E_{\rm xc}^{\rm GGA}(\rho_\phi).
$$
For all $\phi \in H^1(\RR^3)$ such that $\|\phi\|_{L^2} = 1$,
$\gamma_\phi = |\phi\rangle \langle \phi| \in {\cal K}_1$ and
${\cal E}(\gamma_\phi)=E(\phi)$. Therefore,
$$
I_1 \le \inf \left\{E(\phi), \; \phi \in H^1(\RR^3), \; \int_{\RR^3}
  |\phi|^2 = 1 \right\}.
$$
Conversely, for all $\gamma \in {\cal K}_1$, $\phi_\gamma =
\sqrt{\frac{\dps \rho_\gamma}2}$ satisfies $\phi_\gamma \in H^1(\RR^3)$,
$\|\phi\|_{L^2} = 1$ and
$$
{\cal E}^{\rm GGA}(\gamma) = {\cal E}^{\rm GGA}(|\phi_\gamma \rangle
\langle \phi_\gamma|) + 
\tr(-\Delta \gamma) - \frac 1 2 \int_{\RR^3} |\nabla
\sqrt{\rho_\gamma}|^2 \ge  {\cal E}^{\rm GGA}(|\phi_\gamma \rangle \langle
\phi_\gamma|) = E(\phi_\gamma). 
$$
Consequently, 
\begin{equation} \label{eq:EKS2e}
I_1 = \inf \left\{E(\phi), \; \phi \in H^1(\RR^3), \; \int_{\RR^3}
  |\phi|^2 = 1 \right\}
\end{equation}
and (\ref{eq:minRKS2}) has a minimizer for $N_p=1$, if and only if
(\ref{eq:EKS2e}) has a minimizer $\phi$ ($\gamma_\phi$ then is a
minimizer of (\ref{eq:minRKS2}) for $N_p=1$). We are therefore led to study
the minimization problem (\ref{eq:EKS2e}). In the GGA setting we are
interested in, $E(\phi)$ can be rewritten as
$$
E(\phi) = \int_{\RR^3} |\nabla \phi|^2 + \int_{\RR^3} \rho_\phi V +
J(\rho_\phi) + \int_{\RR^3} h(\rho_\phi,|\nabla \phi|^2).
$$
Conditions (\ref{eq:h0})-(\ref{eq:h0EL}) guarantee that $E$ is Fréchet
differentiable on $H^1(\RR^3)$ (see \cite{these_Arnaud} for details) and
that for all $(\phi,w) \in H^1(\RR^3) 
\times H^1(\RR^3)$,
$$
E'(\phi)\cdot w = 2\bigg( \frac 12 \int_{\RR^3} \left( 1 +
  \frac{\partial h}{\partial\kappa}\left(\rho_{\phi},|\nabla
    \phi|^2\right) \right)  \nabla \phi \cdot \nabla w + \int_{\RR^3} 
\left( V + \rho_{\phi} \star |\br|^{-1}  + 
 \frac{\partial h}{\partial \rho}\left(\rho_{\phi},|\nabla \phi|^2\right)
\right) \phi w \bigg).
$$
We now embed (\ref{eq:EKS2e}) in the family of problems 
\begin{equation} \label{eq:newPb}
J_\lambda = \inf \left\{E(\phi), \; \phi \in H^1(\RR^3), \;
  \int_{\RR^3} |\phi|^2 = \lambda \right\}
\end{equation}
and introduce the problem at infinity
\begin{equation} \label{eq:newPbinfty}
J_\lambda^\infty = \inf \left\{E^\infty(\phi), \; \phi \in H^1(\RR^3), \;
  \int_{\RR^3} |\phi|^2 = \lambda \right\}
\end{equation}
where
$$
E^\infty(\phi) = \int_{\RR^3} |\nabla \phi|^2 + 
J(\rho_\phi) + \int_{\RR^3} h(\rho_\phi,|\nabla \phi|^2).
$$
Note that reasoning as above, one can see that $J_\lambda = I_\lambda$
and $J_\lambda^\infty=I_\lambda^\infty$ for all $0 \le \lambda \le 1$
(while these equalities do not {\it a priori} hold true for $\lambda > 1$).

\medskip

The rest of this section consists in proving that (\ref{eq:newPb}) has a
minimizer for all $0 \le \lambda \le 1$. Let us start with a simple
lemma.

\medskip

\begin{lemma}\label{lem:preExistGGA} Let $0 \le \mu \le 1$ and let 
  $(\phi_n)_{n \in \NN}$ be a minimizing sequence for $J_\mu$
  (resp. for $J_\mu^\infty$) which converges to some $\phi \in
  H^1(\RR^3)$ weakly in $H^1(\RR^3)$. Assume that $\|\phi\|_{L^2}^2 =
  \mu$. Then $\phi$ is a minimizer for $J_\mu$ (resp. for $J_\mu^\infty$).
\end{lemma}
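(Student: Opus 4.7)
The plan is to show $E(\phi)\le\liminf_n E(\phi_n)=J_\mu$; since the hypothesis $\|\phi\|_{L^2}^2=\mu$ makes $\phi$ admissible for $J_\mu$, this at once gives the minimizer claim. The case of $J_\mu^\infty$ is strictly identical, dropping the $\int\rho V$ contribution.

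First I would upgrade the mode of convergence. Since $\phi_n\rightharpoonup\phi$ weakly in $H^1(\RR^3)$ and $\|\phi_n\|_{L^2}^2=\mu=\|\phi\|_{L^2}^2$, the parallelogram identity in $L^2(\RR^3)$ forces $\phi_n\to\phi$ strongly in $L^2(\RR^3)$. Interpolating with the $H^1$-bound and the Sobolev embedding $H^1\hookrightarrow L^6$ yields $\phi_n\to\phi$ strongly in $L^p(\RR^3)$ for all $2\le p<6$, whence $\rho_{\phi_n}=2\phi_n^2\to\rho_\phi$ strongly in $L^q(\RR^3)$ for all $1\le q<3$, and, after extraction, almost everywhere. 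With this in hand, the ``LDA-type'' pieces are routine: $\rho\mapsto\int\rho V$ is continuous on $L^1\cap L^3$ since $V\in L^{3/2}+L^\infty$, and $J(\cdot)$ is continuous on $L^{6/5}$ by Hardy-Littlewood-Sobolev.

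The delicate piece is the gradient-dependent contribution
$$T(\psi):=\int_{\RR^3}|\nabla\psi|^2+\int_{\RR^3}h(\rho_\psi,|\nabla\psi|^2).$$
To handle it, I would split $T(\phi_n)=\int G(\rho_\phi,\nabla\phi_n)+\int\bigl[G(\rho_{\phi_n},\nabla\phi_n)-G(\rho_\phi,\nabla\phi_n)\bigr]$ with $G(\rho,p):=|p|^2+h(\rho,|p|^2)$. A direct computation gives
$$\nabla_p^2G(\rho,p)=2\bigl(1+\partial_\kappa h(\rho,|p|^2)\bigr)I+4\,\partial_\kappa^2 h(\rho,|p|^2)\,p\otimes p,$$
whose eigenvalues are $2(1+\partial_\kappa h)$ (multiplicity $2$, orthogonal to $p$) and $2(1+\partial_\kappa h+2|p|^2\,\partial_\kappa^2 h)$ (along $p$); both are nonnegative by (\ref{eq:h0EL}) and (\ref{eq:h0SCI}). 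Hence $p\mapsto G(\rho,p)$ is convex for each fixed $\rho$, and the classical weak lower semicontinuity theorem for integral functionals convex in the gradient variable, with Carath\'eodory dependence on $x$ through the fixed $\rho_\phi$, yields $\int G(\rho_\phi,\nabla\phi)\le\liminf_n\int G(\rho_\phi,\nabla\phi_n)$.

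The main obstacle is controlling the error $\int\bigl[h(\rho_{\phi_n},|\nabla\phi_n|^2)-h(\rho_\phi,|\nabla\phi_n|^2)\bigr]$, in which the argument $|\nabla\phi_n|^2$ is only weakly controlled in $L^1$. The key observation is that (\ref{eq:h0infty}) bounds $|\partial_\rho h(\rho,\kappa)|$ by $C(\rho^{\beta_-}+\rho^{\beta_+})$ \emph{uniformly in} $\kappa$, so the mean-value estimate
$$\bigl|h(\rho_{\phi_n},\kappa)-h(\rho_\phi,\kappa)\bigr|\le C|\rho_{\phi_n}-\rho_\phi|\bigl(\rho_{\phi_n}^{\beta_-}+\rho_\phi^{\beta_-}+\rho_{\phi_n}^{\beta_+}+\rho_\phi^{\beta_+}\bigr)$$
holds independently of $\kappa$. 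H\"older's inequality together with the strong $L^q(\RR^3)$ convergence of $\rho_{\phi_n}$ to $\rho_\phi$ ($1\le q<3$, with the needed higher integrability coming from the $H^1$-bound on $\sqrt{\rho_{\phi_n}}$) drives this error to zero. Summing the three contributions yields $E(\phi)\le\liminf_n E(\phi_n)=J_\mu$, which is the desired conclusion.
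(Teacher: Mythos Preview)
Your proof is correct and follows essentially the same route as the paper's: upgrade to strong $L^p$ convergence from the $L^2$-norm equality, freeze $\rho_\phi$ and use convexity of $p\mapsto|p|^2+h(\rho_\phi,|p|^2)$ for weak lower semicontinuity of the gradient part, then control the $\rho$-swap error via the $\kappa$-uniform bound on $\partial_\rho h$ from (\ref{eq:h0infty}). The only cosmetic differences are that you spell out the Hessian eigenvalues (the paper just asserts convexity), and the paper packages the error term as $C\|\phi_n-\phi\|_{L^2}$ by writing $|\rho_{\phi_n}-\rho_\phi|=2|\phi_n-\phi||\phi_n+\phi|$ before applying H\"older, which is equivalent to your estimate.
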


\medskip

\begin{proof}
Let $(\phi_n)_{n \in \NN}$ be a minimizing sequence for $J_\mu$ which
converges to $\phi$ weakly in $H^1(\RR^3)$.
For almost all $x \in \RR^3$, the function $z \mapsto |z|^2 +
h(\rho_\phi(x),|z|^2)$ is convex on $\RR^3$. Besides the function $t \mapsto
t+h(\rho_\phi(x),t)$ is Lipschitz on $\RR_+$,
uniformly in $x$. It follows that the functional
$$
\psi \mapsto \int_{\RR^3} \left( |\nabla \psi|^2 +
  h(\rho_\phi,|\nabla\psi|^2)\right)
$$  
is convex and continuous on $H^1(\RR^3)$. As $(\phi_n)_{n \in \NN}$
converges to $\phi$ weakly in $H^1(\RR^3)$, we get
$$
\int_{\RR^3} \left( |\nabla \phi|^2 +
  h(\rho_\phi,|\nabla\phi|^2)\right) \le \lim_{n \to \infty}
\int_{\RR^3} \left( |\nabla \phi_n|^2 +
  h(\rho_\phi,|\nabla\phi_n|^2)\right). 
$$
Besides, we deduce from (\ref{eq:h0infty}) that
$$
\left| \int_{\RR^3} \left( h(\rho_{\phi_n},|\nabla\phi_n|^2)  -
  h(\rho_\phi,|\nabla\phi_n|^2)\right) \right| \le 
C \|\phi_n -\phi\|_{L^2},
$$
where the constant $C$ only depends on $h$ and on the $H^1$ bound of
$(\phi_n)_{n \in \NN}$. As $(\phi_n)_{n \in \NN}$ converges to $\phi$ weakly in $L^2(\RR^3)$ and
as $\|\phi\|_{L^2} = \|\phi_n\|_{L^2}$ for all $n \in \NN$, the
convergence of $(\phi_n)_{n \in \NN}$ to $\phi$ holds strongly in
$L^2(\RR^3)$. Therefore,
\begin{eqnarray*}
\int_{\RR^3} |\nabla \phi|^2 + E_{\rm xc}^{\rm GGA}(\rho_{\phi}) & = & 
\int_{\RR^3} \left( |\nabla \phi|^2 +
  h(\rho_\phi,|\nabla\phi|^2)\right) \\ & \le &
\liminf_{n \to \infty} \int_{\RR^3} \left( |\nabla \phi_n|^2 +
  h(\rho_\phi,|\nabla\phi_n|^2)\right) \\ & & + \lim_{n \to \infty}
\int_{\RR^3} \left( h(\rho_{\phi_n},|\nabla\phi_n|^2)  -
  h(\rho_\phi,|\nabla\phi_n|^2) \right)
\\
& = & \liminf_{n \to \infty} 
\int_{\RR^3} |\nabla \phi_n|^2 + E_{\rm xc}^{\rm GGA}(\rho_{\phi_n}).
\end{eqnarray*}
Finally, as $(\phi_n)_{n \in \NN}$ is bounded in $H^1$ and converges
strongly to $\phi$ in $L^2(\RR^3)$, we infer that the convergence holds
strongly in $L^p(\RR^3)$ for all $2 \le p < 6$, yielding
$$
\lim_{n \to \infty} \int_{\RR^3} \rho_{\phi_n} V + J(\rho_{\phi_n}) =
\int_{\RR^3} \ \rho_{\phi} V + J(\rho_{\phi}). 
$$
Therefore,
$$
E(\phi) \le \liminf_{n \to \infty} E(\phi_n) = I_\mu.
$$
As $\|\phi\|_{L^2}^2=\mu$, $\phi$ is a minimizer for $J_\mu$. Obviously,
the same arguments can be applied to a minimizing sequence for $J_\mu^\infty$.
\end{proof}

\medskip

In order to prove that the minimizing sequences for $J_\lambda$ (or at
least some of 
them) are indeed precompact in $L^2(\RR^3)$, we will use  
the concentration-compactness method due to 
P.-L. Lions~\cite{PLL-CC}. Consider an Ekeland sequence $(\phi_n)_{n \in
  \NN}$ for (\ref{eq:newPb}), that is \cite{Ekeland} a sequence $(\phi_n)_{n \in
  \NN}$ such that
\begin{eqnarray}
&& \forall n \in \NN, \quad \phi_n \in H^1(\RR^3) \quad \mbox{and} \quad
\int_{\RR^3} \phi_n^2=\lambda \label{eq:phin1} \\ 
&& \lim_{n \to +\infty} E(\phi_n) = J_\lambda \label{eq:phin2} \\
&& \lim_{n \to +\infty} E'(\phi_n) + \theta_n \phi_n = 0 \quad \mbox{in
} H^{-1}(\RR^3) \label{eq:phin3}
\end{eqnarray}
for some sequence $(\theta_n)_{n \in \NN}$ of real numbers. As
on the one hand, $|\phi| \in H^1(\RR^3)$ and $E(|\phi|)=E(\phi)$ for
all $\phi \in H^1(\RR^3)$, and as on the other hand, the function
$\lambda \mapsto J_\lambda$ is decreasing on $[0,1]$, we can assume that 
\begin{equation} \label{eq:sign}
\forall n \in \NN, \quad \phi_n \ge 0 \mbox{ a.e. on } \RR^3 
\quad \mbox{and} \quad \theta_n \ge 0.
\end{equation}
Lastly, up to extracting subsequences, there is no restriction in
assuming the following convergences:
\begin{eqnarray}
& & \phi_n \rightharpoonup \phi \mbox{ weakly in } H^1(\RR^3),
\label{eq:phin5}  \\
& & \phi_n \rightarrow \phi \mbox{ strongly in } L^p_{\rm loc}(\RR^3)
\mbox{ for all } 2 \le p < 6 \label{eq:phin6} \\
& & \phi_n \rightarrow \phi \mbox{ a.e. in } \RR^3 \label{eq:phin7} \\
& & \theta_n \rightarrow \theta \mbox{ in } \RR, \label{eq:phin8}
\end{eqnarray}
and it follows from (\ref{eq:sign}) that $\phi \ge 0$ a.e. on $\RR^3$
and $\theta \ge 0$. Note that the Ekeland condition
(\ref{eq:phin3}) also reads
\begin{eqnarray} \nonumber 
\!\!\!\!\!\!\!\!\!\!
&& - \frac 1 2 \div \left( \left( 1 + \frac{\partial h}{\partial
      \kappa}\left(\rho_{\phi_n},|\nabla \phi_n|^2\right) 
\right) \nabla \phi_n 
\right) + \left( V + \rho_{\phi_n} \star |\br|^{-1} + 
 \frac{\partial h}{\partial \rho}\left(\rho_{\phi_n},|\nabla \phi_n|^2\right)
\right) \phi_n + \theta_n \phi_n \\
\!\!\!\!\!\!\!\!\!\!
&& \qquad \qquad = \eta_n \qquad \mbox{with} \qquad 
\eta_n \mathop{\longrightarrow}_{n \to 0} 0 \label{eq:Ekeland}
\mbox{ in } H^{-1}(\RR^3).
\end{eqnarray}
We can apply to the sequence $(\phi_n)_{n \in \NN}$ the following
version of the concentration-compactness lemma.

\medskip

\begin{lemma}[Concentration-compactness lemma~\cite{PLL-CC}] \label{lem:concom}
Let $\lambda > 0$ and 
$(\phi_n)_{n \in \NN}$ be a bounded sequence in $H^1(\mathbb{R}^3)$ such that 
$$ 
\forall n \in \NN, \quad \int_{\mathbb{R}^N} \phi_n^2 =\lambda.
$$
Then one can extract from $(\phi_n)_{n \in \NN}$ a subsequence
$(\phi_{n_k})_{k \in \NN}$ such that one of the following three conditions
holds true: \\
\begin{enumerate}
\item (Compactness) There exists a sequence $(y_k)_{k \in \NN}$  in
  $\mathbb{R}^3$, such that for all $\epsilon > 0$, there exists $R >0$
  such that    
$$ 
\forall k \in \NN, \quad \int_{y_k + B_R} \phi_{n_k}^{2}  \geq \lambda- \epsilon.
$$ 

\item (Vanishing) 
For all $R>0$,
$$ 
\lim_{k \to \infty} \sup_{y \in \mathbb{R}^3} \int_{y+B_R} \phi_{n_k}^{2} =
0.
$$

\item (Dichotomy) There exists $0 < \delta < \lambda$, such that for all
  $\epsilon >0$ there exists 
  \begin{itemize}
  \item a sequence $(y_k)_{k \in \NN}$ of points of $\RR^3$,
  \item a positive real number $R_1$ and a sequence of positive real numbers
    $(R_{2,k})_{k \in \NN}$ converging to $+\infty$,
  \item two sequences
  $(\phi_{1,k})_{k \in \NN}$ and $(\phi_{2,k})_{n \in \NN}$ bounded in
  $H^1(\mathbb{R}^3)$ (uniformly in $\epsilon$) 
  \end{itemize}
such that for all $k$: 
\begin{eqnarray*}
\left\{
\begin{aligned}
& \phi_{n_k} = \phi_{1,k} \quad \mbox{on } y_k+B_{R_1} \\
& \phi_{n_k} = \phi_{2,k} \quad \mbox{on } \RR^3 \setminus
(y_k+B_{R_{2,k}}) \\ 
&\left| \int_{\mathbb{R}^3} \phi_{1,k}^2-\delta \right| \leq
\epsilon, \quad
\left| \int_{\mathbb{R}^3} \phi_{2,k}^2-(\lambda-\delta) \right| \leq
\epsilon \\ 
&\lim_{k \to \infty} \mathrm{dist(Supp \; \phi_{1,k}, Supp \; \phi_{2,k})}
=  \infty \\ 
& \|\phi_{n_k}-\left(\phi_{1,k}+\phi_{2,k} \right)\|_{L^p(\RR^3)} \leq
C_p \, \epsilon^{\frac{6-p}{2p}} \quad \mbox{ for all } \; 2 \leq p < 6 \\ 
& \|\phi_{n_k} \|_{L^p(y_k+(B_{R_{2,k}}\setminus \overline{B}_{R_1}))} \leq
C_p \, \epsilon^{\frac{6-p}{2p}} \quad \mbox{ for all } \; 2 \leq p < 6 \\ 
&\liminf_{k \to \infty} \int_{\mathbb{R}^3} \left(\left|\nabla \phi_{n_k}
  \right|^2-\left|\nabla \phi_{1,k} \right|^2 - \left|\nabla \phi_{2,k}
  \right|^2 \right) \geq - C \epsilon, 
\end{aligned}
\right.
\end{eqnarray*}
\end{enumerate}
where the constants $C$ and $C_p$ only depend on the $H^1$ bound of
$(\phi_n)_{n \in \NN}$.
\end{lemma}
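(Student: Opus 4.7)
My approach follows P.-L. Lions' original argument based on the L\'evy concentration function
$$Q_n(t) = \sup_{y \in \RR^3} \int_{y+B_t} \phi_n^2, \qquad t \ge 0.$$
Each $Q_n$ is nondecreasing on $\RR_+$, bounded by $\lambda$, vanishes at $0^+$, and converges to $\lambda$ as $t \to \infty$ by monotone convergence since $\|\phi_n\|_{L^2}^2 = \lambda$. A Helly-type diagonal extraction on a countable dense subset of $\RR_+$ yields a subsequence (still denoted $(\phi_{n_k})$) such that $Q_{n_k}(t) \to Q(t)$ at every continuity point of a nondecreasing limit $Q:\RR_+ \to [0,\lambda]$; I set $\alpha = \lim_{t \to \infty} Q(t) \in [0,\lambda]$, and the trichotomy is then driven by the value of $\alpha$.

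If $\alpha = 0$, then $\sup_y \int_{y+B_R} \phi_{n_k}^2 \le Q_{n_k}(R) \to 0$ for each fixed $R$, giving vanishing. If $\alpha = \lambda$, I obtain compactness: for every $\epsilon > 0$, choosing $R$ so that $Q(R) > \lambda - \epsilon/2$ supplies, by definition of $Q_{n_k}(R)$, points $y_k^\epsilon$ with $\int_{y_k^\epsilon + B_R} \phi_{n_k}^2 > \lambda - \epsilon$ for $k$ large, and a standard diagonal argument across $\epsilon = 1/m$ collapses the doubly-indexed sequence into a single sequence $(y_k)$.

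The main case is dichotomy, $0 < \alpha < \lambda$. Given $\epsilon > 0$, I would pick a continuity point $R_1$ of $Q$ with $\alpha - \epsilon \le Q(R_1) \le \alpha$, and then, diagonally (using that $Q(t)\to\alpha$ and $Q_{n_k}\to Q$ at continuity points), a sequence $R_{2,k} \to \infty$ with $Q_{n_k}(R_{2,k}) - Q_{n_k}(R_1) \le 2\epsilon$ for all $k$. Selecting $y_k \in \RR^3$ nearly optimal for $Q_{n_k}(R_1)$ and smooth radial cutoffs $\eta_k \equiv 1$ on $B_{R_1}$ supported in $B_{R_1+1}$, and $\chi_k \equiv 0$ on $B_{R_{2,k}-1}$, $\chi_k \equiv 1$ off $B_{R_{2,k}}$, with uniformly bounded gradients, I define $\phi_{1,k}(x) = \eta_k(x-y_k)\phi_{n_k}(x)$ and $\phi_{2,k}(x) = \chi_k(x-y_k)\phi_{n_k}(x)$. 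The equalities $\phi_{n_k} = \phi_{1,k}$ on $y_k+B_{R_1}$ and $\phi_{n_k} = \phi_{2,k}$ off $y_k+B_{R_{2,k}}$, the $L^2$ bounds, and $\mathrm{dist}(\mathrm{supp}\,\phi_{1,k},\mathrm{supp}\,\phi_{2,k})\to\infty$ are immediate; the $L^p$ bounds on the annular remainder follow by interpolating the small $L^2$-mass on $y_k+(B_{R_{2,k}}\setminus B_{R_1})$ against the uniform $H^1 \hookrightarrow L^6$ bound (possibly after rescaling $\epsilon$ to recover the stated exponent); and the near-additivity of the Dirichlet energy follows by expanding
$$|\nabla(\eta_k\phi_{n_k})|^2 + |\nabla(\chi_k\phi_{n_k})|^2 = (\eta_k^2+\chi_k^2)|\nabla\phi_{n_k}|^2 + (|\nabla\eta_k|^2+|\nabla\chi_k|^2)\phi_{n_k}^2 + \mbox{cross terms},$$
with the error terms supported on the annulus and controlled in $L^1$ by the annular $L^2$-mass of $\phi_{n_k}$, hence by a constant multiple of $\epsilon$.

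The principal obstacle is precisely the dichotomy construction: one has to simultaneously secure the annular $L^p$ bounds with the sharp power $(6-p)/(2p)$, drive $\mathrm{dist}(\mathrm{supp}\,\phi_{1,k},\mathrm{supp}\,\phi_{2,k}) \to \infty$, keep the Dirichlet energy defect of order $\epsilon$, and maintain constants depending only on the $H^1$ bound of $(\phi_n)$. The delicate point is the diagonal choice of $R_{2,k}$, which must grow slowly enough relative to the (nonquantitative) rate at which $Q_{n_k}$ approaches $Q$ at continuity points, yet fast enough that the supports of $\phi_{1,k}$ and $\phi_{2,k}$ separate.
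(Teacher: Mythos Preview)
Your approach is exactly that of the paper (and of Lions): analyze the L\'evy concentration function, extract a subsequence so that $Q_{n_k}\to Q$ pointwise, and split according to $\alpha=\lim Q$. The compactness and vanishing cases are handled correctly.

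In the dichotomy case, however, your choice of cutoffs with \emph{fixed-width} transitions ($\eta_k\equiv 1$ on $B_{R_1}$, supported in $B_{R_1+1}$, and similarly for $\chi_k$) does not deliver the stated Dirichlet-energy bound $\ge -C\epsilon$. Expanding as you do,
\[
|\nabla\phi_{n_k}|^2-|\nabla\phi_{1,k}|^2-|\nabla\phi_{2,k}|^2
=(1-\eta_k^2-\chi_k^2)|\nabla\phi_{n_k}|^2
-\bigl(|\nabla\eta_k|^2+|\nabla\chi_k|^2\bigr)\phi_{n_k}^2
-2\bigl(\eta_k\nabla\eta_k+\chi_k\nabla\chi_k\bigr)\cdot\phi_{n_k}\nabla\phi_{n_k},
\]
the quadratic term $\int(|\nabla\eta_k|^2+|\nabla\chi_k|^2)\phi_{n_k}^2$ is indeed $O(\epsilon)$ since it is supported on the annulus where $\int\phi_{n_k}^2\le 2\epsilon$. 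But the cross term is only bounded, via Cauchy--Schwarz, by
\[
2\,\|\nabla\eta_k\|_{L^\infty}\,\|\phi_{n_k}\|_{L^2(\mathrm{annulus})}\,\|\nabla\phi_{n_k}\|_{L^2}
\ \le\ C\sqrt{\epsilon},
\]
since $\|\nabla\eta_k\|_{L^\infty}=O(1)$ and you have no smallness on $\|\nabla\phi_{n_k}\|_{L^2(\mathrm{annulus})}$. So your cutoffs give $-C\sqrt{\epsilon}$, not $-C\epsilon$, and your sentence ``controlled in $L^1$ by the annular $L^2$-mass of $\phi_{n_k}$'' is not correct for this term.

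The paper fixes this by using \emph{scaled} cutoffs $\xi_{R_1}(\cdot)=\xi(\cdot/R_1)$ and $\chi_{R_k'}(\cdot)=\chi(\cdot/R_k')$, and by additionally imposing $R_1\ge\epsilon^{-1}$ (which is compatible with $Q(R_1)\ge\delta-\epsilon/2$ since $Q(t)\uparrow\delta$). Then $\|\nabla\xi_{R_1}\|_{L^\infty}\le 2/R_1\le 2\epsilon$ and likewise for $\chi_{R_k'}$, so the cross term is controlled by $\|\nabla\xi_{R_1}\|_{L^\infty}\|\phi_{n_k}\|_{L^2}\|\nabla\phi_{n_k}\|_{L^2}=O(\epsilon)$ without needing any annular smallness. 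With this single modification your argument goes through; note in particular that the scaled cutoffs also automatically give the $H^1$-boundedness of $\phi_{1,k},\phi_{2,k}$ uniformly in $\epsilon$, as required.
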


\medskip

\noindent
We then conclude using the following result.

\medskip

\begin{lemma} \label{lem:preTh2}
Let $(\phi_n)_{n \in \NN}$ satisfying
  (\ref{eq:phin1})-(\ref{eq:phin8}). Then using the terminology introduced in
  the concentration-compactness Lemma~\ref{lem:concom},
  \begin{enumerate}
  \item if some subsequence $(\phi_{n_k})_{k \in \NN}$ of $(\phi_{n})_{n
      \in \NN}$ satisfies the compactness condition, then
    $(\phi_{n_k})_{k \in \NN}$ converges to $\phi$ strongly in
    $L^p(\RR^3)$ for all $2 \le p < 6$~;
  \item a subsequence of $(\phi_n)_{n \in \NN}$ cannot vanish~;
  \item a subsequence of $(\phi_n)_{n \in \NN}$ cannot satisfy the
    dichotomy condition.
  \end{enumerate}
Consequently, $(\phi_n)_{n \in \NN}$ converges to
$\phi$ strongly in $L^p(\RR^3)$ for all $2 \le p < 6$. It follows that
$\phi$ is a minimizer to (\ref{eq:newPb}).
\end{lemma}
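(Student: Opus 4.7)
\textbf{Proof proposal for Lemma~\ref{lem:preTh2}.} The plan is to apply the concentration-compactness trichotomy of Lemma~\ref{lem:concom} to the Ekeland sequence $(\phi_n)$, to rule out vanishing and dichotomy, and to extract strong $L^2$-convergence from the compactness case. Once strong $L^p$-convergence for $2\le p <6$ is secured, Lemma~\ref{lem:preExistGGA} identifies the weak limit $\phi$ as a minimizer of $J_\lambda$.

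\textbf{Part 1 (compactness $\Rightarrow$ strong convergence).} Assume $(\phi_{n_k})$ satisfies compactness with shifts $(y_k)$. I first claim $(y_k)$ is bounded. Otherwise, up to a subsequence $|y_k|\to\infty$, and the translates $\psi_k:=\phi_{n_k}(\cdot+y_k)$ have mass essentially concentrated in a fixed ball, so $\int V\rho_{\psi_k}=\int V(\cdot+y_k)\rho_{\phi_{n_k}}\to 0$ and $E(\phi_{n_k})=E^\infty(\psi_k)+o(1)$. Passing to a weak $H^1$-limit $\psi$ of $(\psi_k)$ and invoking the lower semi-continuity argument of Lemma~\ref{lem:preExistGGA} for $E^\infty$ would yield $J^\infty_\lambda\le E^\infty(\psi)\le J_\lambda$, contradicting the strict inequality $J_\lambda<J^\infty_\lambda$ of Lemma~\ref{lem:I_lambda}. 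Hence $(y_k)$ is bounded, so combining the compactness bound with local strong convergence in $L^2_{\rm loc}$ gives $\|\phi\|_{L^2}^2\ge\lambda-\epsilon$ for every $\epsilon>0$, i.e.\ $\|\phi\|_{L^2}^2=\lambda$. Weak $L^2$-convergence together with equal norms forces strong $L^2$-convergence, and interpolation with the $H^1$-bound on $(\phi_{n_k})$ extends this to $L^p(\RR^3)$ for all $2\le p<6$.

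\textbf{Part 2 (no vanishing).} If $(\phi_{n_k})$ vanishes, the standard Lions argument gives $\phi_{n_k}\to 0$ strongly in $L^p(\RR^3)$ for $2<p<6$. Integrating (\ref{eq:h0infty}) from $0$ using $h(0,\kappa)=0$ provides the bound $|h(\rho,\kappa)|\le C(\rho^{p_-}+\rho^{p_+})$ with $1<p_\pm<5/3$, whence $E^{\rm GGA}_{\rm xc}(\rho_{\phi_{n_k}})\to 0$; $J(\rho_{\phi_{n_k}})\to 0$ by Hardy-Littlewood-Sobolev; $\int V\rho_{\phi_{n_k}}\to 0$ by splitting $V\in L^{3/2}+L^\infty_\varepsilon$. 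Moreover (\ref{eq:h0EL}) yields
\[
\int_{\RR^3}|\nabla\phi_{n_k}|^2+\int_{\RR^3}h(\rho_{\phi_{n_k}},|\nabla\phi_{n_k}|^2)\ge \int_{\RR^3}h(\rho_{\phi_{n_k}},0)=o(1),
\]
so $E(\phi_{n_k})\ge o(1)$, contradicting $E(\phi_{n_k})\to J_\lambda<0$.

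\textbf{Part 3 (no dichotomy, the main obstacle).} Suppose dichotomy holds with parameter $\delta\in(0,\lambda)$. For each $\epsilon>0$ the decomposition $\phi_{n_k}=\phi_{1,k}+\phi_{2,k}+r_{\epsilon,k}$ with asymptotically disjoint supports gives, using the $L^p$-overlap control of Lemma~\ref{lem:concom}, the hypothesis $h(0,\cdot)=0$, and the Lipschitz estimates from (\ref{eq:h0infty})-(\ref{eq:h0EL}), the splitting
\[
E(\phi_{n_k})\ge E(\phi_{1,k})+E^\infty(\phi_{2,k})-C\omega(\epsilon)+o_k(1),
\]
since $\phi_{2,k}$ is supported outside any fixed ball around the nuclei and thus $\int V\rho_{\phi_{2,k}}=o_k(1)$. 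Passing to the limit $k\to\infty$ then $\epsilon\to 0$, and adjusting the masses of $\phi_{j,k}$ by the continuity of $\mu\mapsto J_\mu$ established in Lemma~\ref{lem:I_lambda}, one obtains $J_\lambda\ge J_\delta+J^\infty_{\lambda-\delta}$. Combined with (\ref{eq:CC1}), equality holds. To contradict this, I extract (via Lemma~\ref{lem:preExistGGA} applied to the sequences $(\phi_{1,k})$ and $(\tau_{-y_k}\phi_{2,k})$) non-negative minimizers $\phi^{(1)}$ of $J_\delta$ and $\phi^{(2)}$ of $J^\infty_{\lambda-\delta}$; standard elliptic regularity together with the Agmon-type argument used at the end of the proof of Theorem~\ref{th:LDA} yields that both decay exponentially. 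I then employ the trial function $\phi^{(1)}+\tau_{-R\mathbf{e}}\phi^{(2)}$ (renormalized to have $L^2$-norm $\sqrt{\lambda}$): an expansion as in the proof of Theorem~\ref{th:LDA} gives
\[
E(\text{trial})=J_\delta+J^\infty_{\lambda-\delta}-\frac{2(\lambda-\delta)(Z-2\delta)}{R}+O(R^{-2}),
\]
and since $\lambda=1$, $\delta<1$ and $Z\ge 2$ force $Z-2\delta>0$, the trial has energy strictly below $J_\lambda$ for $R$ large, contradicting minimality.

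Combining the three parts, compactness must hold after extraction, so $\phi_{n_k}\to\phi$ strongly in $L^p(\RR^3)$ for $2\le p<6$ with $\|\phi\|_{L^2}^2=\lambda$, and Lemma~\ref{lem:preExistGGA} shows $\phi$ attains $J_\lambda$. Applied to $\lambda=1$, this produces the minimizer of (\ref{eq:EKS2e}), and the corresponding $\gamma_\phi=|\phi\rangle\langle\phi|$ is the sought-for minimizer of (\ref{eq:EKSpb2}); the Euler equation (\ref{eq:EL_GGA_2e}), regularity, sign and lowest-eigenpair statements in Theorem~\ref{th:GGA_2e} are then obtained from $E'(\phi)+\theta\phi=0$ combined with the $\phi\ge 0$ assumption, standard elliptic regularity, Agmon decay, and nondegeneracy of the ground state of a Schr\"odinger-type operator with compact resolvent.
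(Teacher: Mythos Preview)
Your Parts~1 and~2 follow the paper's argument and are correct. The genuine gap is in Part~3.

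You write: ``I extract (via Lemma~\ref{lem:preExistGGA} applied to the sequences $(\phi_{1,k})$ and $(\tau_{-y_k}\phi_{2,k})$) non-negative minimizers $\phi^{(1)}$ of $J_\delta$ and $\phi^{(2)}$ of $J^\infty_{\lambda-\delta}$.'' But Lemma~\ref{lem:preExistGGA} has as a \emph{hypothesis} that the weak limit already carries the full $L^2$-mass. You have not established this for either piece: $(\phi_{1,k})$ is a minimizing sequence for $J_\delta$ and $(\tau_{-y_k}\phi_{2,k})$ for $J^\infty_{\lambda-\delta}$, but nothing prevents these sequences from themselves undergoing vanishing or dichotomy. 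Assuming they converge to minimizers is exactly the existence statement you are trying to prove for $J_\lambda$, applied to smaller masses; it is circular. (A related slip: you assert ``$\phi_{2,k}$ is supported outside any fixed ball around the nuclei''. The dichotomy lemma only says the two supports go far apart, with no control on where $y_k$ sits; the paper handles this by noting that $\min(|\int V\rho_{\phi_{1,n}}|,|\int V\rho_{\phi_{2,n}}|)\to 0$ and then distinguishing two cases.)

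The paper's route around this obstacle is substantially more involved and is the real content of the proof. It exploits the Ekeland condition (\ref{eq:Ekeland}), which you never use. First one shows the remainder $\psi_n=\phi_n-(\phi_{1,n}+\phi_{2,n})$ goes to $0$ in $H^1$ (not merely $L^p$), so that each piece inherits an approximate Euler equation with the \emph{same} multiplier $\theta_n$. Since the diffusion coefficient $a_n=\tfrac12(1+\partial_\kappa h(\rho_{\phi_n},|\nabla\phi_n|^2))$ is only bounded in $L^\infty$, passing to the limit requires H-convergence (Lemma~\ref{lem:Hconvergence}); the weak limits $u_1,u_2$ then satisfy elliptic equations (\ref{eq:equ1})--(\ref{eq:equ2}) with an H-limit coefficient $a_\infty$, even though they are \emph{not} known to be minimizers. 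From Lemma~\ref{lem:H_GGA} one extracts $\theta>0$. One then iterates concentration-compactness on each piece; the key termination argument shows that infinitely many splittings would produce nonzero weak limits $w_k$ with $\|w_k\|_{L^2}\to 0$, all solving equations of the form (\ref{eq:eqonwk}) with uniformly elliptic coefficients, forcing $\|w_k\|_{L^\infty}\to 0$ and then $\theta\le C\|w_k\|_{L^\infty}^{2\beta_-}\to 0$, contradicting $\theta>0$. Only after termination does one obtain genuine minimizers $u_1$, $u_2$ for some $J_{\delta_1}$, $J^\infty_{\delta_2}$ (with $\delta_1+\delta_2\le\lambda$), to which the exponential-decay trial-function argument can finally be applied. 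Your proposal skips this entire mechanism.
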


\medskip

\noindent
As the explicit form of the functions $\phi_{1,k}$ and $\phi_{2,k}$
arising in Lemma~\ref{lem:concom} will
be useful for proving the third assertion of Lemma~\ref{lem:preTh2}, we briefly
recall the proof of the former lemma.

\medskip

\begin{proof}[Sketch of the proof of Lemma~\ref{lem:concom}] 
The argument is based on the analysis of Levy's concentration function 
$$ 
Q_n(R) = \sup_{y \in \mathbb{R}^3} \int_{y+B_R} \phi_n^2.
$$
The sequence $(Q_n)_{n \in \NN}$ is a sequence of nondecreasing,
nonnegative, uniformly bounded functions such that
$\displaystyle{\lim_{R \rightarrow \infty} Q_n(R) = \lambda}$.\\   
There exists consequently a subsequence  $(Q_{n_k})_{k \in \NN}$ and a
nondecreasing nonnegative function $Q$ such that $(Q_{n_k})_{k \in \NN}$
converges pointwise to $Q$. We obviously have 
$$
\lim_{R \rightarrow \infty} Q(R) = \delta \in [0,\lambda].
$$
The case $\delta = 0$ corresponds to vanishing, while $\delta = \lambda$
corresponds to compactness. We now consider more in details the case
when $0< \delta <\lambda$ (dichotomy).  
Let $\xi$, $\chi$ be in $C^\infty(\RR^3)$ and such that $0 \leq \xi,
\chi \leq 1$, $\xi(x)=1$ if $|x| \leq 1$, $\xi(x) =0$ if $|x| \geq 2$,
$\chi(x)=0$ if $|x| \leq 1$, $\chi(x) =1$ if $|x| \geq 2$, $\|\nabla
\chi\|_{L^\infty} \le 2$ and $\|\nabla \xi\|_{L^\infty} \le 2$. For $R
> 0$, we denote by $\xi_R(\cdot) = \xi\left(\frac{\cdot}R\right)$ and 
$\chi_R(\cdot) = \chi\left(\frac{\cdot}R\right)$.
Let $\epsilon > 0$ and $R_1 \geq \epsilon^{-1}$ large enough for
$Q(R_1)\geq \delta-\frac{\epsilon}{2}$ to hold. Then, up to getting rid of the first
terms of the sequence, we can assume that for all $k$, we have
$Q_{n_k}(R_1) \geq \delta-\epsilon$ and 
$Q_{n_k}(2R_1) \leq \delta+ \frac \epsilon 2$. Furthermore, there exists
$y_k \in \RR^3$ such that  
$$
Q_{n_k}(R_1) = \int_{y_k+B_{R_1}} \phi_{n_k}^2
$$
and we can choose a sequence $(R_k')_{k \in \NN}$ of
positive real numbers greater than $R_1$, converging to infinity, such that
$Q_{n_k}(2R_k') \leq \delta + \epsilon$ for all $k \in \NN$. 
Consider now 
$$
\phi_{1,k} = \xi_{R_1}(\cdot-y_k) \phi_{n_k} \quad \mbox{and} \quad
\phi_{2,k} = \chi_{R_k'}(\cdot-y_k) \phi_{n_k}.
$$
Denoting by $R_{2,k} = 2 R'_k$, we clearly have 
$$ \left|\int_{\mathbb{R}^3} \phi_{1,k}^2 - \delta \right| \leq
\epsilon, \quad \left|\int_{\mathbb{R}^3} \phi_{2,k}^2 -
  (\lambda-\delta) \right| \leq \epsilon,$$
$$
\int_{y_k+(B_{R_{2,k}}\setminus \overline{B}_{R_1})} \phi_{n_k}^2 = 
\int_{R_1 < |\cdot -y_k| < R_{2,k}} \phi_{n_k}^2 
\leq Q_{n_k}(R_{2,k})-Q_{n_k}(R_1) \le 2\epsilon,
$$
and
\begin{eqnarray*}
\begin{aligned}
\int_{\mathbb{R}^3} \left|\phi_{n_k} -  (\phi_{1,k}+\phi_{2,k})\right|^2
&\leq \int_{\mathbb{R}^3} |1- \xi_{R_1}(\cdot-y_k) -
\chi_{R_k'}(\cdot-y_k)|^2 \phi_{n_k}^2 \\ 
&\leq \int_{R_1 \leq |\cdot-y_k| \leq R_{2,k}} \phi_{n_k}^2 \leq 2 \epsilon.
\end{aligned}
\end{eqnarray*}
Similarly, by Hölder and Gagliardo-Nirenberg-Sobolev inequalities, we
have for all $k$ and $2\leq p < 6$ 
$$ 
\| \phi_{n_k} -  (\phi_{1,k}+\phi_{2,k}) \|_{L^p} \leq \| \phi_{n_k}
\|_{L^p(y_k+(B_{R_{2,k}}\setminus \overline{B}_{R_1}))} \le 
C_p \epsilon^{\frac {(6-p)}{2p}}
$$
where the constant $C_p$ only depends on $p$ and on the $H^1$ bound on
$(\phi_n)_{n \in \NN}$. Finally, we have 
$\|\nabla  \xi_{R_1}\|_{L^{\infty}} \leq 2R_1^{-1} \leq 2 \epsilon$ and
$\|\nabla  \chi_{R_k'}\|_{L^{\infty}} \leq 2 (R_k')^{-1}  \leq 2\epsilon$,
so that  
$$ 
\left|\int_{\mathbb{R}^3} |\nabla \phi_{1,k}|^2 - \xi_{R_1}^2(\cdot-y_k)
  |\nabla \phi_{n_k}|^2 \right| \leq C \frac{\epsilon}2
$$ 
and
$$ 
\left|\int_{\mathbb{R}^3} |\nabla \phi_{2,k}|^2-
  \chi_{R_k'}^2(\cdot-y_k) |\nabla \phi_{n_k}|^2
\right| \leq C \frac{\epsilon}2
$$
where the constant $C$ only depend on the $H^1$ bound on $(\phi_n)_{n
  \in \NN}$.
Thus 
\begin{eqnarray*}
\int_{\mathbb{R}^3} |\nabla \phi_{n_k}|^2-|\nabla \phi_{1,k}|^2-|\nabla
\phi_{2,k}|^2 & \geq & \int_{\mathbb{R}^3} (1-
\xi_{R_1}^2(\cdot-y_k)-\chi_{R_k'}^2(\cdot-y_k)) |\nabla \phi_{n_k}|^2 -
C \epsilon \\ 
&\geq & - C \epsilon.
\end{eqnarray*}
\end{proof}

\medskip

\begin{proof}[Proof of the first two assertions of Lemma~\ref{lem:preTh2}]
Assume that there exists a sequence $(y_k)_{k
  \in \NN}$ in $\mathbb{R}^3$, such that for all $\epsilon > 0$, there
exists $R >0$  such that    
$$ 
\forall k \in \NN, \quad \int_{y_k + B_R} \phi_{n_k}^{2}  \geq \lambda- \epsilon.
$$ 
Two situations may be encountered: either $(y_k)_{k \in \NN}$ has a
converging subsequence, or $\dps \lim_{k \to \infty} |y_k| = \infty$. In
the latter case, we would have $\phi=0$, and therefore
$$
\lim_{k \to \infty} \int_{\RR^3} \phi_{n_k}^2 V = 0.
$$
Hence
$$
I_\lambda^\infty \le \lim_{k \to \infty} E^\infty(\phi_{n_k}) = \lim_{k
  \to \infty} E(\phi_{n_k}) = I_\lambda,
$$
which is in contradiction with the first assertion of
Lemma~(\ref{lem:I_lambda}). Therefore, $(y_k)_{k \in \NN}$ has a 
converging subsequence. It is then 
easy to see, using the strong convergence of $(\phi_n)_{n \in \NN}$
to $\phi$ in $L^2_{\rm loc}(\RR^3)$, that 
$$
\int_{\RR^3} \phi^2 \ge \int_{y+B_R} \phi^2 \ge \lambda-\epsilon,
$$
where $y$ is the limit of some converging subsequence of $(y_k)_{k \in
  \NN}$. This implies that $\|\phi\|_{L^2}^2 = \lambda$, hence
that $(\phi_n)_{n \in \NN}$ converges to $\phi$ strongly in
$L^2(\RR^3)$. As $(\phi_n)_{n \in \NN}$ is bounded in $H^1(\RR^3)$, this
convergence holds strongly in $L^p(\RR^3)$ for all $2 \le p < 6$.

\medskip

\noindent
Assume now that $(\phi_{n_k})_{k \in \NN}$ is vanishing. Then we would
have $\phi=0$, an eventuality that has already been excluded.
\end{proof}

\medskip

\begin{proof}[Proof of the third assertion of Lemma~\ref{lem:preTh2}]
Replacing $(\phi_n)_{n \in \NN}$ with a subsequence and using a
diagonal extraction argument, we can
assume that in addition to (\ref{eq:phin1})-(\ref{eq:phin8}), there
exists 
\begin{itemize}
\item a sequence $(y_n)_{n \in \NN}$ of points in $\RR^3$,
\item two increasing sequences of positive real numbers $(R_{1,n})_{n \in \NN}$
and $(R_{2,n})_{n \in \NN}$ such that
$$
\lim_{n\to\infty} R_{1,n} = \infty \quad \mbox{and} \quad 
\lim_{n\to\infty} R_{2,n}-R_{1,n} = \infty
$$
\item two sequences  $(\phi_{1,n})_{n \in \NN}$ and $(\phi_{2,n})_{n \in
    \NN}$  bounded in $H^1(\RR^3)$
\end{itemize}
such that
\begin{eqnarray*}
\left\{
\begin{aligned}
& \phi_{n} = \phi_{1,n} \quad \mbox{on } y_n+B_{R_{1,n}} \\
& \phi_{n} = \phi_{2,n} \quad \mbox{on } \RR^3 \setminus
(y_n+B_{R_{2,n}}) \\ 
& \lim_{n \to \infty} \int_{\mathbb{R}^3} \phi_{1,n}^2 = \delta, \quad
\lim_{n \to \infty} \int_{\mathbb{R}^3} \phi_{2,n}^2 = \lambda -\delta \\
&  \lim_{n \to \infty} \|\phi_n-(\phi_{1,n}+\phi_{2,n})\|_{L^p(\RR^3)} = 0
\quad \mbox{for all }  2 \leq p < 6 \\ 
& \lim_{n \to \infty}  \|\phi_{n} \|_{L^p(y_n+(B_{R_{2,n}}\setminus
  \overline{B}_{R_{1,n}}))} = 0 \quad \mbox{ for all } \; 2 \leq p < 6 \\ 
&\lim_{n \to \infty} \mathrm{dist(Supp \; \phi_{1,n}, Supp \;
  \phi_{2,n})} = \infty \\ 
&\liminf_{n \to \infty} \int_{\mathbb{R}^3} \left(\left|\nabla \phi_{n}
  \right|^2-\left|\nabla \phi_{1,n} \right|^2 - \left|\nabla \phi_{2,n}
  \right|^2 \right) \geq 0.  
\end{aligned}
\right.
\end{eqnarray*}
Besides, it follows from the construction of the functions $\phi_{1,n}$
and $\phi_{2,n}$ that
\begin{equation} \label{eq:sgnphijn}
\forall n \in \NN, \quad \phi_{1,n} \ge 0 \quad \mbox{and} \quad
\phi_{2,n} \ge 0 \quad \mbox{a.e. on } \RR^3.
\end{equation}
A straightforward calculation leads to
\begin{eqnarray}
E(\phi_n) & = & E^\infty(\phi_{1,n}) + \int_{\RR^3} \rho_{\phi_{1,n}} V
+ E^\infty(\phi_{2,n}) +  \int_{\RR^3} \rho_{\phi_{2,n}} V \nonumber \\
& & + \int_{\mathbb{R}^3} \left(\left|\nabla \phi_{n}
  \right|^2-\left|\nabla \phi_{1,n} \right|^2 - \left|\nabla \phi_{2,n}
  \right|^2 \right)  
+ \int_{\RR^3} \widetilde \rho_n V \nonumber  \\ 
& & +  D(\rho_{\phi_{1,n}},\rho_{\phi_{2,n}}) + 
D(\widetilde \rho_n,\rho_{\phi_{1,n}}+\rho_{\phi_{2,n}}) + \frac 1 2 D(\widetilde
\rho_n,\widetilde \rho_n) \nonumber  \\ 
& & + \int_{\RR^3}
(h(\rho_{\phi_n},|\nabla \phi_{n}|^2) - h(\rho_{\phi_{1,n}},|\nabla
\phi_{1,n}|^2) -   h(\rho_{\phi_{2,n}},|\nabla
\phi_{2,n}|^2)), \label{eq:ineqEphin}
\end{eqnarray}
where we have denoted by $\widetilde \rho_n = \rho_n -
\rho_{\phi_{1,n}} - \rho_{\phi_{2,n}}$. As
$$
|\widetilde \rho_n| \le 2 \chi_{y_n+(B_{R_{2,n}} \setminus
  \overline{B}_{R_{1,n}})} \, |\phi_n|^2,
$$
the sequence $(\widetilde\rho_n)_{n \in \NN}$ goes to zero in
$L^p(\RR^3)$ for all $1 \le p < 3$, yielding 
$$
\int_{\RR^3} \widetilde \rho_n V + D(\widetilde
\rho_n,\rho_{\phi_{1,n}}+\rho_{\phi_{2,n}}) + \frac 1 2 D(\widetilde
\rho_n,\widetilde \rho_n)  \mathop{\longrightarrow}_{n \to \infty} 0.
$$ 
Besides,
$$
D(\rho_{\phi_{1,n}},\rho_{\phi_{2,n}}) \le 4 \;  \mathrm{dist(Supp \;
  \phi_{1,n}, Supp \; 
  \phi_{2,n})}^{-1} \, \|\phi_{1,n}\|_{L^2}^2 \, \|\phi_{2,n}\|_{L^2}^2
\mathop{\longrightarrow}_{n \to \infty} 0
$$
and 
\begin{eqnarray*}
& & \left| \int_{\RR^3}(h(\rho_{\phi_n},|\nabla \phi_{n}|^2) -
h(\rho_{\phi_{1,n}},|\nabla \phi_{1,n}|^2) -
h(\rho_{\phi_{2,n}},|\nabla \phi_{2,n}|^2)) \right| \\ & & \le
\int_{y_n+(B_{R_{2,n}} \setminus \overline{B}_{R_{1,n}})} \left|
h(\rho_{\phi_n},|\nabla \phi_{n}|^2) \right| + \left| h(\rho_{\phi_{1,n}},|\nabla
\phi_{1,n}|^2) \right| + \left|   h(\rho_{\phi_{2,n}},|\nabla
\phi_{2,n}|^2) \right| \\ & & \le C
\left(\|\rho_{\phi_n}\|_{L^{p_-}(y_n+(B_{R_{2,n}} \setminus
    \overline{B}_{R_{1,n}}))}^{p_-} + \|\rho_{\phi_n}\|_{L^{p_+}(y_n+(B_{R_{2,n}}
  \setminus \overline{B}_{R_{1,n}}))}^{p_+} \right) \mathop{\longrightarrow}_{n
\to \infty} 0 
\end{eqnarray*}
(recall that $1 < p_\pm=1+\beta_\pm < \frac 53$). Lastly, 
as $\lim_{n \to \infty} \mathrm{dist(Supp \; \phi_{1,n}, Supp \;
  \phi_{2,n})} = \infty$, 
$$
\min \left( \left|\int_{\RR^3} \rho_{\phi_{1,n}} V \right|
,  \left|\int_{\RR^3} \rho_{\phi_{2,n}} V \right| \right) 
 \mathop{\longrightarrow}_{n \to \infty} 0. 
$$
It therefore follows from (\ref{eq:ineqEphin}) and from the continuity
of the functions $\lambda \mapsto J_\lambda$ and $\lambda \mapsto
J_\lambda^\infty$ that at least one of the inequalities below holds true 
\begin{equation} \label{eq:Jl1}
J_\lambda \ge J_\delta + J_{\lambda-\delta}^\infty \quad \mbox{(case 1)}
\quad \mbox{or} \quad 
J_\lambda \ge J_\delta^\infty + J_{\lambda-\delta} \quad \mbox{(case 2)}. 
\end{equation}
As the opposite inequalities are always satisfied, we obtain
\begin{equation} \label{eq:Jl2}
J_\lambda = J_\delta + J_{\lambda-\delta}^\infty \quad \mbox{(case 1)} 
\quad \mbox{or} \quad 
J_\lambda = J_\delta^\infty + J_{\lambda-\delta} \quad \mbox{(case 2)} 
\end{equation}
and that (still up to extraction)
\begin{equation} \label{eq:Jl3}
\left\{ \begin{array}{l}
\dps \lim_{n \to \infty} E(\phi_{1,n}) = J_\delta \\
\dps \lim_{n \to \infty} E^\infty(\phi_{2,n}) = J_{\lambda-\delta}^\infty 
\end{array} \right.  \quad \mbox{(case 1)} 
\quad \mbox{or} \quad 
\left\{ \begin{array}{l}
\dps \lim_{n \to \infty} E^\infty(\phi_{1,n}) = J_\delta^\infty \\
\dps \lim_{n \to \infty} E(\phi_{2,n}) = J_{\lambda-\delta} 
\end{array} \right.  \quad \mbox{(case 2)}. 
\end{equation}
Let us now prove that the sequence $(\psi_n)_{n \in \NN}$, where $\psi_n
=  \phi_n - (\phi_{1,n}+\phi_{2,n})$, goes to zero in $H^1(\RR^3)$. For
convenience, we rewrite $\psi_n$ as $\psi_n = e_n \phi_n$ where $e_n
= 1 - \xi_{R_{1,n}}(\cdot-y_n) - \chi_{R_{2,n}/2}(\cdot-y_n)$ and 
Ekeland's condition (\ref{eq:Ekeland}) as
\begin{equation} \label{eq:Ekeland2}
- \div(a_n\nabla\phi_n) + V \phi_n + (\rho_{\phi_n} \star |\br|^{-1})
\phi_n + V_n^- \phi_n^{1+2\beta_-} + V_n^+ \phi_n^{1+2\beta_+} + \theta_n \phi_n = \eta_n
\end{equation}
where
$$
\left\{ \begin{array}{l}
\dps a_n = \frac 1 2 \left( 1+\frac{\partial h}{\partial
    \kappa}(\rho_{\phi_n},|\nabla \phi_n|^2) \right)  \\
\dps V_n^- =  2^{\beta_-} \rho_{\phi_n}^{-\beta_-} \frac{\partial h}{\partial
    \rho}(\rho_{\phi_n},|\nabla \phi_n|^2) \chi_{\rho_{\phi_n} \le 1} \\
\dps V_n^+ = 2^{\beta_+}  \rho_{\phi_n}^{-\beta_+} \frac{\partial h}{\partial
    \rho}(\rho_{\phi_n},|\nabla \phi_n|^2) \chi_{\rho_{\phi_n} > 1} .
\end{array} \right.
$$
The
sequence $(V \phi_n + (\rho_{\phi_n} \star |\br|^{-1})
\phi_n + V_n^- \phi_n^{1+2\beta_-} + V_n^+ \phi_n^{1+2\beta_+} +
\theta_n \phi_n )_{n \in \NN}$ is bounded in
$L^2(\RR^3)$, $(\eta_n)_{n \in \NN}$ goes to zero in $H^{-1}(\RR^3)$,
and the sequence $(\psi_n)_{n \in \NN}$ is bounded in $H^1(\RR^3)$ and
goes to zero in $L^2(\RR^3)$. We therefore infer from
(\ref{eq:Ekeland2}) that
$$
\int_{\RR^3} a_n \nabla \phi_n \cdot \nabla \psi_n  \mathop{\longrightarrow}_{n
\to \infty} 0. 
$$
Besides $\nabla \psi_n = e_n \nabla \phi_n + \phi_n \nabla e_n$ with $0
\le e_n \le 1$ and $\|\nabla e_n\|_{L^\infty} \to 0$. Thus
$$
\int_{\RR^3} a_n e_n |\nabla \phi_n|^2  \mathop{\longrightarrow}_{n
\to \infty} 0. 
$$
As 
\begin{equation} \label{eq:preHconv}
\dps 0 < \frac a 2 \le a_n = \frac 1 2 \left( 1+\frac{\partial h}{\partial
    \kappa}(\rho_{\phi_n},|\nabla \phi_n|^2) \right) \le \frac b 2 <
\infty \quad \mbox{a.e. on } \RR^3 
\end{equation}
and $0 \le e_n^2 \le e_n \le 1$, we finally obtain
$$
\int_{\RR^3} e_n^2 |\nabla \phi_n|^2  \mathop{\longrightarrow}_{n
\to \infty} 0,
$$
from which we conclude that $(\nabla \psi_n)_{n \in \NN}$ goes to zero
in $H^1(\RR^3)$. Plugging this information in (\ref{eq:Ekeland2}) and
using the fact that the supports of $\phi_{1,n}$ and $\phi_{2,n}$ are
disjoint and go far apart when $n$ goes to infinity, we obtain
\begin{eqnarray*}
&& \!\!\!\!\!\! 
- \div(a_n\nabla\phi_{1,n}) + V \phi_{1,n} + (\rho_{\phi_{1,n}} \star
|\br|^{-1}) \phi_{1,n} + V_n^- \phi_{1,n}^{1+2\beta_-} + V_n^+
\phi_{1,n}^{1+2\beta_+} + \theta_n \phi_{1,n} 
\mathop{\longrightarrow}_{n \to \infty}^{H^{-1}} 0 
\\
&& \!\!\!\!\!\!
- \div(a_n\nabla\phi_{2,n}) + V \phi_{2,n} + (\rho_{\phi_{2,n}} \star
|\br|^{-1}) \phi_{2,n} + V_n^- \phi_{2,n}^{1+2\beta_-}
+ V_n^+ \phi_{2,n}^{1+2\beta_+} + \theta_n \phi_{2,n}
\mathop{\longrightarrow}_{n \to \infty}^{H^{-1}} 0.
\end{eqnarray*}
We can now assume that the sequences $(\phi_{1,n})_{n \in
  \NN}$ and $(\phi_{2,n})_{n \in \NN}$, which are bounded in
$H^1(\RR^3)$, respectively converge to $u_1$ and $u_2$ weakly in
$H^1(\RR^3)$, strongly in $L^p_{\rm loc}(\RR^3)$ for all $2 \le p < 6$
and a.e. in $\RR^3$. In virtue of (\ref{eq:sgnphijn}), we also have $u_1
\ge 0$ and $u_2 \ge 0$ a.e. on $\RR^3$.
To pass to the limit in the above equations, we use
a H-convergence result proved in Appendix (Lemma~\ref{lem:Hconvergence}).
The sequence $(a_n)_{n \in \NN}$ satisfying (\ref{eq:preHconv}), there
exists $a_\infty \in L^\infty(\RR^3)$ such that $\frac a2 \le a_\infty \le
\frac{b^2}{2a}$ and  (up to extraction) $a_nI_3 \rightharpoonup_H a_\infty I_3$
(where $I_3$ is the rank-$3$ identity matrix). Besides, the sequence
$(V_n^\pm)_{n \in \NN}$ is bounded in $L^\infty(\RR^3)$, so
that there exists $V^\pm \in L^\infty(\RR^3)$, such that (up to
extraction) $(V_n^\pm)_{n \in \NN}$ 
converges to $V^\pm$  for the weak-$*$ topology of
$L^\infty(\RR^3)$. Hence for $j=1,2$ (and up to extraction) 
$$
\left\{
\begin{array}{l}
\dps V\phi_{j,n} \mathop{\longrightarrow}_{n \to \infty} Vu_j \quad \mbox{
  strongly in } H^{-1}(\RR^3) \\
\dps V_n^\pm \phi_{j,n}^{1+2\beta_\pm}  \mathop{\rightharpoonup}_{n \to
  \infty} V^\pm u_j^{1+2\beta_\pm} 
\quad \mbox{ weakly in } L^2_{\rm loc}(\RR^3) \\ 
\dps (\rho_{\phi_{j,n}} \star
|\br|^{-1}) \phi_{j,n} + \theta_n \phi_{j,n}
\mathop{\rightharpoonup}_{n \to \infty}  (\rho_{u_{j}} \star
|\br|^{-1}) u_{j} + \theta u_{j} \quad \mbox{ stronly in }
L^2_{\rm loc}(\RR^3) .
\end{array}\right.
$$
We end up with
\begin{eqnarray}
&& \!\!\!\!\!\! 
- \div(a_\infty \nabla u_1) + V u_1 + (\rho_{u_1} \star
|\br|^{-1}) u_1 + V^- u_1^{1+2\beta_-} + V^+ u_1^{1+2\beta_+}  + \theta
u_1 = 0 \label{eq:equ1}  \\ 
&&  \!\!\!\!\!\! 
- \div(a_\infty \nabla u_2) + V u_2 + (\rho_{u_2} \star
|\br|^{-1}) u_2 + V^- u_2^{1+2\beta_-} + V^+ u_2^{1+2\beta_+}  + \theta
u_2 = 0. \label{eq:equ2}  
\end{eqnarray}
By classical elliptic regularity arguments \cite{GT} (see also the proof
of Lemma~\ref{lem:exp_decay} below), both $u_1$ and $u_2$ are in
$C^{0,\alpha}(\RR^3)$ for some $0 < \alpha < 1$ and vanish at infinity. 
Besides, exactly one of the two functions $u_1$ and $u_2$ is different
from zero. Indeed, if both $u_1$ and $u_2$ were equal to zero, then we
would have $\phi=0$, hence
$$
J_\lambda = \lim_{n \to \infty} E(\phi_n) = \lim_{n \to \infty}
E^\infty(\phi_n) = J_\lambda^\infty,
$$
which is in contradiction with the first assertion of
Lemma~\ref{lem:I_lambda} (recall that $J_\lambda=I_\lambda$ and
$J_\lambda^\infty = I_\lambda^\infty$ for all $0 \le \lambda \le
1$). On the other hand, as $ 
\mathrm{dist(Supp \; \phi_{1,n}, Supp \;  \phi_{2,n})} \to \infty$, at
least one of the functions $u_1$ and $u_2$ is equal to zero. 

\medskip

We only consider here the case when $u_2 =0$,
corresponding to case~1 in
(\ref{eq:Jl1})-(\ref{eq:Jl3}), since the other case can be dealt with
the same arguments. A key point of the proof consists in noticing that
apply Lemma~\ref{lem:H_GGA} (proved in Appendix) to (\ref{eq:equ1})
(note that $W=V^- u_1^{\beta_-} + V^+ u_1^{\beta_+}$ is nonpositive and
goes to zero at infinity) yields 
\begin{equation} \label{eq:thetapositive}
\theta > 0. 
\end{equation}
Consider now the sequence $(\widetilde \phi_{1,n})_{n \in \NN}$ defined
by $\widetilde \phi_{1,n} = \delta^{\frac 12} \phi_{1,n}
\|\phi_{1,n}\|_{L^2}^{-1}$. It is easy to check that
$$
\left\{ \begin{array}{l}
\dps \forall n \in \NN, \quad \widetilde \phi_{1,n} \in H^1(\RR^3), \quad
\int_{\RR^3} \widetilde \phi_{1,n}^2=\delta \quad \mbox{and} \quad
\widetilde \phi_{1,n} \ge 0 \mbox{ a.e. on } \RR^3 \\  
\dps  \lim_{n \to +\infty} E(\widetilde \phi_{1,n}) = J_\delta  \\
 \dps - \div(a_{1,n} \nabla\widetilde\phi_{1,n}) + V \widetilde\phi_{1,n} +
(\rho_{\widetilde\phi_{1,n}} \star 
|\br|^{-1}) \widetilde\phi_{1,n} + V_{1,n}^-
\widetilde\phi_{1,n}^{1+2\beta_-} + V_{1,n}^+
\widetilde\phi_{1,n}^{1+2\beta_+} + 
\theta_n \widetilde\phi_{1,n} 
\mathop{\longrightarrow}_{n \to \infty}^{H^{-1}} 0  \\
\dps (\widetilde\phi_{1,n})_{n \in \NN} \mbox{ converges to } \widetilde
v_1 \neq 0
\mbox{ weakly in $H^1$, strongly in $L^p_{\rm loc}$ for $2 \le p < 6$ and a.e. on
} \RR^3 
\end{array} \right.
$$
(with in fact $v_1 = \phi$). Likewise, the
sequence $((\lambda-\delta)^{\frac 12}  \,
\|\phi_{2,n}\|_{L^2}^{-1} \, \phi_{2,n})_{n \in \NN}$ being a minimizing
sequence for 
$J_{\lambda-\delta}^\infty$, it cannot vanish. Therefore, there exists $\gamma
> 0$, $R > 0$ and a sequence $(x_n)_{n \in \NN}$ of points of $\RR^3$
such that $\int_{x_n+B_R} |\phi_{2,n}|^2 \ge \gamma$. Then, denoting by
$\widetilde \phi_{2,n} = (\lambda-\delta)^{\frac 12} 
\, \|\phi_{2,n}\|_{L^2}^{-1} \, \phi_{2,n}(\cdot-x_n)$,
$$
\left\{ \begin{array}{l}
\dps \forall n \in \NN, \quad \widetilde \phi_{2,n} \in H^1(\RR^3), \quad
\int_{\RR^3} \widetilde \phi_{2,n}^2=\lambda-\delta \quad \mbox{and} \quad
\widetilde \phi_{2,n} \ge 0 \mbox{ a.e. on } \RR^3 \\  
\dps  \lim_{n \to +\infty} E^\infty(\widetilde \phi_{2,n}) =
J_{\lambda-\delta}^\infty  \\
 \dps - \div(a_{2,n}\nabla\widetilde\phi_{2,n}) +
(\rho_{\widetilde\phi_{2,n}} \star 
|\br|^{-1}) \widetilde\phi_{2,n} + V_{2,n}^-
\widetilde\phi_{2,n}^{1+2\beta_-} + V_{2,n}^+
\widetilde\phi_{2,n}^{1+2\beta_+} + 
\theta_n \widetilde\phi_{2,n} 
\mathop{\longrightarrow}_{n \to \infty}^{H^{-1}} 0  \\
\dps (\widetilde\phi_{2,n})_{n \in \NN} \mbox{ converges to }
v_2 \neq 0
\mbox{ weakly in $H^1$, strongly in $L^p_{\rm loc}$ for $2 \le p < 6$ and a.e. on
} \RR^3.  
\end{array} \right.
$$
It is important to note that the sequence $(a_{j,n})_{n \in \NN}$ and
$(V_{j,n}^\pm)_{n \in \NN}$ are such that
$$
\frac a2 \le a_{j,n} \le \frac b2 \quad \mbox{and} \quad
\|V_{j,n}^\pm\|_{L^\infty} \le 2^{\beta_+} C,
$$
where the constants $a$, $b$ and $C$ are those arising in
(\ref{eq:h0infty}) and (\ref{eq:h0EL}).

\medskip

We can now apply the concentration-compactness lemma to $(\widetilde
\phi_{1,n})_{n \in \NN}$ and to $(\widetilde \phi_{2,n})_{n \in \NN}$. As 
 $(\widetilde \phi_{j,n})_{n \in \NN}$ does not vanish, either it is
 compact or it splits into subsequences that are either compact or
 split, and so on. The next step consists in showing that this process
 necessarily terminates after a finite number of iterations. By
 contradiction, assume that it is not the case. We could then construct
 by repeated applications of the concentration-compactness lemma
 (see~\cite{these_Arnaud} for details) an infinity of sequences
 $(\widetilde\psi_{k,n})_{n \in \NN}$, such that for all $k
 \in \NN$ 
$$
\left\{ \begin{array}{l}
\dps \forall n \in \NN, \quad \widetilde \psi_{k,n} \in H^1(\RR^3), \quad
\int_{\RR^3} \widetilde \psi_{k,n}^2=\delta_k \quad \mbox{and} \quad
\widetilde \psi_{k,n} \ge 0 \mbox{ a.e. on } \RR^3  \\  
 \dps - \div(\widetilde a_{k,n} \nabla\widetilde\psi_{k,n}) +
(\rho_{\widetilde\psi_{k,n}} \star 
|\br|^{-1}) \widetilde\psi_{k,n} + \widetilde V_{k,n}^-
\widetilde\psi_{k,n}^{1+2\beta_-} + \widetilde V_{k,n}^+
\widetilde\psi_{k,n}^{1+2\beta_+} + 
\theta_n \widetilde\psi_{k,n} 
\mathop{\longrightarrow}_{n \to \infty}^{H^{-1}} 0  \\
\dps (\widetilde\psi_{k,n})_{n \in \NN} \mbox{ converges to }
w_k \neq 0
\mbox{ weakly in $H^1$, strongly in $L^p_{\rm loc}$ for $2 \le p < 6$ and a.e. on
} \RR^3,   
\end{array} \right.
$$
with
\begin{equation} \label{eq:sumK}
\sum_{k \in \NN} \delta_k \le \lambda,
\end{equation}
and with for all $k \in \NN$, 
$$
\frac a2 \le \widetilde a_{k,n} \le \frac b2 \quad \mbox{and} \quad
\|\widetilde V_{k,n}^\pm\|_{L^\infty} \le  2^{\beta_+} C.
$$
Using Lemma~\ref{lem:Hconvergence} to pass to the limit with respect to $n$
in the equation satisfied by $\widetilde\psi_{k,n}$, we obtain
\begin{equation} \label{eq:eqonwk}
-\div( \widetilde a_k \nabla w_k) + (\rho_{w_{k}} \star 
|\br|^{-1}) w_{k} + \widetilde V_{k}^-
w_{k}^{1+2\beta_-} + \widetilde V_{k}^+
w_{k}^{1+2\beta_+} + \theta w_{k} = 0,
\end{equation}
with 
$$
\frac a2 \le \widetilde a_{k} \le \frac {b^2}{2a} \quad \mbox{and} \quad
\|\widetilde V_{k}^\pm\|_{L^\infty} \le  2^{\beta_+} C.
$$
Besides, we infer from (\ref{eq:sumK}) that
$\dps \sum_{k \in \NN} \|w_k\|_{L^2}^2 \le \lambda$, hence that
$$
\lim_{k \to \infty} \|w_k\|_{L^2} = 0.
$$
It then easily follows from (\ref{eq:eqonwk}) that 
$$
\lim_{k \to \infty} \|\div(a_k\nabla w_k)\|_{L^2} = 0.
$$
We can now make use of the elliptic regularity result \cite{GT} (see
also the proof of Lemma~\ref{lem:exp_decay}) stating that
there exists a constant $C$, depending only on the positive constants
$a$ and $b$, such that for all $u \in H^1(\RR^3)$ such that
$\div(\widetilde a_k\nabla u) \in L^2(\RR^3)$, $u \in L^\infty(\RR^3)$ and 
$$
\| u \|_{L^\infty} \le C \left( \|u\|_{L^2} + \|\div(\widetilde a_k\nabla
  u)\|_{L^2} \right)
$$
and obtain
$$
\lim_{k \to \infty} \|w_k\|_{L^\infty} = 0.
$$
Lastly, we deduce from (\ref{eq:eqonwk}) that
$$
\theta \|w_k\|_{L^2}^2 \le C \left( \|w_k\|_{L^\infty}^{2\beta_-}  
+ \|w_k\|_{L^\infty}^{2\beta_+} \right) \|w_k\|_{L^2}^2. 
$$
As $\|w_k\|_{L^2} > 0$ for all $k \in \NN$, we obtain that
$$
\theta \le  C \left( \|w_k\|_{L^\infty}^{2\beta_-}  
+ \|w_k\|_{L^\infty}^{2\beta_+} \right) \mathop{\longrightarrow}_{k \to
\infty} 0,
$$
which obviously contradicts (\ref{eq:thetapositive}). We therefore
conclude from this analysis that, if dichotomy occurs, $(\phi_n)_{n \in
  \NN}$ splits in a finite number of compact bits. We are now going to
prove that this cannot be.

\medskip

If this was the case, there would exist $\delta_1 > 0$ and $\delta_2 >
0$ such that $0 < \delta_1+\delta_2 \le \lambda$ and two sequences 
$(u_{1,n})_{n \in \NN}$ and $(u_{2,n})_{n \in \NN}$ such that
$$
\left\{ \begin{array}{l}
\dps \forall n \in \NN, \quad u_{1,n} \in H^1(\RR^3), \quad \int_{\RR^3}
|u_{1,n}|^2 = \delta_1, \quad u_1 \ge 0 \mbox{ a.e. on } \RR^3 \\
\dps \lim_{n \to \infty} E(u_{1,n}) = I_{\delta_1} \\
\dps 
- \div(\alpha_{1,n} \nabla u_{1,n}) + V  u_{1,n} +
(\rho_{u_{1,n}} \star 
|\br|^{-1}) u_{1,n} +  v_{1,n}^-
u_{1,n}^{1+2\beta_-} +  v_{1,n}^+
u_{1,n}^{1+2\beta_+} + 
\theta_n u_{1,n} 
\mathop{\longrightarrow}_{n \to \infty}^{H^{-1}} 0
\end{array} \right.
$$
and
$$
\left\{ \begin{array}{l}
\dps \forall n \in \NN, \quad u_{2,n} \in H^1(\RR^3), \quad \int_{\RR^3}
|u_{2,n}|^2 = \delta_2 , \quad u_2 \ge 0 \mbox{ a.e. on } \RR^3 \\
\dps \lim_{n \to \infty}  E^\infty(u_{2,n}) = I_{\delta_2} \\
\dps - \div(\alpha_{2,n} \nabla u_{2,n}) +
(\rho_{u_{2,n}} \star 
|\br|^{-1}) u_{2,n} +  v_{2,n}^-
u_{2,n}^{1+2\beta_-} +  v_{2,n}^+
u_{2,n}^{1+2\beta_+} + 
\theta_n u_{2,n} 
\mathop{\longrightarrow}_{n \to \infty}^{H^{-1}} 0
\end{array} \right. 
$$
and converging weakly in $H^1(\RR^3)$ to $u_1$ and $u_2$ respectively, with $\|u_1\|_{L^2} = \delta_1$ and $\|u_2\|_{L^2} = \delta_2$ (as the weak limit of $(\phi_n)_{n \in \NN}$ in $L^2(\RR^3)$ is nonzero, one bit stays at
finite distance from the nuclei). It then follows from
Lemma~\ref{lem:preExistGGA} that $u_1$ and $u_2$ are minimizers for
$J_{\delta_1}$ and $J_{\delta_2}^\infty$ respectively:
$$
E(u_1) = J_{\delta_1}, \quad \|u_1\|_{L^2}^2 = \delta_1, \quad 
E(u_2) = J_{\delta_2}^\infty, \quad \|u_2\|_{L^2}^2 = \delta_2.
$$ 
Letting $n$ go to
infinity in the equations satisfied by $u_{1,n}$ and $u_{2,n}$ we also
have 
\begin{equation} \label{eq:ELu1}
- \div (\alpha_1 \nabla u_1) + V u_1 + (\rho_{u_1} \star 
|\br|^{-1}) u_1 + v_1^-
u_1^{1+2\beta_-} + v_1^+
u_1^{1+2\beta_+} + 
\theta u_1 = 0 
\end{equation}
and
\begin{equation} \label{eq:ELu2}
- \div (\alpha_2 \nabla u_2) + (\rho_{u_2} \star 
|\br|^{-1}) u_2 + v_2^-
u_2^{1+2\beta_-} + v_2^+
u_2^{1+2\beta_+} + 
\theta u_2 = 0,
\end{equation}
with $\frac a 2 \le \alpha_j \le \frac{b^2}{2a}$ and $\|
v_j^\pm \|_{L^\infty} \le 2^{\beta_+}C$. This shows in particular that $u_1$ and
$u_2$ are in $L^\infty(\RR^3)$. Applying Lemma~\ref{lem:exp_decay}, we
then obtain that there exists $\gamma > 0$, $f_1 \in H^1(\RR^3)$, $f_2 \in
H^1(\RR^3)$, $g_1 \in (L^2(\RR^3))^3$ and $g_2 \in (L^2(\RR^3))^3$ such that
\begin{equation} \label{eq:expDu1u2}
u_1 = e^{-\gamma |\cdot|} f_1, \quad u_2 = e^{-\gamma |\cdot|} f_2,
\quad \nabla u_1 = e^{-\gamma |\cdot|} g_1, \quad \nabla u_2 =
e^{-\gamma |\cdot|} g_2. 
\end{equation}
In addition, as $u_1 \ge 0$ and $u_2 \ge 0$, we also have $f_1 \ge 0$
and $f_2 \ge 0$. Let ${\bf e}$ be a given unit vector of $\RR^3$. For $t > 0$,
we set
$$
w_t(\br) = \alpha_t \, (u_1(\br) + u_2(\br-t {\bf e})) \quad
\mbox{where} \quad \alpha_t = (\delta_1+\delta_2)^{\frac 12} \, \|u_1 +
u_2(\cdot-t{\bf e})\|_{L^2}^{-1}. 
$$
Obviously, $w_t \in H^1(\RR^3)$ and $\|w_t\|_{L^2} = \delta_1+\delta_2$,
so that 
\begin{equation} \label{eq:wtpg}
E(w_t) \ge J_{\delta_1+\delta_2}.
\end{equation}
Besides,
\begin{eqnarray*}
\|u_1 + u_2(\cdot-t{\bf e})\|_{L^2}^2 & = & \int_{\RR^3} u_1^2 + \int_{\RR^3}
u_2^2 + 2\int_{\RR^3} f_1(\br) \, f_2(\br-t{\bf e}) \, e^{-\gamma
  (|\br|+|\br-t{\bf e}|)} \, d\br \\
& = & \delta_1 + \delta_2 + 2  \int_{\RR^3} f_1(\br) \, f_2(\br-t{\bf
  e}) \, e^{-\gamma (|\br|+|\br-t{\bf e}|)} \, d\br \\
& = & \delta_1 + \delta_2 +  O(e^{-\gamma t}),  
\end{eqnarray*}
yielding
$$
\alpha_t =  1 + O(e^{-\gamma t}).
$$
Likewise, we have
\begin{eqnarray}
& & \dps \int_{\RR^3} |\nabla w_t|^2 = \int_{\RR^3} |\nabla u_1|^2 +
\int_{\RR^3} |\nabla u_2|^2 + O(e^{-\gamma t})  \label{eq:Tsmall} \\
& & \dps   \int_{\RR^3} V |w_t|^2 = \int_{\RR^3} V |u_1|^2 +
\int_{\RR^3} V |u_2(\cdot-t{\bf e})|^2 + O(e^{-\gamma t})
\label{eq:Vnesmall} \\
& & \dps D(\rho_{w_t},\rho_{w_t}) = D(\rho_{u_1},\rho_{u_1}) +
D(\rho_{u_2},\rho_{u_2}) + 2 D(\rho_{u_1},\rho_{u_2(\cdot-t{\bf e})}) +
O(e^{-\gamma t}) \label{eq:Veesmall}.
\end{eqnarray}
The exchange-correlation term can then be dealt with as
follows. Denoting by
$$
r_t = \rho_{w_t} -\rho_{u_1} -\rho_{u_2(\cdot-t{\bf e})} = 2(\alpha_t^2-1) (|u_1|^2 +|u_2(\cdot-t{\bf e})|^2) + 4 \alpha_t^2
u_1u_2(\cdot-t{\bf e})
$$ 
and
$$ 
s_t = |\nabla w_t|^2-|\nabla u_1|^2 -|\nabla u_2(\cdot-t{\bf e})|^2 = (\alpha_t^2-1) (|\nabla u_1|^2 +|\nabla u_2(\cdot-t{\bf e})|^2) +
2\alpha_t^2 \nabla u_1 \cdot \nabla u_2(\cdot-t{\bf e}),
$$
and using (\ref{eq:h0infty}), (\ref{eq:h0EL}), (\ref{eq:expDu1u2}) and
the fact that $u_1$ and $u_2$ are bounded in $L^\infty(\RR^3)$, we obtain
\begin{eqnarray*}
& & \left| \int_{\RR^3} h(\rho_{w_t},|\nabla w_t|^2) - 
h(\rho_{u_1},|\nabla u_1|^2) - h(\rho_{u_2(\cdot-t{\bf e})},|\nabla
u_2(\cdot-t{\bf e})|^2) \right| \\ 
& \le &   \int_{B_{\frac t2}} \left| h(\rho_{u_1} + \rho_{u_2(\cdot-t{\bf
      e})} + r_t,|\nabla u_1|^2 + |\nabla u_2(\cdot-t{\bf e})|^2 + s_t) - 
h(\rho_{u_1},|\nabla u_1|^2) \right|  \\ & + & 
 \int_{t{\bf e}+B_{\frac t2}} \left| h(\rho_{u_2(\cdot-t{\bf e})}+\rho_{u_1}+r_t,|\nabla
   u_2(\cdot-t{\bf e})|^2+|\nabla u_1|^2+s_t)  - 
h(\rho_{u_2(\cdot-t{\bf e})},|\nabla
   u_2(\cdot-t{\bf e})|^2) \right| \\  
& + &  \int_{B_{\frac t2}} \left| h(\rho_{u_2(\cdot-t{\bf e})},|\nabla 
u_2(\cdot-t{\bf e})|^2) \right| + \int_{t{\bf e}+B_{\frac t2}} 
 \left| h(\rho_{u_1},|\nabla u_1|^2) \right| \\ & + & 
\int_{\RR^3 \setminus \left(B_{\frac t2}\cup (t{\bf
        e}+B_{\frac t2})\right)} |h(\rho_{w_t},|\nabla w_t|^2)|+  
h(\rho_{u_1},|\nabla u_1|^2)|+|h(\rho_{u_2(\cdot-t{\bf e})},|\nabla
u_2(\cdot-t{\bf e})|^2)| = O(e^{-\gamma t}).
\end{eqnarray*}
Combining (\ref{eq:Tsmall})-(\ref{eq:Veesmall}) together with the
above inequality, we obtain 
$$
E(w_t) \le J_{\delta_1} + J_{\delta_2}^\infty +  \int_{\RR^3} V
|u_2(\cdot-t{\bf e})|^2 + D(\rho_{u_1},\rho_{u_2(\cdot-t{\bf e})}) +
O(e^{-\gamma t}).
$$
Next, using (\ref{eq:expDu1u2}), we get
\begin{eqnarray*}
 \int_{\RR^3} V \rho_{u_2(\cdot-t{\bf e})} +
 D(\rho_{u_1},\rho_{u_2(\cdot-t{\bf e})}) 
& = & - Z t^{-1} \int_{\RR^3} \rho_{u_2} + t^{-1} \int_{\RR^3} \rho_{u_1} \,
\int_{\RR^3} \rho_{u_2} + o(t^{-1}) \\ & = &- 2 \delta_2 (Z-2\delta_1) t^{-1} +
o(t^{-1}).  
\end{eqnarray*}
Finally,
$$
E(w_t) \le J_{\delta_1} + J_{\delta_2}^\infty - 2 \delta_2 (Z-2\delta_1) t^{-1} +
o(t^{-1}) \le J_{\delta_1+\delta_2} - 2 \delta_2 (Z-2\delta_1) t^{-1} +
o(t^{-1}) < J_{\delta_1+\delta_2} 
$$
for $t$ large enough, which contradicts (\ref{eq:wtpg}). 
\end{proof}

\medskip

\begin{proof}[End of the proof of Lemma~\ref{lem:preTh2}]
As a consequence of the concentration-compactness lemma and of the first
three assertions of Lemma~\ref{lem:preTh2}, the sequence $(\phi_n)_{n
  \in \NN}$ converges to $\phi$ weakly in $H^1(\RR^3)$ and strongly in
$L^p(\RR^3)$ 
for all $2 \le p < 6$. In particular,
$$
\int_{\RR^3} \phi^2 = \lim_{n\to\infty} \int_{\RR^3} \phi_n^2 = \lambda.
$$
It follows from Lemma~\ref{lem:preExistGGA} that $\phi$ is a minimizer to
(\ref{eq:newPb}). 
\end{proof}

\section*{Appendix}

In this appendix, we prove three technical lemmas, which we make use of in
the proof of Theorem~\ref{th:GGA_2e}. These lemmas are concerned with
second-order elliptic operators of the form $-\div(A \nabla \cdot)$. For
the sake of generality, we deal with the case when $A$ is a
matrix-valued function, although $A$ is a real-valued function in the
two-electron GGA model.

\medskip

For $\Omega$ an open subset of $\RR^3$ and $0 < \lambda \le \Lambda <
\infty$, we denote by $M(\lambda,\Lambda,\Omega)$ the closed convex
subset of $L^\infty(\Omega,\RR^{3\times 3})$ consisting of the matrix
fields $A \in L^\infty(\Omega,\RR^{3\times 3})$ such that for
all $\xi \in \RR^3$ and almost all $x \in \Omega$, 
$$
\lambda |\xi|² \leq A(x) \xi \cdot \xi \quad \mbox{and} \quad
|A(x) \xi | \leq \Lambda |\xi|.
$$
We also introduce the set $M^s(\lambda,\Lambda,\Omega)$
of the matrix fields $A \in M(\lambda,\Lambda,\Omega)$ such 
that $A(x)$ is symmetric for almost all $x \in \Omega$. Obviously,
$M^s(\lambda,\Lambda,\Omega)$ also is a closed convex subset of
$L^\infty(\Omega,\RR^{3\times 3})$. 

\medskip

The first lemma is a H-convergence result, in the same line as those
proved in the original article by Murat and Tartar~\cite{MuratTartar},
which allows to pass to the 
limit in the Ekeland condition~(\ref{eq:Ekeland}). Recall that a
sequence $(A_n)_{n \in \NN}$ of elements of $M(\lambda,\Lambda,\Omega)$
is said to H-converge to some $A \in M(\lambda',\Lambda',\Omega)$, which
is denoted by $A_n \rightharpoonup_{\rm H} A$, if for every $\omega
\subset \subset \Omega$ the following property holds :  
$\forall f \in H^{-1}(\omega)$, the sequence $(u_n)_{n \in \NN}$ of the
elements of $H^1_0(\omega)$ such that
$$
-\mathrm{div}(A_n \nabla u_n) = \left. f \right|_{\omega} \quad \mbox{in
} H^{-1}(\omega)
$$ 
satisfies
\begin{eqnarray*}
\left \{
\begin{aligned}
& u_n \rightharpoonup u \; \mbox{ weakly in }  H^1_{0}(\omega) \\
& A_n \nabla u_n \rightharpoonup A \nabla u \; \mbox{ weakly in }  L^2(\omega)
\end{aligned}
\right.
\end{eqnarray*}
where $u$ is the solution in $H^1_{0}(\omega)$ to $-\mathrm{div}(A \nabla
u) = \left. f \right|_{\omega}$. It is known \cite{MuratTartar} that from any
bounded sequence $(A_n)_{n \in \NN}$ in $M(\lambda,\Lambda,\Omega)$
(resp. in  $M(\lambda,\Lambda,\Omega)$) one can extract a subsequence
which H-converges to some $A \in
M(\lambda,\lambda^{-1}\Lambda^2,\Omega)$ (resp. to some 
$A \in M^s(\lambda,\lambda^{-1}\Lambda^2,\Omega)$).

\medskip

\begin{lemma} \label{lem:Hconvergence} Let $\Omega$ be an open subset of
  $\RR^3$, $0 < \lambda \le \Lambda < \infty$, $0 < \lambda' \le \Lambda'
  < \infty$,  and $(A_n)_{n \in \NN}$ a
  sequence of elements of $M(\lambda,\Lambda,\Omega)$ which
  H-converges to some $A \in M(\lambda',\Lambda',\Omega)$. Let
  $(u_n)_{n \in \NN}$, $(f_n)_{n \in \NN}$ and $(g_n)_{n \in \NN}$ be
  sequences of elements of $H^1(\Omega)$, $H^{-1}(\Omega)$ and
  $L^2(\Omega)$ respectively, and $u \in H^1(\Omega)$, $f \in
  H^{-1}(\Omega)$ and $g \in L^2(\Omega)$ such that
\begin{eqnarray*}
\left \{
\begin{aligned}
&-\mathrm{div}(A_n \nabla u_n) = f_n + g_n \; \mbox{ in } H^{-1}(\Omega)
\mbox{ for all $n \in \NN$}\\
&u_n \rightharpoonup u \; \mbox{weakly in} \; H^1(\Omega) \\
&f_n \rightarrow f \; \mbox{strongly in} \; H^{-1}(\Omega) \\
&g_n \rightharpoonup g \; \mbox{weakly in} \; L^{2}(\Omega).
\end{aligned}
\right.
\end{eqnarray*}
Then $-\div(A\nabla u) = f + g$ and $A_n \nabla u_n \rightharpoonup A
\nabla u$ weakly in $L^2(\Omega)$.
\end{lemma}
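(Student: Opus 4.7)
The plan is to extract a weak limit $\xi$ of $A_n\nabla u_n$ in $L^2(\Omega)^3$, pass to the limit in the equation to obtain $-\div(\xi) = f+g$, and then identify $\xi = A\nabla u$ via a local compensated-compactness argument using oscillating test functions built from the H-convergence of the transposed matrices. Since $(A_n)$ is uniformly bounded in $L^\infty$ and $(\nabla u_n)$ is bounded in $L^2(\Omega)^3$, the sequence $\xi_n := A_n\nabla u_n$ is bounded in $L^2(\Omega)^3$, so up to extraction $\xi_n \rightharpoonup \xi$ weakly in $L^2(\Omega)^3$. Passing to distributional limits in $-\div(A_n\nabla u_n) = f_n + g_n$ immediately yields $-\div(\xi) = f+g$ in $\mathcal{D}'(\Omega)$, using $f_n\to f$ strongly in $H^{-1}(\Omega)$ and $g_n\rightharpoonup g$ weakly in $L^2(\Omega)$. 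It remains to prove $\xi = A\nabla u$ a.e.; once this is done, uniqueness of the weak limit lifts the convergence to the whole sequence.

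To identify $\xi$ locally, I would fix $\omega \subset\subset \omega_0 \subset\subset \Omega$ and, for each $i\in\{1,2,3\}$, construct oscillating test functions as follows. Choose a cutoff $\chi \in C_c^\infty(\omega_0)$ equal to $1$ on a neighbourhood of $\overline{\omega}$, set $w^{(i)}(x) := \chi(x)\,x_i$, and define $v_n^{(i)} \in H_0^1(\omega_0)$ as the unique weak solution of $-\div(A_n^T \nabla v_n^{(i)}) = -\div(A^T \nabla w^{(i)})$ in $H^{-1}(\omega_0)$. Using the Murat--Tartar fact that $A_n \rightharpoonup_{\rm H} A$ implies $A_n^T \rightharpoonup_{\rm H} A^T$, the sequence $(v_n^{(i)})$ converges weakly in $H_0^1(\omega_0)$ to the solution of the limiting equation, which by uniqueness coincides with $w^{(i)}$; moreover $A_n^T \nabla v_n^{(i)} \rightharpoonup A^T \nabla w^{(i)}$ weakly in $L^2(\omega_0)^3$. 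In particular $\nabla v^{(i)} = e_i$ on $\omega$.

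The heart of the proof is a double integration by parts. For $\varphi \in C_c^\infty(\omega)$, both $\varphi v_n^{(i)}$ and $\varphi u_n$ lie in $H_0^1(\omega_0)$ and hence, by extension by zero, in $H_0^1(\Omega)$. Testing the equation for $u_n$ against $\varphi v_n^{(i)}$ and the equation for $v_n^{(i)}$ against $\varphi u_n$ produces two identities in which the obstinate term $\int \varphi\, A_n \nabla u_n \cdot \nabla v_n^{(i)}$ appears with the same coefficient (via the pointwise identity $A_n^T \nabla v_n^{(i)} \cdot \nabla u_n = A_n \nabla u_n \cdot \nabla v_n^{(i)}$) and is therefore eliminated upon subtracting. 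All remaining terms are products of a weakly convergent $L^2$ vector field with a strongly convergent $L^2_{\rm loc}$ scalar (by Rellich, $u_n \to u$ and $v_n^{(i)} \to v^{(i)}$ strongly in $L^2_{\rm loc}$), together with the duality pairing $\langle f_n + g_n, \varphi v_n^{(i)}\rangle$, which passes to $\langle f+g, \varphi v^{(i)}\rangle$ using $f_n \to f$ strongly in $H^{-1}(\Omega)$ and $g_n \to g$ strongly in $H^{-1}_{\rm loc}(\Omega)$ (compactness of $L^2 \hookrightarrow H^{-1}$ on bounded open sets). Rewriting this pairing as $\int \xi \cdot \nabla(\varphi v^{(i)})$ via $-\div(\xi) = f+g$, the limiting identity collapses to $\int_{\omega}\varphi\,(\xi - A\nabla u)\cdot \nabla v^{(i)} = 0$ for every $\varphi \in C_c^\infty(\omega)$. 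Since $\nabla v^{(i)} = e_i$ on $\omega$, this forces $\xi_i = (A\nabla u)_i$ a.e.\ on $\omega$; varying $i$ and $\omega$ yields $\xi = A\nabla u$ throughout $\Omega$.

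The principal obstacle is precisely the product of two only-weakly-convergent $L^2$ fields, $A_n\nabla u_n$ and $\nabla v_n^{(i)}$: no separate passage to the limit is available, so one must manufacture this product in two different weak formulations with a common coefficient and eliminate it by subtraction — the compensated-compactness device that motivates the introduction of the auxiliary sequence $(v_n^{(i)})$ in the first place. A secondary technical point is that H-convergence is defined only on relatively compact subdomains, forcing the whole argument to be carried out locally and requiring one to verify carefully that the test functions $\varphi u_n$ and $\varphi v_n^{(i)}$ belong to the correct $H_0^1$ space.
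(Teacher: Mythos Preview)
Your proof is correct and follows essentially the same Murat--Tartar compensated-compactness route as the paper: extract a weak limit $\xi$ of $A_n\nabla u_n$, pass to the limit in the equation, and identify $\xi=A\nabla u$ locally by pairing against auxiliary solutions of the adjoint problem $-\div(A_n^{T}\nabla v_n)=q$ on $\omega\subset\subset\Omega$, so that the dangerous product $A_n\nabla u_n\cdot\nabla v_n$ is handled via two integrations by parts. The only cosmetic difference is that you pick the specific right-hand sides $q=-\div(A^{T}\nabla(\chi x_i))$ so that the limiting corrector satisfies $\nabla v^{(i)}=e_i$ on $\omega$ and the identification is read off componentwise, whereas the paper lets $q\in H^{-1}(\omega)$ be arbitrary and concludes by density; the underlying mechanism is identical.
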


\medskip

The second lemma is an extension of \cite[Lemma~II.1]{Lions} and of a
classical result on the ground state of Schrödinger
operators~\cite{RS4}. Recall that
\begin{eqnarray*}
L^2(\RR^3)+L^\infty_\epsilon(\RR^3) & = & \bigg\{ {\cal W} \, | \, \forall
  \epsilon > 0, \; \exists ({\cal W}_2,{\cal W}_\infty) \in L^2(\RR^3)
  \times L^\infty(\RR^3) \; \mbox{ s.t. } \\
& & \qquad \qquad \qquad \; \|{\cal W}_\infty\|_{L^\infty} \le
  \epsilon, \; {\cal W}={\cal W}_2+{\cal W}_\infty\bigg\}. 
\end{eqnarray*} 

\medskip

\begin{lemma} \label{lem:H_GGA}
Let $0 < \lambda
  \le \Lambda < \infty$, $A \in M^s(\lambda,\Lambda,\RR^3)$, 
$W \in L^2(\RR^3)+L^\infty_\epsilon(\RR^3)$ such that
$W_+ = \max(0,W) \in
  L^2(\RR^3)+L^3(\RR^3)$ and $\mu$ a positive Radon measure on $\RR^3$
  such that $\mu(\RR^3) < Z=\sum_{k=1}^M z_k$. Then, 
$$
H = - \div(A \nabla \cdot) + V + \mu \star |\br|^{-1} + W
$$
defines a self-adjoint operator on $L^2(\RR^3)$
with domain 
$$
D(H) = \left\{ u \in H^1(\RR^3) \, | \, \div(A\nabla u) \in
  L^2(\RR^3) \right\}. 
$$
Besides,
$D(H)$ is dense in $H^1(\RR^3)$ and
included in $L^\infty(\RR^3) \cap C^{0,\alpha}(\RR^3)$ for some
$\alpha > 0$, and any function of $D(H)$ vanishes at infinity. 
In addition,
\begin{enumerate}
\item $H$ is bounded from below, $\sigma_{\rm ess}(H) \subset
  [0,\infty)$ and $H$ has an infinite number of negative eigenvalues;
\item the lowest eigenvalue $\mu_1$ of $H$ is simple and there exists
  an eigenvector $u_1 \in D(H)$ of $H$ associated with $\mu_1$ such that
  $u_1 > 0$ on $\RR^3$;
\item if $w \in D(H)$ is an eigenvector of $H$ such that $w \geq 0$ on
  $\mathbb{R}^3$, then there exists $\alpha > 0$ such that $w = \alpha
  u_1$. 
\end{enumerate}
\end{lemma}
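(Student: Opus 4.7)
The approach is to construct $H$ from its quadratic form via the KLMN theorem, to identify the domain and obtain H\"older regularity through the De Giorgi--Nash--Moser theory for $-\div(A\nabla\cdot)$, to locate the essential spectrum by Persson's formula and exhibit infinitely many negative eigenvalues using the attractive Coulomb tail $-(Z-\mu(\RR^3))/|\br|$, and finally to derive simplicity and positivity of the ground state from Moser's Harnack inequality. The last step is what I expect to require the most care.

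First I would introduce the closed symmetric form
$$
q(u) = \int_{\RR^3} A\nabla u \cdot \nabla u + \int_{\RR^3} V|u|^2 + \int_{\RR^3} (\mu \star |\br|^{-1}) |u|^2 + \int_{\RR^3} W|u|^2, \quad u \in H^1(\RR^3),
$$
and check that the three potential terms are infinitesimally form bounded with respect to $\int A\nabla u \cdot \nabla u \ge \lambda\|\nabla u\|_{L^2}^2$: Hardy's inequality handles $V$, a Newton-type estimate together with $\mu(\RR^3)<\infty$ handles $\mu \star |\br|^{-1}$, and the splitting $W=W_2+W_\infty$ combined with $H^1(\RR^3)\hookrightarrow L^6(\RR^3)$ handles $W$. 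By KLMN, $H$ is self-adjoint and bounded below, with form domain $H^1(\RR^3)$. The domain is the one claimed: for $u\in D(H)$, the lower-order terms map $u$ into $L^2(\RR^3)$ by the same estimates, so $\div(A\nabla u)=-Hu+(V+\mu\star|\br|^{-1}+W)u\in L^2(\RR^3)$; the converse is identical. Density of $D(H)$ in $H^1(\RR^3)$ is immediate since $C^\infty_c(\RR^3)\subset D(H)$. For the regularity, any $u\in D(H)$ solves $-\div(A\nabla u)=f$ with $f\in L^p_{\mathrm{loc}}(\RR^3)$ for some $p>3/2$, so De Giorgi--Nash--Moser provides $u\in L^\infty_{\mathrm{loc}}\cap C^{0,\alpha}_{\mathrm{loc}}$; a standard localization argument combined with $u\in H^1(\RR^3)$ then yields $u\in L^\infty(\RR^3)\cap C^{0,\alpha}(\RR^3)$ with $u(x)\to 0$ as $|x|\to\infty$.

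For the spectrum, Persson's formula
$$
\inf\sigma_{\mathrm{ess}}(H)=\lim_{R\to\infty}\inf\bigl\{q(u),\; u\in C^\infty_c(\RR^3\setminus\overline{B}_R),\;\|u\|_{L^2}=1\bigr\},
$$
together with $V(\br)\to 0$, $(\mu\star|\br|^{-1})(\br)\to 0$ and the smallness of $W$ at infinity (in the $L^2+L^\infty_\epsilon$ sense), shows $\inf\sigma_{\mathrm{ess}}(H)\ge 0$. For the infinitely many negative eigenvalues I would observe that at infinity
$$
V+\mu\star|\br|^{-1}=-\frac{Z-\mu(\RR^3)}{|\br|}+O(|\br|^{-2})
$$
with $Z-\mu(\RR^3)>0$, so that an explicit family of suitably scaled and well-separated test functions concentrated around the nuclei (in the spirit of \cite[Lemma~II.1]{Lions}) produces an $L^2$-orthonormal sequence $(v_k)_{k\in\NN}$ with $\langle v_k,Hv_k\rangle<0$; the min-max principle then gives infinitely many eigenvalues below $0$ (of finite multiplicity since $\inf\sigma_{\mathrm{ess}}(H)\ge 0$).

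The delicate point is the simplicity and strict positivity of the ground state. Let $u_1\in D(H)$ be a normalized minimizer of $q$, whose existence follows from $\mu_1<0\le\inf\sigma_{\mathrm{ess}}(H)$. Because $A$ is symmetric and $|\nabla |u_1||=|\nabla u_1|$ a.e., one has $q(|u_1|)=q(u_1)$, so $|u_1|$ is also a minimizer and a non-negative solution of
$$
-\div(A\nabla |u_1|)+(V+\mu\star|\br|^{-1}+W-\mu_1)|u_1|=0.
$$
Thanks to the previous paragraph, the zero-order coefficient is locally bounded, so Moser's Harnack inequality for uniformly elliptic divergence-form operators forces $|u_1|$ to be either identically zero or strictly positive; the former being ruled out, $u_1$ has constant sign and can be chosen strictly positive. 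Simplicity follows: any two $L^2$-orthogonal ground states could both be made strictly positive by the same argument, which contradicts orthogonality. Finally, if $w\in D(H)$ is a non-negative eigenvector with $Hw=\mu_k w$, then either $\mu_k\ne\mu_1$ and orthogonality $\int_{\RR^3}w\,u_1=0$ combined with $w\ge 0$, $u_1>0$ forces $w\equiv 0$ (a contradiction), or $\mu_k=\mu_1$ and simplicity together with $w\ge 0$ and $w\ne 0$ yields $w=\alpha u_1$ for some $\alpha>0$.
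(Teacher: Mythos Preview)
Your overall strategy is sound and close in spirit to the paper's, but two concrete assertions are false.

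First, the claim $C^\infty_c(\RR^3)\subset D(H)$ fails in general: for $A$ merely in $M^s(\lambda,\Lambda,\RR^3)$ with no further regularity, the distributional divergence of $A\nabla u$ need not lie in $L^2(\RR^3)$ even when $u\in C^\infty_c(\RR^3)$ (take $A$ with a jump across a smooth surface). The paper avoids this by first defining $H_0=-\div(A\nabla\cdot)$ via the closed positive form $q_0(u,v)=\int A\nabla u\cdot\nabla v$ on $H^1(\RR^3)$; density of $D(H_0)$ in the form domain $H^1(\RR^3)$ is then automatic from form theory. With the elliptic regularity bound $\|u\|_{L^\infty}\le C(\|u\|_{L^2}+\|H_0u\|_{L^2})$ for $H_0$ in hand (no potential involved, hence no circularity), the paper proves that $\mathcal W=V+\mu\star|\br|^{-1}+W$ is $H_0$-compact and concludes $D(H)=D(H_0)$ and $\sigma_{\rm ess}(H)=\sigma_{\rm ess}(H_0)\subset[0,\infty)$ by Weyl's theorem. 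Your KLMN\,+\,Persson route is a legitimate alternative for locating the essential spectrum, but your domain identification then needs a bootstrap you do not spell out: from $u\in H^1$ alone one only gets $W_2u\in L^{3/2}$, not $L^2$, so ``the same estimates'' do not directly give $\mathcal Wu\in L^2(\RR^3)$.

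Second, for the step you yourself flag as delicate, you assert that the zero-order coefficient $V+\mu\star|\br|^{-1}+W-\mu_1$ is ``locally bounded'' before invoking Harnack. This is false: $V$ has Coulomb singularities $-z_k/|\br-\bR_k|$ and is unbounded near each nucleus. What is true is that this coefficient lies in $L^p_{\rm loc}(\RR^3)$ for some $p>3/2$ (indeed $p=2$), and this is exactly the integrability threshold under which a Harnack inequality still holds for $-\div(A\nabla u)+cu=0$; the paper invokes Stampacchia's version~\cite{Stampacchia} for precisely this reason. With that correction, your positivity and simplicity arguments coincide with the paper's.
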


\medskip

The third lemma is used to prove that the ground state density of the
GGA Kohn-Sham model exhibits exponential decay at infinity (at least
for the two electron model considered in this article). 

\medskip

\begin{lemma} \label{lem:exp_decay}
Let $0 < \lambda \le \Lambda < \infty$, $A \in
  M(\lambda,\Lambda,\RR^3)$, ${\cal V}$ a function of $L^{\frac 65}_{\rm
    loc}(\RR^3)$ which vanishes at infinity, $\theta > 0$ and $u \in
  H^1(\RR^3)$ such that
$$
- \mathrm{div}(A \nabla u) + {\cal V} u + \theta u = 0 \quad \mbox{in }
{\cal D}'(\mathbb{R}^3).  
$$
Then there exists $\gamma > 0$ depending on $(\lambda,\Lambda,\theta)$
such that $e^{\gamma |\br|}u \in H^1(\RR^3)$. 
\end{lemma}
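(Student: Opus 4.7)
The strategy is a standard Agmon-type weighted energy estimate: multiply the equation by $e^{2\gamma \phi_R(x)} u(x)$ for a bounded Lipschitz weight $\phi_R$ approximating $|x|$, use ellipticity together with the smallness of $\mathcal{V}$ at infinity to absorb all error terms, and pass to the limit $R\to\infty$. Since $\mathcal{V}$ vanishes at infinity, fix $R_0 > 0$ such that $|\mathcal{V}(x)| \leq \theta/4$ for almost every $|x| \geq R_0$. For $R > 0$ I would introduce
\[
\phi_R(x) = \min\!\bigl(\max(|x|-R_0,0),R\bigr),
\]
which is Lipschitz with $|\nabla \phi_R| \leq 1$ a.e., and put $v_R = e^{\gamma \phi_R}$ for a small $\gamma > 0$ to be fixed below. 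Because $v_R$ is bounded and Lipschitz, $v_R^2 u \in H^1(\RR^3)$ and is admissible as a test vector in the distributional equation (via a standard cut-off/density argument: test with $\chi_n^2 v_R^2 u$ for $\chi_n \in C_c^\infty(\RR^3)$ with $\chi_n \uparrow 1$ and $|\nabla \chi_n| \to 0$, the boundary contributions vanishing thanks to $u \in H^1$).

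The Leibniz rule $\nabla(v_R^2 u) = 2\gamma v_R^2 u \,\nabla \phi_R + v_R^2 \nabla u$ and the equation yield
\[
\int_{\RR^3} v_R^2\, A\nabla u\cdot\nabla u + 2\gamma \int_{\RR^3} v_R^2\, u\, A\nabla u\cdot\nabla\phi_R + \int_{\RR^3} v_R^2(\mathcal{V}+\theta)u^2 = 0.
\]
Uniform ellipticity bounds the first term from below by $\lambda \int v_R^2|\nabla u|^2$, and Cauchy-Schwarz combined with Young's inequality together with $|\nabla \phi_R| \leq 1$ bounds the cross-term in absolute value by $(\lambda/2) \int v_R^2|\nabla u|^2 + (2\gamma^2 \Lambda^2/\lambda) \int v_R^2 u^2$. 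Choosing $\gamma$ small enough that $2\gamma^2 \Lambda^2/\lambda \leq \theta/4$ gives
\[
\frac{\lambda}{2}\int v_R^2|\nabla u|^2 + \int v_R^2 \Bigl(\mathcal{V} + \theta - \frac{2\gamma^2 \Lambda^2}{\lambda}\Bigr)u^2 \leq 0.
\]

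Splitting the second integral into the pieces over $B_{R_0}$ and $B_{R_0}^c$, and observing that the coefficient of $u^2$ is $\geq \theta/2$ on $B_{R_0}^c$ while $v_R \equiv 1$ on $B_{R_0}$, one obtains
\[
\frac{\lambda}{2}\int_{\RR^3} v_R^2|\nabla u|^2 + \frac{\theta}{2}\int_{B_{R_0}^c} v_R^2 u^2 \leq C_0,
\]
where $C_0$ depends on $u$, $\mathcal{V}$, $\theta$, $R_0$ but crucially not on $R$. Letting $R \to +\infty$ and applying Fatou's lemma yields $e^{\gamma(|\cdot|-R_0)^+}u \in L^2(\RR^3)$ and $e^{\gamma(|\cdot|-R_0)^+}\nabla u \in L^2(\RR^3)$; since $(|x|-R_0)^+ \geq |x|-R_0$, this gives $e^{\gamma|\cdot|}u \in L^2$ and $e^{\gamma|\cdot|}\nabla u \in L^2$. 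Finally the chain rule $\nabla(e^{\gamma|\cdot|}u) = \gamma e^{\gamma|\cdot|}(x/|x|)u + e^{\gamma|\cdot|}\nabla u$ places $e^{\gamma|\cdot|}u$ in $H^1(\RR^3)$.

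\textbf{Main obstacle.} The subtle point is the legitimacy of the computation, in particular the finiteness of $\int_{B_{R_0}}|\mathcal{V}|u^2$ under the weak hypothesis $\mathcal{V}\in L^{6/5}_{\mathrm{loc}}$: with only $u \in L^6$ available, Hölder's inequality gives $\mathcal{V}u \in L^1_{\mathrm{loc}}$ but not directly $\mathcal{V} u^2 \in L^1_{\mathrm{loc}}$. To close this, one invokes a classical elliptic regularity result of De Giorgi-Nash-Moser type for divergence-form operators with measurable coefficients and right-hand side $-(\mathcal{V}+\theta)u\in L^1_{\mathrm{loc}}$: a bootstrap argument upgrades $u$ to $L^\infty_{\mathrm{loc}}(\RR^3)$ (indeed to $C^{0,\alpha}_{\mathrm{loc}}$), which makes $\mathcal{V}u^2\in L^1_{\mathrm{loc}}$ and thereby legitimizes every integral in the argument above.
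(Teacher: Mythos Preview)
Your Agmon-type strategy is sound and close in spirit to the paper's argument, but the way you handle the ball $B_{R_0}$ contains a genuine gap. The De Giorgi--Nash--Moser bootstrap you invoke to obtain $u\in L^\infty_{\rm loc}$ does \emph{not} apply here: for an equation $-\div(A\nabla u)+cu=0$ the standard Moser iteration closes only when the potential $c$ lies in $L^q_{\rm loc}$ with $q>3/2$, whereas the hypothesis gives only ${\cal V}\in L^{6/5}_{\rm loc}$ and $6/5<3/2$. In fact $u$ need not be locally bounded under these assumptions: a function behaving like $|\br|^{-\alpha}$ near the origin with $0<\alpha<1/2$ is in $H^1$, and solving $-\Delta u+{\cal V}u+\theta u=0$ for it produces ${\cal V}\sim |\br|^{-2}$, which is in $L^{6/5}_{\rm loc}(\RR^3)$. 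So the integral $\int_{B_{R_0}}{\cal V}u^2$ you need to bound may genuinely diverge.

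The fix is simple and in fact makes the argument cleaner: arrange that your test function vanish on $B_{R_0}$. For instance test with $(v_R^2-1)u$ instead of $v_R^2 u$ (note $v_R\equiv 1$ on $B_{R_0}$), or insert a cutoff $\eta$ supported in $B_{R_0}^c$ and test with $\eta^2 v_R^2 u$. Then every integral involving ${\cal V}$ lives where $|{\cal V}|\le\theta/4$, the problematic term never appears, and the rest of your estimate goes through unchanged. The paper achieves the same effect by a slightly different route: it restricts from the outset to the exterior domain $B_R^c$, recasts the problem as a boundary value problem there, introduces the weighted space $W_0^\gamma(B_R^c)=\{v\in H^1_0(B_R^c):e^{\gamma|\cdot|}v\in H^1(B_R^c)\}$, and proves coercivity of the associated bilinear form for $\gamma<\min(\lambda/\Lambda,\theta/(2\Lambda))$, concluding via Lax--Milgram. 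Your truncated-weight-plus-Fatou argument and the paper's Lax--Milgram argument are two standard incarnations of the same Agmon estimate; once you excise $B_{R_0}$ from your computation they are essentially equivalent.
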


\medskip

\begin{proof}[Proof of Lemma~\ref{lem:Hconvergence}] 
Let us denote by $\xi_n = A_n \nabla u_n$. One can extract from the
sequence $(\xi_n)_{n \in \NN}$, which is bounded in $L^2$, a subsequence
$(\xi_{n_k})_{k \in \NN}$ which converges weakly in
$L^2(\Omega)$ to some $\xi$ solution to $-\div(\xi) = f + g$ in
$H^{-1}(\Omega)$. The proof will be completed if we can show 
that we necessarily have $\xi = A \nabla u$. 
Consider $\omega \subset \subset \Omega$, $q \in H^{-1}(\omega)$ and
$v_n \in H^1_0(\omega)$ satisfying  
$$ 
-\mathrm{div}(A_n^{*} \nabla v_n) = q \quad \mbox{in } H^{-1}(\omega). 
$$
As the sequence $(A_n^{*})_{n \in \NN}$ $H$-converges to $A^{*}$ 
\cite{MuratTartar}, it holds
\begin{eqnarray*}
\left \{
\begin{aligned}
& v_n \rightharpoonup v  \; \mathrm{in} \; H^1_0(\omega) \\
& A_n^{*} \nabla v_n \rightharpoonup A^{*}\nabla v  \; \mathrm{in} \;
L^{2}(\omega)
\end{aligned}
\right.
\end{eqnarray*}
where $v$ is the solution to $-\div(A^* \nabla v) = q$ in $H^1_0(\omega)$.
Let $\phi \in C^\infty_c(\omega)$. As
\begin{eqnarray*}
\left \{
\begin{aligned}
&\phi v_n \rightharpoonup \phi v \; \mathrm{in} \; H^{1}_0(\omega)\\
&\phi v_n \rightarrow \phi v  \; \mathrm{in} \; L^{2}(\omega)\\
&\nabla\phi \, v_n \rightarrow \nabla \phi \, v  \; \mathrm{in} \;
(L^{2}(\omega))^3\\ 
&\nabla\phi \, u_n \rightarrow \nabla \phi \, u  \; \mathrm{in} \;
(L^{2}(\omega))^3, 
\end{aligned} 
\right.
\end{eqnarray*}
we have on the one hand
\begin{eqnarray*}
\int_{\omega} \xi_{n_k} \cdot \nabla v_{n_k} \phi &=& -(\mathrm{div
  \xi_{n_k}}, \phi v_{n_k})_{H^{-1}(\omega),H^1_0(\omega)} -
\int_{\omega} \xi_{n_k} \cdot \nabla \phi \, v_{n_k}  \\
&=& -(f_{n_k}, \phi v_{n_k})_{H^{-1}(\omega),H^1_0(\omega)} -
\int_{\omega} g_{n_k} \phi v_{n_k} - \int_{\omega} \xi_{n_k} \cdot
\nabla \phi \, v_{n_k} \\ 
&\rightarrow & 
-(f, \phi v)_{H^{-1}(\omega),H^1_0(\omega)} - \int_{\omega} g \phi v -
\int_{\omega} \xi \cdot \nabla \phi \, v \\ 
& = &  -(\mathrm{div \xi}, \phi v)_{H^{-1}(\omega),H^1_0(\omega)} -
\int_{\Omega} \xi \cdot \nabla \phi \, v = \int_{\Omega} \xi \cdot
\nabla v \, \phi,
\end{eqnarray*}
and on the other hand
\begin{eqnarray*}
\int_{\omega} \xi_{n_k} \cdot \nabla v_{n_k} \phi &=& 
\int_\omega \nabla u_{n_k} \cdot (A^*\nabla v_{n_k}) \phi \\
&=& - \int_\omega u_{n_k} (A^* \nabla v_{n_k}) \cdot \nabla \phi +
\int_\omega u_{n_k} q \phi \\ 
&\rightarrow & - \int_\omega u (A^* \nabla v) \cdot \nabla \phi +
\int_\omega u q \phi = \int_\omega \nabla u \cdot (A^* \nabla v) \phi = 
 \int_\omega (A \nabla u) \cdot \nabla v \, \phi. 
\end{eqnarray*}
Therefore,
$$
\int_{\omega} \xi \cdot \nabla v \phi = \int_\omega (A \nabla u) \cdot
\nabla v \phi. 
$$
As the above equality holds true for all $\omega$, all $v \in
H^1_0(\omega)$ and all
$\phi \in C^\infty_c(\omega)$, we finally obtain $\xi = A \nabla u$.
\end{proof}

\medskip

\begin{proof}[Proof of Lemma~\ref{lem:H_GGA}] 
The quadratic form $q_0$ on $L^2(\RR^3)$ with domain $D(q_0) =
H^1(\RR^3)$, defined by
$$
\forall (u,v) \in D(q_0) \times D(q_0), \quad 
q_0(u,v) = \int_{\RR^3} A \nabla u \cdot \nabla v,
$$
is symmetric and positive. It is also closed since the norm
$\sqrt{\|\cdot\|_{L^2}^2+q_0(\cdot)}$ is equivalent to the usual $H^1$
norm. This implies that $q_0$ is the quadratic form of a unique
self-adjoint operator $H_0$ on $L^2(\RR^3)$, whose domain $D(H_0)$ is
dense in $H^1(\RR^3)$. It is easy to check that $D(H_0) = \left\{u \in
  H^1(\RR^3) \; | \; \div(A\nabla u) \in L^2(\RR^3) \right\}$ and that
$$
\forall u \in D(H_0), \quad H_0u = -\div(A\nabla u).
$$ 
Using classical elliptic regularity results \cite{GT}, we obtain that
there exists two constants $0 < \alpha < 1$ and $C \in \RR_+$ (depending
on $\lambda$ and $\Lambda$) such that for all regular bounded domain
$\Omega \subset \subset \RR^3$, and all $v \in H^1(\Omega)$ such that
$\div(A\nabla v) \in L^2(\Omega)$, 
$$
\|v\|_{C^{0,\alpha}(\bar\Omega)}:= \sup_\Omega |v| + \sup_{(\br,\br')
  \in \Omega \times \Omega} \frac{|v(\br)-v(\br')|}{|\br-\br'|^\alpha} 
\le C \left(\|v\|_{L^2(\Omega)}+\|\div(A\nabla v)\|_{L^2(\Omega)} \right).  
$$
It follows that on the one hand, $D(H_0) \hookrightarrow
L^\infty(\RR^3) \cap C^{0,\alpha}(\RR^3)$, with
\begin{equation} \label{eq:boundDH0}
\forall u \in D(H_0), \quad \| u \|_{L^\infty(\RR^3)} +
\sup_{(\br,\br')
  \in \RR^3 \times \RR^3} \frac{|v(\br)-v(\br')|}{|\br-\br'|^\alpha} \le
C \left(\|u\|_{L^2}+\|H_0u\|_{L^2} \right), 
\end{equation}
and that on the other hand, any $u \in D(H_0)$ vanishes at infinity.

\medskip

Let us now prove that the multiplication by ${\cal W} = V + \mu \star
|\br|^{-1} + W$ defines a compact perturbation of $H_0$. For this
purpose, we consider a sequence $(u_n)_{n \in \NN}$ of elements of
$D(H_0)$ bounded for the norm $\|\cdot\|_{H_0} = (\|\cdot\|_{L^2}^2+
\|H_0\cdot\|_{L^2}^2)^{\frac 12}$. Up to
extracting a subsequence, we can assume without loss of generality that
there exists $u \in D(H_0)$ such that:
\begin{eqnarray*}
\left \{
\begin{aligned}
u_n &\rightharpoonup u \; \mathrm{in} \; H^1(\mathbb{R}^3) \; \mathrm{and} \; L^p(\mathbb{R}^3) \; \mathrm{for} \; 2 \leq p \leq 6\\
u_n &\rightarrow u \; \mathrm{in} \; L^p_{loc}(\mathbb{R}^3) \; \mathrm{with} \; 2 \leq p < 6\\
u_n &\rightarrow u \; a.e.
\end{aligned}
\right.
\end{eqnarray*}
Besides, it is then easy to check that the potential ${\cal W} = V + \mu
\star |\br|^{-1} + W$ belongs to $L^2 + L^\infty_\epsilon(\RR^3)$.
Let $\epsilon > 0$ and  $({\cal W}_2,{\cal W}_\infty) \in L^2(\RR^3)
  \times L^\infty(\RR^3)$ such that $\|{\cal W}_\infty\|_{L^\infty} \le
  \epsilon$ and ${\cal W}={\cal W}_2+{\cal W}_\infty$. On the one hand,
$$
\| {\cal W}_\infty (u_n-u) \|_{L^2} \le 2 \, \epsilon \, \sup_{n \in
  \NN} \|u_n\|_{H_0},  
$$
and on the other hand
$$
\lim_{n \to \infty} \| {\cal W}_2 (u_n-u) \|_{L^2} = 0.
$$
The latter result is obtained from Lebesgue's dominated convergence
theorem, using the fact that it follows from (\ref{eq:boundDH0}) that
$(u_n)_{n \in \NN}$ is bounded in $L^\infty(\RR^3)$. Consequently,
$$
\lim_{n \to \infty} \| {\cal W}u_n - {\cal W}u \|_{L^2} = 0,
$$
which proves that ${\cal W}$ is a $H_0$-compact operator. We can
therefore deduce from Weyl's theorem that $H = H_0 +{\cal W}$ defines a
self-adjoint operator on $L^2(\RR^3)$ with domain $D(H)=D(H_0)$, and
that $\sigma_{\rm ess}(H) = \sigma_{\rm ess}(H_0)$. As $q_0$ is
positive, $\sigma(H_0) \subset \RR_+$ and therefore 
$\sigma_{\rm ess}(H) \subset \RR_+$.

\medskip

Let us now prove that $H$ has an infinite number of negative eigenvalues
which forms an increasing sequence converging to zero.
First, $H$ is bounded below since for all $v \in D(H)$ such that
$\|v\|_{L^2} =1$,
\begin{eqnarray*}
\langle v |H|v \rangle & = & \int_{\RR^3} A\nabla v \cdot \nabla v +
\int_{\RR^3} {\cal W} v^2 \\
& \ge & \lambda \|\nabla v\|_{L^2}^2 - \|{\cal W}_2\|_{L^2} \|\nabla
v\|_{L^2}^{\frac 32} - \epsilon  \\
& \ge & - \frac{27}{256} \lambda^{-3} \|{\cal W}_2\|^4 - \epsilon. 
\end{eqnarray*}
In order to prove that $H$ has at least $N$ negative eigenvalues,
including multiplicities, we can proceed as in the proof of
\cite[Lemma~II.1]{Lions}. Let us indeed consider $N$ radial functions
$\phi_1$, ..., $\phi_N$
in ${\cal D}(\RR^3)$ such that for all $1 \le i \le
N$, $\mbox{supp}(\phi_i) \in B_{i+1} \setminus \overline{B}_i$ and
$\int_{\RR^3} |\phi_i|^2 = 1$. Denoting by $\phi_{i,\sigma}(\cdot) =
\sigma^{\frac 32} \phi_i(\sigma \cdot)$, we have
$$
\int_{\RR^3} A \nabla \phi_{i,\sigma} \cdot \nabla \phi_{i,\sigma}
\le \sigma^2 \Lambda \|\nabla \phi_i\|_{L^2}^2,
$$
and
\begin{eqnarray*}
\int_{\RR^3} W |\phi_{i,\sigma}|^2 & \le & 
\left( \int_{B_{(i+1)\sigma^{-1}} \setminus \overline{B}_{i\sigma^{-1}}}
W_2^2 \right)^{\frac 12} \|\phi_{i,\sigma} \|_{L^4}^2 
+ \left( \int_{B_{(i+1)\sigma^{-1}} \setminus \overline{B}_{i\sigma^{-1}}}
W_3^3 \right)^{\frac 12} \|\phi_{i,\sigma} \|_{L^3}^2 \\
& = &  \sigma^{\frac 32}
\left( \int_{B_{(i+1)\sigma^{-1}} \setminus \overline{B}_{i\sigma^{-1}}}
W_2^2 \right)^{\frac 12} \|\phi_{i} \|_{L^4}^2 
+ \sigma \left( \int_{B_{(i+1)\sigma^{-1}} \setminus \overline{B}_{i\sigma^{-1}}}
W_3^3 \right)^{\frac 12} \|\phi_{i} \|_{L^3}^2 \\
& = & o(\sigma)
\end{eqnarray*}
where we have split $W_+=\max(0,W)$ as $W_+=W_2+W_3$ with $W_2 \in
L^2(\RR^3)$ and $W_3 \in L^3(\RR^3)$.
Besides, we deduce from Gauss theorem that
{ $$
\int_{\RR^3} (\mu \star |\br|^{-1}) \phi_{i,\sigma}^2 =
\int_{\RR^3} \int_{\RR^3}
\frac{|\phi_{i,\sigma}(\br)|^2}{\max(|\br|,|\br'|)} \, d\br \, d\mu(\br') 
\le \sigma \, \mu(\RR^3) \, \int_{\RR^3} \frac{|\phi(\br)|^2}{|\br|} \,
d\br, 
$$}
and that, for $\sigma$ small enough,
$$
\int_{\RR^3} V |\phi_{i,\sigma}|^2 = \sigma \, Z \, \int_{\RR^3}
\frac{|\phi(\br)|^2}{|\br|} \, d\br. 
$$
Thus, 
\begin{eqnarray*}
\langle \phi_{i,\sigma}|H|\phi_{i,\sigma}\rangle & \le  & \sigma 
\left( \mu(\RR^3)-Z \right) \, \int_{\RR^3} \frac{|\phi(\br)|^2}{|\br|} \,
d\br + \mathop{o}_{\sigma \to 0} (\sigma),
\end{eqnarray*}
yielding $\langle \phi_{i,\sigma}|H|\phi_{i,\sigma}\rangle < 0$ for
$\sigma > 0$ small enough. As $\phi_{i,\sigma}$ and $\phi_{j,\sigma}$
have disjoint supports when $i \neq j$, we also have
$$
\max_{\phi \in \mbox{span}(\phi_{1,\sigma},\cdots,\phi_{N,\sigma}), \,
  \|\phi\|_{L^2} = 1} \langle \phi|H|\phi\rangle < 0
$$
for $\sigma > 0$ small enough. It follows from Courant-Fischer formula
\cite{RS4}
and from the fact that $\sigma_{\rm ess}(H) \subset \RR_+$ that $H$ has
at least $N$ negative eigenvalues, including multiplicites.

\medskip

The lowest eigenvalue of $H$, which we denote by $\mu_1$, is
characterized by
\begin{equation} \label{eq:charac_u1}
\mu_1 = \inf \left\{ \int_{\RR^3} A\nabla u \cdot \nabla u +
  \int_{\RR^3} {\cal W} |u|^2, \quad u \in H^1(\RR^3), \quad
  \|u\|_{L^2}=1 \right\},
\end{equation}
and the minimizers of (\ref{eq:charac_u1}) are exactly the set of the
normalized eigenvectors of $H$ associated with $\mu_1$. Let $u_1$ be a
minimizer (\ref{eq:charac_u1}). As for all $u \in H^1(\RR^3)$, $|u| \in
H^1(\RR^3)$ and $\nabla |u| = \mbox{sgn}(u) \nabla u$ a.e. on $\RR^3$,
$|u_1|$ also is a minimizer to (\ref{eq:charac_u1}). Up to replacing
$u_1$ with $|u_1|$, there is therefore no restriction in assuming that
$u_1 \ge 0$ on $\RR^3$. We thus have
$$
u_1 \in H^1(\RR^3) \cap C^0(\RR^3), \quad u_1 \ge 0 \quad \mbox{and}
\quad -\div(A\nabla u_1) + g u_1 = 0
$$
with $g = {\cal W}-\mu_1 \in L^p_{\rm loc}(\RR^3)$ for some $p > \frac
32$ (take $p=2$). A Harnack-type inequality due to
Stampacchia~\cite{Stampacchia} then implies that if $u_1$ has a zero in
$\RR^3$, then $u_1$ is identically zero. As $\|u_1\|_{L^2} =1$, we
therefore have $u_1 > 0$ on $\RR^3$.

\medskip

Consider now $w \in D(H) \setminus \left\{0\right\}$ such that $Hw=\mu
w$ and $w \ge 0$ on $\RR^3$. It holds
$$
\mu \int_{\RR^3} w_1 w = \langle w|H|w_1 \rangle = \mu_1 \int_{\RR^3} w_1 w.
$$
As $w$ is not identically equal to zero and as $w_1 > 0$ on $\RR^3$,
$\int_{\RR^3} w_1w > 0$, from which we deduce that $\mu=\mu_1$. It
remains to prove that $\mu_1$ is a non-degenerate eigenvalue. By
contradiction, let us assume that there exists $v \in D(H)$ such that 
$H v = \mu_1 v$, $\|v\|_{L^2} =1$ and $(v,u_1)_{L^2} = 0$. Reasoning as
above, $|v|$ also is an eigenvector of $H$ associated with $\mu_1$ and
$|v| > 0$ on $\RR^3$. Since $D(H) \subset C^0(\RR^3)$, $v$ is continuous
on $\RR^3$, so that either $v=|v|$ on $\RR^3$ or $v = -|v|$ on
$\RR^3$. In any case, $\left| \int_{\RR^3} u_1v \right| =
\int_{\RR^3} u_1|v| > 0$, which is in contradiction with the fact that
$(u_1,v)_{L^2} = 0$. The proof is complete.
\end{proof}

\medskip

\begin{proof}[Proof of Lemma~\ref{lem:exp_decay}]  
Consider $R > 0$ large enough to ensure 
$$ 
\frac{\theta}{2} \leq {\cal V}(\br) + \theta \leq \frac{3\theta}{2} \quad
\mbox{a.e. on } B_R^c := \RR^3 \setminus \overline B_R.
$$
It is straightforward to see that $u$ is the unique solution in
$H^1(B_R^c)$ to the elliptic boundary problem 
\begin{eqnarray*}
\left \{
\begin{aligned}
 -\mathrm{div}(A \nabla v) + {\cal V} v + \theta v &= 0 \quad \mbox{in }
 \; B_R^c \\ 
v &= u \quad \mbox{on } \partial B_R.
\end{aligned}
\right.
\end{eqnarray*}
Let $\gamma > 0$, $\tilde{u} = u \exp^{-\gamma(|\cdot|-R)}$  and $w = u
- \tilde{u}$. The function $w$ is in $H^1(\mathbb{R}^3)$ and is the
unique solution in $H^1(B_R^c)$ to 
\begin{eqnarray} \label{exp}
\left \{
\begin{aligned}
-\mathrm{div}(A \nabla w) + {\cal V} w + \theta w &= \mathrm{div}(A \nabla \tilde{u}) - {\cal V} \tilde{u} - \theta \tilde{u} \quad \mathrm{in} \; B_R^c \\
w &= 0 \quad \mbox{on } \partial B_R.
\end{aligned}
\right.
\end{eqnarray}
Let us now introduce the weighted Sobolev space $W_{0}^{\gamma}(B_R^c)$
defined by 
$$
W_{0}^{\gamma}(B_R^c) = \left\{ v \in H^{1}_{0}(B_R^c) \; | \; e^{\gamma
    |\cdot|} v \in H^1(B_R^c) \right\}
$$
endowed with the inner product 
$$  
(v,w)_{W_{0}^{\gamma}(B_R^c)} = \int_{B_R^c} 
e^{\gamma |\br|} (v(\br) w(\br)+\nabla v(\br) \cdot \nabla w(\br)) \, d\br.
$$
Multiplying (\ref{exp}) by $\phi e^{2\gamma |\cdot|}$ with $\phi \in
\mathcal{D}(B_R^c)$ and integrating by parts, we obtain 
$$ 
\int_{B_R^c} A \nabla w \cdot \nabla (\phi e^{2\gamma |\br|}) +
\int_{B_R^c} ({\cal V}+ \theta)w \phi e^{2\gamma |\br|} = - \int_{B_R^c} A
\nabla \tilde{u} \cdot \nabla (\phi e^{2\gamma |\br|}) - \int_{B_R^c}
({\cal V}+ \theta)\tilde{u} \phi e^{2\gamma |\br|} 
$$
and then
\begin{eqnarray} \label{var}
\begin{aligned}
& \int_{B_R^c} A e^{\gamma |\br|} \nabla w \cdot e^{\gamma |\br|} \nabla \phi +  2 \gamma \int_{B_R^c}A e^{\gamma |\br|} \nabla w \cdot \frac{\br}{|\br|} e^{\gamma |\br|} \phi +  \int_{B_R^c} ({\cal V}+ \theta) e^{\gamma |\br|} w e^{\gamma |\br|} \phi \\
 = & -\int_{B_R^c} A e^{\gamma |\br|} \nabla \tilde{u} \cdot e^{\gamma |\br|} \nabla \phi - 2 \gamma \int_{B_R^c}A e^{\gamma |\br|} \nabla \tilde{u} \cdot \frac{\br}{|\br|} e^{\gamma |\br|} \phi- \int_{B_R^c} ({\cal V}+ \theta) e^{\gamma |\br|} \tilde{u} e^{\gamma |\br|} \phi .
\end{aligned}
\end{eqnarray}
Due to the definitions of $W_{0}^{\gamma}(B_R^c)$ and $\tilde{u}$,
(\ref{var}) actually holds for $(w, \phi) \in W^{\gamma}_{0}(B_R^c) \times
W^{\gamma}_{0}(B_R^c)$, and it is straightforward to see that (\ref{var})
is a variational formulation equivalent to (\ref{exp}).\\ 
It is also easy to check that the right-hand-side in (\ref{var}) is a
continuous form on $W^{\gamma}_{0}(B_R^c)$, so that we only have to prove the
coercivity of the bilinear form in the left-hand-side of (\ref{var}) to be
able to apply Lax-Milgram lemma. We have for $v \in   W^{\gamma}_{0}(B_R^c)$
\begin{eqnarray*}
\begin{aligned}
& \int_{B_R^c} A e^{\gamma |\br|} \nabla v \cdot e^{\gamma |\br|} \nabla v + 2 \gamma \int_{B_R^c}A e^{\gamma |\br|} \nabla v \cdot \frac{\br}{|\br|} e^{\gamma |\br|} v +  \int_{B_R^c} ({\cal V}+ \theta) e^{\gamma |\br|} v e^{\gamma |\br|} v \\
& \geq \lambda \int_{B_R^c} \left|e^{\gamma |\br|} \nabla v \right|² - 2 \Lambda \gamma \int_{B_R^c} \left|e^{\gamma |\br|} \nabla v \right| \left|e^{\gamma |\br|} v \right| + \frac{\theta}{2} \int_{B_R^c} \left|e^{\gamma |\br|} v \right|²\\
& \geq \lambda \left\|e^{\gamma |\br|} \nabla v \right\|^2_{L²(B_R^c)} - 2 \Lambda  \gamma \left\|e^{\gamma |\br|} \nabla v \right\|_{L²(B_R^c)} \left\|e^{\gamma |\br|} v \right\|_{L²(B_R^c)} + \frac{\theta}{2} \left\|e^{\gamma |\br|} v \right\|^2_{L²(B_R^c)} \\
& \geq (\lambda - \Lambda \gamma) \left\|e^{\gamma |\br|} \nabla v
\right\|^2_{L²(B_R^c)} +  (\frac{\theta}{2} - \Lambda \gamma)
\left\|e^{\gamma |\br|} v \right\|^2_{L²(B_R^c)} .
\end{aligned}
\end{eqnarray*}
Thus the bilinear form is clearly coercive if $\gamma <
\min(\frac{\lambda}{\Lambda}, \frac{\theta}{2\Lambda})$, and there is a
unique $w$ solution of (\ref{exp}) in $W_{0}^{\gamma}(B_R^c)$ for such a
$\gamma$. Now since $u = w + \tilde{u}$, it is clear that $e^{\gamma
  |\cdot|} u \in H¹(B_R^c)$, and then $e^{\gamma |\cdot|} u \in
H¹(\mathbb{R}^3)$.
\end{proof}

\section*{Acknowledgements.} The authors are grateful to C. Le Bris and
M. Lewin for helpful discussions. This work was completed while E.C. was
visiting the Applied Mathematics Division at Brown University.

\end{document}